\def\BibTeX{{\rm B\kern-.05em{\sc i\kern-.025em b}\kern-.08em
    T\kern-.1667em\lower.7ex\hbox{E}\kern-.125emX}}
\newcommand{\ray}{3.2cm}
\newcommand{\rnode}[4]{\node[shape=circle,minimum size=0.1,draw=black,fill=#1, label={[label distance=0.1cm] #2:#3}] () at (#4)  {};}
\newcommand{\vertx}[4]{\node[shape=star,minimum size=0.1,draw=black,fill=#1, label={[label distance=0.1cm] #2:#3}] () at (#4)  {};}
\newcommand\INITpoints{0/1/1, 47/0.3/0, 98/1/1, 123/1/1, 182/0.4/0, 220/1/1, 300/0.2/0, 340/0.8/0}
\newcommand\CIRCLEpoints{0/1/1, 47/0.85/0, 98/1/1, 123/1/1, 182/0.75/0, 220/1/1, 300/0.56/0, 340/0.8/0}
\newcommand{\collead}{lightgray}
\newcommand{\colblock}{black}
\newcommand{\coldefault}{black}
\newcommand{\colneutral}{lime}
\newcommand{\colangle}{green}
\newcommand{\colanglem}{green!30}
\newcommand{\colmedian}{magenta}
\newcommand{\colvertx}{pink}
\newcommand{\colbeacon}{orange}
\newcommand{\colregular}{red}
\newcommand{\coleast}{yellow}
\newcommand{\coleastdiameter}{yellow!50}
\newcommand{\colwest}{blue}
\newcommand{\colwestdiameter}{blue!50}
\newcommand{\colblockL}{cyan}
\newcommand{\colblockR}{lime}
\newcommand{\colinchord}{gray!30}
\newcommand{\coloutchord}{gray}
\newcommand\colpadding{teal}
\title{Optimal Uniform Circle Formation by Asynchronous Luminous Robots}
\newcommand{\keywordstring}{Uniform Circle Formation, Robots with Lights, Autonomous Robots, Rank Encoding, Time and Color Complexities, Computational SEC}
\newcommand{\etal}[1]{#1 \textit{et al.}} 
\newcommand{\colr}[1]{\texttt{#1}} 	
\newcommand{\conf}[1]{$C\kern-0.1em on\!f_{#1}$}
\newcommand{\config}{C\kern-0.1em on\!f}
\newcommand{\ucf}{{\sc Uniform Circle Formation}}
\newcommand{\UCF}{{\sc UCF}}
\newcommand{\fsynch}{$\mathcal {FSYNC}$}
\newcommand{\ssynch}{$\mathcal {SSYNC}$}
\newcommand{\asynch}{$\mathcal {ASYNC}$}
\newcommand{\proc}[1]{Procedure {\sc #1}}
\newcommand{\bdcp}{{\sc Beacon Directed Curve Positioning}}
\newcommand{\BDCP}{{\sc BDCP}}
\newcommand{\cR}{\mathcal{R}}
\newcommand{\SC}{\mathcal{C}} 
\newcommand{\chord}{\mathcal{L}} 
\newcommand{\diam}{\rho} 
\newcommand{\barc}{\Gamma} 
\DeclarePairedDelimiter{\floor}{\lfloor}{\rfloor}
\newcommand{\reals}{\mathbb{R}^2}
\newcommand{\enclosed}{\sqsubseteq}
\newcommand{\mycircled}[1]{\textcircled{\raisebox{-0.9pt}{\bf #1}}}
\author{Caterina Feletti\footnote{Department of Computer Science, Università degli Studi di Milano, Italy,
{\it caterina.feletti@unimi.it}} \and Debasish Pattanayak\footnote{School of Computer Science, Carleton University, Canada,
{\it drdebmath@gmail.com}} \and Gokarna Sharma\footnote{Department of Computer Science, Kent State University, USA,
{\it gsharma2@kent.edu}}}
\newtheorem{theorem}{Theorem}
\newtheorem{lemma}{Lemma}
\newtheorem{corollary}{Corollary}
\newtheorem{definition}{Definition}
\newtheorem{observation}{Observation}
\begin{document}
\date{}
\maketitle

\begin{abstract}
    We study the {\sc Uniform Circle Formation} ({\sc UCF}) problem for a swarm of $n$ autonomous mobile robots operating in {\em Look-Compute-Move} (LCM) cycles on the Euclidean plane. 
    We assume our robots are \emph{luminous}, i.e. embedded with a persistent light that can assume a color chosen from a fixed palette, and \emph{opaque}, i.e. not able to see beyond a collinear robot.
    Robots are said to {\em collide} if they share positions or their paths intersect within concurrent LCM cycles.
    To solve {\UCF}, a swarm of $n$ robots must autonomously arrange themselves so that each robot occupies a vertex of the same regular $n$-gon not fixed in advance. 
    In terms of efficiency, the goal is to design an algorithm that optimizes (or provides a tradeoff between) two fundamental performance metrics: \emph{(i)} the execution time and \emph{(ii)} the size of the color palette.
    There exists an $O(1)$-time $O(1)$-color algorithm for this problem under the fully synchronous and semi-synchronous schedulers and a $O(\log\log n)$-time $O(1)$-color or $O(1)$-time $O(\sqrt{n})$-color algorithm under the asynchronous scheduler, avoiding collisions.   
    In this paper, we develop a deterministic algorithm solving {\sc UCF} avoiding collisions in $O(1)$-time with $O(1)$ colors under the asynchronous scheduler, which is asymptotically optimal with respect to both time and number of colors used, the first such result.  
    Furthermore, the algorithm proposed here minimizes for the first time what we call the \emph{computational SEC}, i.e. the smallest circular area where robots operate throughout the whole algorithm.
    \end{abstract}

\textbf{Keywords: }{\keywordstring}

\section{Introduction}
\subsection{Background and Motivation}
In the field of robotics, a new approach has been established over the past few years: instead of developing \emph{ad hoc} powerful and monolithic robots for specific tasks, \emph{swarm robotics} involves solving different problems by swarms of small, even tiny, simple and autonomous agents, which can be programmed to cooperate.
Besides the practical and engineering factors that robotics must consider (hardware, control, data processing, etc.), other theoretical aspects must be investigated: distributed computability, design of (self-stabilizing) algorithms, correctness and complexity analysis.
According to this, research has proposed various theoretical models to formalize such distributed systems and to develop distributed algorithms on them \cite{Prencipe2013}.

In the classical macro-model \cite{Flocchini2012}, robots are assumed to be {\em autonomous} (no external control), {\em anonymous} (no unique identifiers), {\em indistinguishable} (no external identifiers), and \emph{homogeneous} (execute the same algorithm).
Robots are modeled as pointlike agents that act on the plane through a sequence of \emph{Look-Compute-Move} (LCM) cycles: when a robot is activated, it first obtains a snapshot of its surroundings ({\em Look}), then computes a destination based on the snapshot ({\em Compute}), and finally moves to the destination ({\em Move}).
Most of the literature considers {\em disoriented} robots: each robot has its local coordinate system without any assumption of global orientation.
The sensory capability of a robot, generally called {\em vision}, allows a robot to determine the positions of other robots in its own local coordinate system.
Robots are traditionally assumed to be {\em transparent} (so that each robot has complete visibility of the swarm), {\em oblivious} (they cannot store any information about past LCM cycles), and {\em silent} (they do not have any direct communication means) \cite{Flocchini2012}.

In this paper, we consider a more recent model, the {\em robots with lights} (aka {\em luminous}) model \cite{Das2012,Luna2014,Flocchini2012,Peleg2005}, where robots are embedded with a light whose color can be chosen from a fixed palette and persists cycle by cycle until its next update.
Since such a light is visible to both the robot itself and the other robots, the luminous model grants robots both a persistent internal state (memory) and a direct communication means.
Additionally, we assume our robots are \emph{opaque}, thus they experience {\em obstructed visibility} in case of collinearities (if robots $a,b,c$ are collinear, $a$ and $c$ cannot see each other).
Note that, since by default the size of the swarm $n$ is not known in advance by the robots, they may not realize if they are experiencing complete or obstructed visibility in the {\em Look} phase.

We say that two robots {\em collide} if either \emph{(i)} they share the same position at a given time or \emph{(ii)} their paths towards their destinations intersect within concurrent LCM cycles.
We assume robot movements are {\em rigid}, i.e., in each \emph{Move}, the robot stops only after reaching its destination.    

In this paper, we study the fundamental {\ucf} (\UCF) problem for \emph{luminous-opaque robots}.
Starting from an arbitrary configuration where $n$ robots occupy distinct points on a plane, the problem aims at relocating them autonomously on the vertices of a regular $n$-gon not fixed in advance. 
Namely, {\UCF} is solved if the $n$ robots are equally distributed on the same circle.
Thanks to the well-known properties of such a geometric configuration, {\UCF} is deemed an important special case of the {\sc Geometric Pattern Formation} problem \cite{ColemanKMOUV20,Flocchini2012,Flocchini19,SuzukiY99,VaidyanathanST22}.
A {\em geometric pattern} (or, simply, \emph{pattern}) is defined as a set of points on a plane.
The {\sc Geometric Pattern Formation} problem for a pattern $P$ is said to be solved by a swam of robots if they form a stable configuration (i.e. where robots no longer move) where the set of their positions is similar to $P$ up to scaling, rotation, translation, and reflection.

Related to {\UCF}, it is noteworthy to mention its precursor, the {\sc Circle Formation} problem \cite{Bhagat2018,Datta2013}: robots are required to relocate on the boundaries of the same circle, without any other requirement. 
The uniform version of this problem, namely {\UCF}, has been investigated by multiple works \cite{ChatzigiannakisMN04,DefagoK02,DefagoS08,Dieudonne2006,DieudonneP07,DieudonneP08,FlocchiniPSV17,Katreniak05,MaminoV16,MondalC18,MondalC20} in the classic \emph{oblivious-transparent} robot model (see \cite{Viglietta19} for an excellent survey). 
Many of the cited works focused mostly on the solvability of the problem in finite time, considering different model assumptions (e.g. specific initial configurations \cite{DieudonneP07,Katreniak05}, only 3 or 4 robots \cite{DieudonneP08, MaminoV16}), without examining the time complexity of the proposed algorithm \cite{FlocchiniPSV17,Viglietta19}. 

Time complexity is measured based on the underlying scheduler under which robots operate. 
Generally, literature considers three main schedulers. 
Under the {\em fully synchronous} scheduler ({\fsynch}), the robots perform their LCM cycles in perfect synchrony and hence time is measured in rounds of LCM cycles. 
Under the {\em semi-synchronous} ({\ssynch}) and {\em asynchronous} ({\asynch}) schedulers, a robot may stay inactive for a finite but indeterminate time and hence time is measured using the idea of epoch. 
An {\em epoch} is the smallest time interval within which each robot performs its LCM cycle at least once \cite{Cord-Landwehr11}. 
We will use the term “time” generically to mean rounds for the {\fsynch} and epochs for the {\ssynch} and {\asynch} schedulers.

\etal{Feletti} \cite{feletti2018,FelettiOPODIS23,feletti2023journal,feletti2023} are the first to consider {\UCF} in the \emph{luminous-opaque} robot model. 
They presented a deterministic $O(1)$-time $O(1)$-color solution under the {\fsynch} and {\ssynch} schedulers, and a deterministic $O(\log n)$-time $O(1)$-color solution under the {\asynch} scheduler, avoiding collisions.  
Recently, Pattanayak and Sharma \cite{PattanayakS2024} developed a framework that provides a deterministic $O(x)$-time solution using $O(n^{1/2^x})$ colors under the asynchronous scheduler.  
Setting $x$ to some constant, it gives a $O(1)$-time solution (i.e. asymptotically optimal) with $O(\sqrt{n})$ colors. 
Setting $x=O(\log\log n)$, it gives a $O(\log\log n)$-time solution which uses $O(1)$ colors.

\begin{table}[!t]
    \centering
\resizebox{\textwidth}{!}{
    \def\arraystretch{1.2}
    \begin{tabular}{ccccc}
    \toprule
    {\bf Algorithm} & {\bf Time (in epochs)} & {\bf Number of Colors} & {\bf Computational SEC} & {\bf Scheduler} \\
    \toprule\hline
      \cite{feletti2018} & $O(1)$  & $O(1)$ & Not minimized     & \fsynch  \\ 
    \hline\hline
    \cite{feletti2023} & $O(1)$  & $O(1)$  & Not minimized   & \ssynch  \\ 
    \hline\hline
    \cite{feletti2023} & $O(n)$  & $O(1)$  & Not minimized    &  \multirow{6}{*}{\asynch} \\
    \cline{1-4}
    \cite{FelettiOPODIS23} & $O(\log n)$  & $O(1)$   & Not minimized   &  \\ 
    \cline{1-4}
    {\bf Generic} \cite{PattanayakS2024} & $O(x)$ & $O\left(n^{1/2^x}\right)$ & Not minimized & \\ 
    \cline{1-4}
    {\bf OptTime} \cite{PattanayakS2024} & $O(1)$ & $O(\sqrt{n})$ & Not minimized &  \\
    \cline{1-4}
    {\bf OptColor} \cite{PattanayakS2024} & $O(\log\log n)$ & $O(1)$ & Not minimized &  \\
    \cline{1-4}
    {\bf OptTime\&Color} (this paper) & $O(1)$ & $O(1)$ &  Minimized & \\
    \bottomrule
    \end{tabular}
}
\caption{Existing {\UCF} deterministic solutions for $n\geq 1$ luminous-opaque robots on the plane, avoiding collisions.
    $x\in [1,O(\log\log n)]$.} 
    \label{table:summary}
\end{table}

\subsection{Our Result}
In this paper, we propose a deterministic algorithm for solving {\UCF} in the luminous-opaque model under {\asynch}, avoiding collisions. 
Our algorithm runs in $O(1)$ time with $O(1)$ colors. 
Compared to the state-of-the-art result of Pattanayak and Sharma \cite{PattanayakS2024} which is either optimal in time with $O(\sqrt{n})$ colors or optimal in colors with $O(\log\log n)$ time, our algorithm is simultaneously optimal on both time and number of colors. 
Another important aspect of our proposed algorithm is that, in addition to optimizing both runtime and color complexity, it minimizes a spatial metric that we call  \emph{computational SEC}, i.e. the smallest circle containing all the points the robots touch during the execution of the algorithm.
This objective is driven by a recent research direction on robot computing which aims to solve \emph{space-constrained} versions of classic distributed problems.
Given a problem $P$ for a swarm of robots, a \emph{space-constrained version of} $P$ requires optimizing some spatial metrics (e.g. the maximum distance traveled by a robot \cite{Bhagat2018, ColemanKMOUV20}) while solving $P$. 
Although, in most cases of our algorithm, the computational SEC minimization turns out to be a byproduct of the technique used to optimize time and color complexity, a particular configuration needed an \textit{ad hoc} technique to be designed to minimize the computational SEC.
Note that the existing {\UCF} solutions do not minimize the computational SEC. 
\Cref{table:summary} summarizes our result by comparing it with the existing {\UCF} solutions.

\subsection{Challenges}
Performing in {\asynch}, our algorithm had to face the intrinsic issues that asynchrony entails.
Such issues derive from the discrepancy in the duration of the LCM cycles each robot follows. 
This discrepancy can affect the correctness of algorithmic strategies (e.g. a robot may work on an outdated snapshot of the system it takes on its {\em Look} portion of the cycle), the integrity of the system (e.g. two robots may collide if their motions are not synchronized), and the algorithmic runtime (e.g. robots may be forced to perform sequentially to avoid erroneous computations or collisions).
Moreover, \emph{opaqueness} (and so \emph{obstructed visibility}) posed an additional challenge under {\asynch}, since moving robots can hide or be hidden by other robots.

However, moving in parallel is essential in reaching a $O(1)$-time algorithm.
The main challenge of this work was to make robots exploit parallelism even in conditions of asynchrony and obstructed visibility, always keeping the size of the color palette constant.
Indeed, this paper shows that robots can move in parallel to solve {\UCF} without colliding, using $O(1)$ colors.
As in \cite{feletti2018,FelettiOPODIS23,feletti2023journal,feletti2023,PattanayakS2024}, we factorize {\UCF} into three sub-problems (see \Cref{tab:steps_ucf}): \emph{(i)} {\sc Complete Visibility}, \emph{(ii)} {\sc Circle Formation}, and lastly \emph{(iii)} {\sc Uniform Transformation}.
Given an arbitrary initial configuration of robots on distinct points on the plane, we \emph{(i)} exploit the deterministic {\sc Complete Visibility} result of \etal{Sharma} \cite{SharmaVT17} under {\asynch} to arrange robots on the vertices of a convex polygon in $O(1)$ time using $O(1)$ colors, avoiding collisions. 
After that, \emph{(ii)} a simple procedure will safely move the robots to the perimeter of a circle, say $Cir$, in $O(1)$ epochs using $O(1)$ colors.
The remaining step \emph{(iii)} aims to equally distribute the robots on the perimeter of $Cir$, thus solving {\UCF}.
This last step represents the real challenge for the time-color optimization of {\UCF}.

\begin{table}[!h]
\begin{center}
\begin{tabular}{c|c|c|c}

	\begin{tikzpicture}[scale=0.4, transform shape, font = {\LARGE}]
		\def\r{\ray*0.9}
		\draw [very thin, white] (0,0) circle (\r);
		\foreach \i/\x in \INITpoints 
			{\rnode{\colneutral}{left}{}{\i:\x*\r}; 
			}
	\end{tikzpicture}
	&
	\begin{tikzpicture}[scale=0.4, transform shape, font = {\LARGE}]
		\def\r{\ray*0.9}
		\def\n{7}
		\draw [very thin] (0,0) circle (\r);

            \draw[dashed] (0:\r) -- (47:0.85*\r);
		\draw[dashed] (47:0.85*\r) -- (98:\r);
		\draw[dashed] (98:\r) -- (123:\r);
		\draw[dashed] (123:\r) -- (182:0.75*\r);
		\draw[dashed] (220:\r) -- (182:0.75*\r);
		\draw[dashed] (220:\r) -- (300:0.56*\r);
		\draw[dashed] (300:0.56*\r) -- (340:0.8*\r);
		\draw[dashed] (340:0.8*\r) -- (0:\r);
	
		\foreach \a/\p/\s in \CIRCLEpoints{
			\ifthenelse{\s = 0}
				{\draw[dotted, ->] (\a:\p*\r) -- (\a:\r)}
				{};
			\rnode{\colneutral}{left}{}{\a:\p*\r}; 
		}

	\end{tikzpicture}
	&
	\begin{tikzpicture}[scale=0.4, transform shape, font = {\LARGE}]
		\def\r{\ray*0.9}
		\draw [very thin] (0,0) circle (\r);
		\foreach \a/\p/\s in \CIRCLEpoints{
			\rnode{\colneutral}{left}{}{\a:\r}; 
		}
	\end{tikzpicture}
	&
	\begin{tikzpicture}[scale=0.4, transform shape, font = {\LARGE}]
		\def\r{\ray*0.9}
		\def\n{7}
		\def\a{360/\n}
		\draw [very thin] (0,0) circle (\r);
	
		\foreach \i in {1,...,\n}
			{\draw[dashed] (\i*\a:\r) -- ({(\i+1)*\a}:\r);
		\rnode{\colneutral}{left}{}{\i*\a:\r}; 
			}
		
	\end{tikzpicture}
	\\ 
	\footnotesize{Initial Configuration}&\footnotesize{\sc Complete Visibility}&\footnotesize{\sc Circle Formation}&
	\footnotesize{\sc Uniform Transformation}
\end{tabular}
\end{center}
\caption{Sub-problems composing {\ucf}.}
\label{tab:steps_ucf}
\end{table}

Up to now, step \emph{(iii)} was solved in a binary tournament fashion in \etal{Feletti} \cite{FelettiOPODIS23} achieving an $O(\log n)$-time solution using $O(1)$ colors under {\asynch}. 
Afterward, the step was expedited by Pattanayak and Sharma \cite{PattanayakS2024} by dividing the robots in $\sqrt{x}$ groups in each step for $x$ robots in a group achieving an $O(\log\log n)$-time solution using $O(1)$ colors and $O(1)$-time solution using $O(\sqrt{n})$ colors.   
In this paper, we optimize step \emph{(iii)} by developing a $O(1)$-time and $O(1)$-color algorithm, forgoing the use of both binary and square root tournament.

The other main challenge of this work has been minimizing the computational SEC, i.e. the circular area in which robots operate.
Note that, to minimize the computational SEC, robots have to always stay within the smallest enclosing circle of the initial configuration.
The solutions used to solve \emph{(i)} {\sc Complete Visibility} \cite{SharmaVT17} and \emph{(ii)} {\sc Circle Formation} guarantee to minimize the computational SEC.
Thus, to minimize the computational SEC throughout the whole {\UCF} solution, our solution for step \emph{(iii)} must guarantee robots operate within the initial circle.
Note that the $O(\log n)$-time {\sc Uniform Transformation} solution provided in \cite{FelettiOPODIS23} makes robots move out from the initial circle $Cir$, before having them positioned uniformly on it.
Indeed, requiring robots to move internally to $Cir$ complicated the investigation for a $O(1)$-time solution, since internal moving robots can obstruct the visibility of the robots on the perimeter of $Cir$. 
However, our {\sc Uniform Transformation} solution proposed in this paper guarantees robots always stay within $Cir$ (perimeter included), thus minimizing its computational SEC.

\subsection{Techniques} 
Let us briefly introduce our algorithmic strategies.
Our algorithm for step \emph{(iii)} receives in input a circle configuration, where all $n$ robots lay on distinct points of a circle, $Cir$. 
The target $n$-gon will be inscribed in $Cir$.
Identifying the type of symmetry of the configuration turns out to be necessary to address the input robot disposition.
In particular, we classify the configuration into three categories: regular, biangular, or periodic. 
The regular configuration immediately solves {\UCF}.  
A biangular configuration can be converted into a regular configuration through a similar approach to the strategy introduced in \cite{DieudonneP07} in order to minimize the computational SEC.
The most challenging is the periodic configuration, for which we developed a sequence of multi-step procedures to form a regular polygon by the robots.
At first, we divide $Cir$ into $k\geq 2$ uniform sectors (depending on periodicities) with each arc containing $n/k$ vertices of the target $n$-gon to be formed. 
The algorithm then works in parallel in each of the $k$ uniform sectors. 
Within each sector, two robots are elected as \emph{guards} to fix the chirality of a structure called \emph{odd-block}.
Let $\chord$ be the chord joining the left guard with the right guard.
The other robots on the sector arc are now moved to the chord $\chord$. 
One robot is elected as the \emph{median} robot and reaches the midpoint of the arc cut by $\chord$.
A (inner) circle $\SC$ is now drawn such that it passes through the median robot and such that the chord becomes its tangent. 
All the robots on the chord now move to be positioned on the perimeter of $\SC$. 

We now use a procedure to provide a rank to the robots on $\SC$ so that a robot with rank $j$ moves to the target $n$-gon vertex $j$.
Indeed, \emph{rank encoding} is an important task involved in {\UCF} and it belongs to that part of literature that attempted to provide efficient strategies to encode \emph{non-constant} data (e.g. swarm cardinality, robot ranks, configuration properties, etc.) through the swarm configuration, in a sort of \emph{stigmergic behavior} \cite{BoseAKS19,BramasLT23,FelettiOPODIS23,LunaFSVY20a}.
In \cite{BramasLT23}, \etal{Bramas} develop the \emph{level-slicing technique} for solving the fault-tolerant {\sc Gathering} problem using the mutual distance of another robot. 
In particular, robots operate according to their \emph{level} previously encoded in their distance from another robot.
In \cite{FelettiOPODIS23}, authors use the \emph{pairing technique} to solve {\UCF} in that pairs of robots exploit their mutual distance to encode their rank, guaranteeing that half of the robots can reach the target position in a binary tournament fashion.
To achieve a parallel fashion, one can suggest making robots encode their rank through their mutual distance from a selected robot.
However, this general strategy is not sufficient to guarantee robots can reach such a mutual distance in $O(1)$ time without colliding or creating dangerous collinearities (in fact, robots need to count themselves to compute their rank).

To cope with these complications, we split the inner circle $\SC$ into slice units (equal to the minimum angular distance between any two robots) and make each robot encode its rank by properly positioning \emph{within} its slice, thus ensuring complete visibility during collision-free motions in a parallel fashion.
Besides the closeness with the level-slicing technique \cite{BramasLT23}, our rank encoding presents some peculiar differences: robots operate on a circle instead of a line, the slices are fixed for the whole procedure, and they have the same length instead of exponential lengths. 
Such differences have been necessary to cope with opaque and collision-intolerant robots and provide a robust rank encoding that retains effectiveness in complete asynchrony so that a robot must be able to recompute its rank $j$ even when all the other robots are moving.
Specifically, let $\diam$ be the diameter of $\SC$ passing through the median robot.
The robots on the inner circle are equally partitioned into two groups so that one group is positioned on the left half-perimeter of $\SC$ and the other group on its right half-perimeter. 
The robots in the left group then move along the perimeter to encode their rank using the angular distance with a fixed robot.
Then, the right group can obtain its rank using the left group as a reference.
Then, the robots of each group (first the right one, then the left one) migrate on $\diam$ on their projections, then they recompute their rank and reach their target vertices on the $Cir$-arc belonging to their uniform sector.
After each uniform sector completes the algorithm, the $n$ robots are equally distributed on $Cir$, thus solving {\UCF}. 
Care should be taken in synchronizing all these steps to ensure robots never collide even in {\asynch}. 
We show that all the steps can be executed correctly and safely (avoiding collisions) in $O(1)$ epochs.

\subsection{Related Works}
{\UCF} has been investigated in the luminous-opaque robot model by \etal{Feletti} \cite{feletti2018,FelettiOPODIS23,feletti2023journal,feletti2023} and Pattanayak and Sharma \cite{PattanayakS2024}. 
\etal{Feletti} \cite{feletti2018,FelettiOPODIS23,feletti2023journal,feletti2023} exhibited a $O(1)$-time algorithm in {\fsynch} and {\ssynch}, and a $O(\log n)$-time algorithm under {\asynch}, both using $O(1)$ colors and avoiding collisions.  
Pattanayak and Sharma \cite{PattanayakS2024} showed that {\UCF} can be solved in $O(1)$ time with $O(\sqrt{n})$ colors, or in $O(\log\log n)$ time with $O(1)$ colors in {\asynch}. 

Previous works have studied {\UCF} in the classic oblivious-transparent robot model.
The state-of-the-art result by \etal{Flocchini} \cite{FlocchiniPSV17} shows that {\UCF} can be solved under {\asynch} starting from any arbitrary configuration of $n> 5$ robots initially being on distinct locations, without any additional assumption (i.e., chirality, rigidity, etc.). 
However, their paper focuses on the computability of {\UCF} and not on its complexity in terms of execution time.
Prior to \cite{FlocchiniPSV17}, other different assumptions \cite{ChatzigiannakisMN04,DefagoK02,DefagoS08,Dieudonne2006,DieudonneP07,DieudonneP08,Katreniak05} had been considered in the study of {\UCF}. An excellent discussion can be found in \cite{Viglietta19}.
The case of $n\leq 5$ can be solved using ad-hoc algorithms.  
The special case of the {\sc Square Formation} has been solved in \cite{MaminoV16}. 
Recently, {\UCF} has been studied with additional objectives or features: minimizing the maximum distance traveled by a robot \cite{Bhagat2018} and considering transparent/opaque fat robots (of diameter 1) \cite{Datta2013,MondalC18,MondalC20}. 

{\sc Geometric} or {\sc Arbitrary Pattern Formation} is the macro-problem which encompasses {\UCF}.
Due to its importance, this problem has been heavily studied under different models (oblivious and luminous) \cite{Flocchini2012,Flocchini19,VaidyanathanST22}. 
Starting from arbitrary configurations of robots initially on distinct points on a plane, robots must autonomously arrange themselves to form the target pattern given to robots as input. 
\etal{Vaidyanathan} \cite{VaidyanathanST22} showed that {\sc Arbitrary Pattern Formation} can be solved in $O(\log n)$ epochs using $O(1)$ colors under {\asynch}. According to that result, {\UCF} can also be solved in $O(\log n)$ epochs using $O(1)$ colors, provided that the target $n$-gon positions are given to robots as input a priori. 
However, since the target circle is not known in advance in this paper, the strategies in \cite{VaidyanathanST22} are not compliant with our assumptions.

\subsection{Paper Organization}
We introduce the model and some preliminaries in \Cref{section:model}. 
We present a high-level overview of the three components of our algorithm in \Cref{section:highlevel}. 
The first two components are described in \Cref{section:componentI,section:componentII}. The discussion on the second component focuses on minimizing the computational SEC.
Then, we describe the third component in \Cref{section:componentIII}, which develops novel ideas allowing us to achieve complete parallelism using $O(1)$ colors. \Cref{section:alltogether} describes how the techniques developed in the three components collectively solve {\ucf} and prove its time- and color-optimization properties as well as computational SEC minimization property. 
Finally, we conclude with a short discussion in \Cref{section:conclusion}.  
The pseudo-code and correctness proofs are provided in \Cref{appendix:preudocode,appendix:correctness_proofs}: complexity analysis directly follows from the correctness proofs.

\section{Model and Preliminaries}\label{section:model}
\noindent{\bf Robots.}
We consider a swarm of $n$ dimensionless robots (agents) $\cR=\{r_1,r_2,\dots,r_{n}\}$ that can move in an Euclidean plane $\reals$. 
They enjoy the classical features of these models: they are {\em autonomous}, {\em anonymous}, {\em indistinguishable}, and \emph{homogeneous}.
They are completely disoriented (no global agreement on coordinate systems, chirality, unit distance, or origin).
We consider \emph{collision-intolerant} robots, i.e. robots must avoid collisions at any time.
Two robots $r_i,r_j$ are \emph{mutual visible} to each other iff there does not exist a third robot $r_k$ in the line segment joining $r_i$ and $r_j$, such that $r_k\notin\{r_i,r_j\}$. 
Clearly, mutual visibility is reflexive and symmetric.
The cardinality of the swarm $n$ is not a given parameter of the algorithm robots must execute.
We will often use $r_i$ to denote the robot as well as its position.

\vspace{2mm}
\noindent{\bf Lights.} Each robot is embedded with a light that can assume one color at a time from a fixed $O(1)$-size palette.

\vspace{2mm}
\noindent{\bf Look-Compute-Move.}
By default, a robot $r_i$ is inactive.
A robot can be activated by an adversarial scheduler.
When activated, it performs a ``Look-Compute-Move'' cycle as described below.
After that, the robot becomes inactive until its next activation.
Let us describe each LCM phase:

\begin{itemize}
	\item  {\em Look:} $r_i$ takes an instantaneous snapshot of the current configuration, i.e. it takes the positions and the colors of all the robots \emph{visible} to it.
	Robot positions are taken according to the coordinate system of $r_i$.
	
	\item 
	{\em Compute:}
	Robot~$r_i$ performs a deterministic algorithm with infinite precision using the taken snapshot as the sole input.
	The algorithm produces in output a (possibly) new position and light color for $r_i$.
	
	\item 
	{\em Move:} 
	Robot~$r_i$ updates its light to the new color and moves to its new position along a straight trajectory. 
	The move is rigid and hence $r_i$ always reaches its calculated position.   
\end{itemize}

\vspace{1mm}
\noindent{\bf Robot Activation and Synchronization.}
Generally, three schedulers are considered in the literature.
Under the synchronous schedulers, time is divided in atomic \emph{rounds}: at each round, a subset of robots is activated and they perform each phase of one LCM cycle in perfect synchrony.
Specifically, under the fully synchronous scheduler ({\fsynch}), all robots are activated at each round, while under the semi-synchronous scheduler ({\ssynch}), an arbitrary non-empty subset of robots is activated at each round.
Under the asynchronous scheduler ({\asynch}), robots act without any activation/synchronization assumption except for the fact that the Look phase is always instantaneous. The duration of each cycle is finite but unpredictable.
Indeed, an arbitrary amount of time may elapse between Look/Compute and Compute/Move.
Both in {\ssynch} and in {\asynch}, we assume the \emph{fairness assumption} which states that each robot is activated infinitely often.
An activated robot cannot distinguish between active and inactive robots from the snapshot.

\vspace{2mm}
\noindent{\bf Runtime.}
In {\fsynch}, time is measured in rounds of LCM cycles.
However, since in {\ssynch} and {\asynch} a robot could stay inactive for an indeterminate time, the concept of \emph{epoch} \cite{Cord-Landwehr11} has been introduced to have a proper measure to analyze algorithm runtimes.
We will use the term ``time'' generically to mean rounds for the {\fsynch} and epochs for the {\ssynch} and {\asynch} schedulers.

\vspace{2mm}
\noindent{\bf Configurations.}
Let the \emph{initial} configuration be denoted as \conf{init}. In \conf{init}, the $n$ robots in $\cR$ are positioned on $n$ distinct points on the plane, all with the same light color (w.l.o.g. \colr{off}).
We denote with \conf{convex} a configuration where the $n$ robots lay on the vertices of a convex $n$-gon.
If this $n$-gon is inscribed into a circle, we denote this configuration as \conf{circle}.
If this $n$-gon is regular, we denote the configuration as \conf{regular}.

\begin{definition}[\ucf]\label{definition:ucf}
Given any \conf{init} with $n$ luminous-opaque robots on the Euclidean plane, the \ucf\ (\UCF) problem asks the robots to autonomously reposition themselves to reach \conf{regular} without colliding, and terminate.
\end{definition}

\vspace{1mm}
\noindent\textbf{Terminology.} 
Given two circles $\Omega_1$, $\Omega_2$ on the Euclidean plane, we say that $\Omega_1$ \emph{is enclosed in} $\Omega_2$, denoted as $\Omega_1 \enclosed \Omega_2$, if all the points of $\Omega_1$ belong to the enclosed region delimited by $\Omega_2$ (perimeter included).
Indeed, $\enclosed$ is reflexive and transitive.
Given two points $A,B$, $\overline{AB}$ denotes the line segment connecting such points.
If $A$ and $B$ lay on a circle, $\wideparen{AB}$ denotes the smaller arc between the points $A$ and $B$. 
If points $A$ and $B$ are antipodal (i.e., diametrically opposite), we identify the arc by $\wideparen{ACB}$ where $C$ is a point on the circle that is contained in the arc $\wideparen{AB}$.
We say that an arc is \emph{chiral} if the robots on it are arranged in an asymmetrical pattern.
Given a point $D$ and a segment $\overline{EF}$, we define the \emph{mirror projection} (or, simply, \emph{projection}) of $D$ on $\overline{EF}$ the point $D'\in\overline{EF}$ (if it exists) such that $\overline{DD'}\perp \overline{EF}$.
The \emph{smallest enclosing circle} ($SEC$) of a given configuration $\config$, denoted as $SEC(\config)$, is the (unique) smallest circle containing all the robot positions in $\config$.
We denote with $Cir$ the circle $SEC(\config_{convex})$. 
$Cir$ can be divided into $n$ arcs of equal length. 
We call these arcs \emph{uniform segments} and the corresponding endpoints are \emph{uniform positions}. 
\Cref{tab:terminology} summarizes the essential notations here or later introduced and mainly used throughout the whole paper.

\begin{table}[H]
\centering
	{\small
	\renewcommand{\arraystretch}{1.1}
		\begin{tabular}{|c|l|}\hline
                $\Omega_1 \enclosed \Omega_2$ & Circle $\Omega_1$ is enclosed in circle $\Omega_2$\\\hline
			$\overline{AB}$ & The line segment joining points $A$ and $B$\\\hline
			$\wideparen{AB}$ & The smallest arc joining two points on the circle\\\hline
			$SEC(\config)$ & SEC of the configuration $\config$ \\\hline
			$Cir$ & $SEC(\config_{convex})$ \\\hline
			$\alpha, \beta, \mu$ & Angular Sequences \\\hline
			\BDCP & \bdcp \\\hline
			$\Phi$ & The set of robots with the smallest angular sequence \\\hline
                $\Upsilon_0,\dots, \Upsilon_{k-1}$ & Uniform sectors \\\hline
			$B_i, B_{i+1}$ & Boundaries of $\Upsilon_i$\\\hline
			$q$ & The total number of robots inside each uniform sector\\\hline
                $U_1,\dots,U_q$ & Uniform positions inside a block\\\hline
			$\chord$ & The \emph{block chord} joining the left guard with the right guard\\\hline
                $\barc$ & The Block Arc \\\hline
			$\SC$ & The small circle inside an odd block \\\hline
			$\diam$ & The median diameter of $\SC$ \\\hline
			$m$ & Number of \colr{west} (and so \colr{east}) robots on $\SC$ \\\hline
			$\delta$ & The small angle for \proc{Slice} \\\hline
		\end{tabular}
	}
\caption{Terminology.}
\label{tab:terminology}
\end{table}

\ifthenelse{\boolean{usefigures}}{
\begin{figure}[h] 
  	\centering
   \resizebox{\textwidth}{!}{
	\subfloat[\conf{regular}]{
            \centering
            \begin{tikzpicture}[scale=0.5, transform shape, font = {\LARGE}]
                \def\r{\ray*0.8}
                \def\n{12}
                \def\a{360/\n}
                \draw[draw=none] (0,0) circle (0.95*\ray);
                \draw [very thick] (0,0) circle (\r);
                \foreach \i in {1,..., \n} 
                {
                   \draw [thin, dashed] (0,0) -- (\i*\a:\r);
                   \rnode{\coldefault}{below right}{}{\i*\a:\r};
                    \node (a) at (\i*\a + \a*0.5:\r*0.7) {$a$};

                }
        
            \end{tikzpicture}
       \label{subfig:regular}}
	\hfill
	\subfloat[\conf{biangular}]{
            \centering
            \begin{tikzpicture}[scale=0.5, transform shape, font = {\LARGE}]
                \def\r{\ray*0.8}
                \def\n{360/12}
                \def\a{60}
                \def\b{20}
                \draw[draw=none] (0,0) circle (0.95*\ray);
                \draw [very thick] (0,0) circle (\r);
                \foreach \i in {1,...,10} 
                {  
                   \draw [thin, dashed] (0,0) -- (\i*\a:\r);
                   \draw [thin, dashed] (0,0) -- (\i*\a+\b:\r);
                   \rnode{\coldefault}{below right}{}{\i*\a:\r};
                    \rnode{\coldefault!30}{below right}{}{\i*\a+\b:\r};
                    \node (a) at ({\i*\a - (\a-\b)*0.5}:\r*0.7) {$a$};
                    \node (b) at (\i*\a + \b*0.5:\r*0.7) {$b$};
                    
                }
        
            \end{tikzpicture}
       \label{subfig:conf_biangular}}
	\hfill
	\subfloat[\conf{uniperiodic}]{
            \centering
            \begin{tikzpicture}[scale=0.5, transform shape, font = {\LARGE}]
                \def\r{\ray*0.8}
                \def\a{120}
                \def\b{30}
                \def\c{20}
                \draw[draw=none] (0,0) circle (0.95*\ray);
                \draw [very thick] (0,0) circle (\r);
                \foreach \i in {1,2,3} 
                {  
                   \draw [thin, dashed] (0,0) -- (\i*\a:\r);
                   \draw [thin, dashed] (0,0) -- (\i*\a+\b:\r);
                    \draw [thin, dashed] (0,0) -- (\i*\a+\b+\c:\r);

                   \rnode{\coldefault}{below right}{}{\i*\a:\r};
                    \rnode{\coldefault!50}{below right}{}{\i*\a+\b:\r};
                    \rnode{\coldefault!20}{below right}{}{\i*\a+\b+\c:\r};
                    \node (a) at ({\i*\a + \a - (\a-\b-\c)*0.5}:\r*0.7) {$c$};
                    \node (b) at (\i*\a + \b*0.5:\r*0.7) {$a$};
                    \node (c) at (\i*\a + \b + \c*0.5:\r*0.7) {$b$};  
                }
        
            \end{tikzpicture}
       \label{subfig:uniperiodic}}
		\hfill
	\subfloat[\conf{biperiodic}]{
            \begin{tikzpicture}[scale=0.5, transform shape, font = {\LARGE}]
                \def\r{\ray*0.8}
                \def\a{20}
                \def\b{50}
                \def\c{120}
                \def\d{150}
                \draw[draw=none] (0,0) circle (0.95*\ray);
                \draw [very thick] (0,0) circle (\r);
                \foreach \i in {1,-1} 
                {  
                   \draw [thin, dashed] (0,0) -- (90+\i*\a:\r);
                   \draw [thin, dashed] (0,0) -- (90+\i*\b:\r);
                    \draw [thin, dashed] (0,0) -- (90+\i*\c:\r);
                    \draw [thin, dashed] (0,0) -- (90+\i*\d:\r);
        
                   \rnode{\coldefault}{below right}{}{90+\i*\a:\r};
                   \rnode{\coldefault!60}{below right}{}{90+\i*\b:\r};
                   \rnode{\coldefault!40}{below right}{}{90+\i*\c:\r};
                    \rnode{\coldefault!20}{below right}{}{90+\i*\d:\r};

                    \node (b) at ({90+\i*(\b-(\b-\a)*0.5)}:\r*0.7) {$b$};
                    \node (c) at ({90+\i*(\c-(\c-\b)*0.5)}:\r*0.7) {$c$};
                    \node (d) at ({90+\i*(\d-(\d-\c)*0.5)}:\r*0.7) {$d$};)

                }
                \node (a) at (90:\r*0.7) {$a$};)
                \node (e) at (270:\r*0.7) {$e$};)

            \end{tikzpicture}
    \label{subfig:biperiodic_norobot}}
    		\hfill
	\subfloat[\conf{biperiodic}]{
            \begin{tikzpicture}[scale=0.5, transform shape, font = {\LARGE}]
                \def\r{\ray*0.8}
                \def\a{20}
                \def\b{50}
                \def\c{120}
                \def\d{150}
                \draw[draw=none] (0,0) circle (0.95*\ray);
                \draw [very thick] (0,0) circle (\r);
                \foreach \i in {1,-1} 
                {  
                   \draw [thin, dashed] (0,0) -- (90+\i*\a:\r);
                   \draw [thin, dashed] (0,0) -- (90+\i*\b:\r);
                    \draw [thin, dashed] (0,0) -- (90+\i*\c:\r);
                    \draw [thin, dashed] (0,0) -- (90+\i*\d:\r);
        
                   \rnode{\coldefault!80}{below right}{}{90+\i*\a:\r};
                   \rnode{\coldefault!60}{below right}{}{90+\i*\b:\r};
                   \rnode{\coldefault!40}{below right}{}{90+\i*\c:\r};
                    \rnode{\coldefault!20}{below right}{}{90+\i*\d:\r};
                    
                    \node (a) at ({90+\i*(\a)*0.5)}:\r*0.7) {$a$};
                    \node (b) at ({90+\i*(\b-(\b-\a)*0.5)}:\r*0.7) {$b$};
                    \node (c) at ({90+\i*(\c-(\c-\b)*0.5)}:\r*0.7) {$c$};
                    \node (d) at ({90+\i*(\d-(\d-\c)*0.5)}:\r*0.7) {$d$};)
                    \node (e) at ({270+\i*((180-\d)*0.5)}:\r*0.7) {$e$};)

                }
                 \draw [thin, dashed] (90:\r) -- (270:\r);
                \rnode{\coldefault}{below right}{}{90:\r};
                \rnode{\coldefault!10}{below right}{}{270:\r};
            
            \end{tikzpicture}
    \label{subfig:biperiodic_endpointrobot}}
}
\caption{Different types of \conf{circle}: (a) \conf{regular}, (b) \conf{biangular}, (c) \conf{uniperiodic}, (d) \conf{biperiodic} without robots on the axis of symmetry, (e) \conf{biperiodic} with robots on the axis of symmetry.}
\label{fig:configuration_symmetries}
\end{figure}
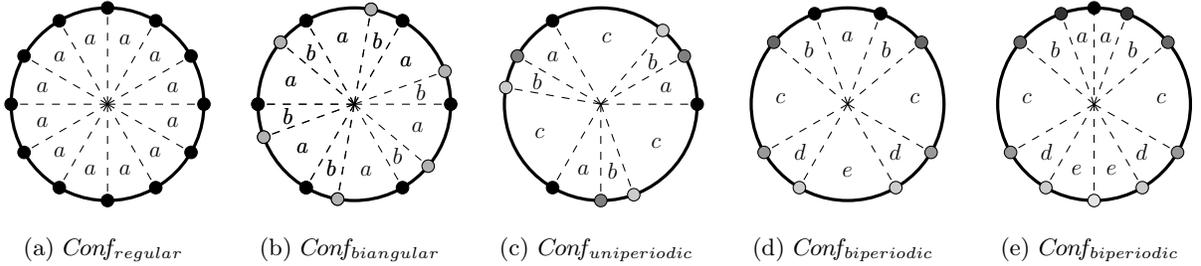}{}

\subsection{Classes of Circle Configurations}
\label{subsection:circleclasses}
Consider \conf{circle} where $n$ robots colored \colr{onSEC} are located on $Cir$. 
Let $r_0, \dots, r_{n-1}$ be their positions. 
We define the \emph{angular sequence} for a robot $r_j$ as the sequence of angles in the clockwise order represented by $\alpha(j) = a_ja_{j+1}\dots a_{j-1}$, where $a_i = \angle r_iOr_{i+1}$, $O$ is the center of $Cir$, and the indices are considered modulo $n$.
Similarly, we define $\beta(j) = b_jb_{j-1} \dots b_{j+1}$, where $b_i = \angle r_{i}Or_{i-1}$ considered in the counterclockwise direction. 
Let $\mu(j) = \min\{\alpha(j),\beta(j)\}$ be the lexicographical minimum sequence for $r_j$. 
Let $\hat{\mu} = \min\{\mu(0), \dots, \mu(n-1)\}$.
Note that, there may be multiple robots corresponding to the \emph{smallest angular sequence} $\hat{\mu}$. 
We define $\Phi = \{r_i: \mu(i) = \hat{\mu}\}$.

A configuration is \emph{reflective} (i.e., a mirror symmetry) if there exists a pair of robots $r_p$ and $r_{p'}$ in $\Phi$ such that $\alpha(p) = \beta(p')$.
Now, we distinguish between two cases, according to the cardinality of $\Phi$:

\subsubsection{Case $|\Phi| = n$ -- Regular and Biangular Configurations} 
In this case, all the robots correspond to the smallest angular sequence. 
There are two types of configurations. 
\begin{enumerate}
    \item\textbf{Regular Configuration:} In \conf{regular}, we have $\alpha(0) = \alpha(i)$, for all $i = 0,\dots,n-1$ (\Cref{subfig:regular}).
     \item\textbf{Biangular Configuration:} In \conf{biangular}, we have $\alpha(0)=\alpha(2i) = (a_0a_1)^{\frac{n}{2}}$ where $a_0\neq a_1$ and $n\geq 2$, for any $i = 0,\dots,\frac{n}{2}-1$ (\Cref{subfig:conf_biangular}).
\end{enumerate}
\begin{observation}\label{obs:uniformsymmetry}
If $|\Phi|=n$ and $\alpha(i) =  \beta(i)$ for some $0\leq i \leq n-1$, then the configuration is \conf{regular}.
\end{observation}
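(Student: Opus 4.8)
The plan is to leverage the dichotomy just established: since $|\Phi|=n$, the configuration is either \conf{regular} or \conf{biangular}. Hence it suffices to prove that a \conf{biangular} configuration \emph{never} satisfies $\alpha(i)=\beta(i)$ for any index $i$; the contrapositive then forces the configuration to be \conf{regular} whenever an index $i$ with $\alpha(i)=\beta(i)$ exists. I would therefore argue by contradiction, assuming the configuration is \conf{biangular} while $\alpha(i)=\beta(i)$ holds for some $i$.

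The key step is a comparison of the \emph{first} symbols of $\alpha(i)$ and $\beta(i)$. By definition $\alpha(i)=a_i a_{i+1}\cdots a_{i-1}$ begins with $a_i=\angle r_iOr_{i+1}$, whereas $\beta(i)=b_i b_{i-1}\cdots b_{i+1}$ begins with $b_i=\angle r_iOr_{i-1}=a_{i-1}$. In a \conf{biangular} configuration the gap sequence is $(a_0a_1)^{n/2}$ with $a_0\neq a_1$ and $n$ even, so consecutive gaps strictly alternate; in particular $a_i\neq a_{i-1}$ for every $i$. Thus $\alpha(i)$ and $\beta(i)$ already disagree in their leading entry, giving $\alpha(i)\neq\beta(i)$ for all $i$ and contradicting the assumption. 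This yields the contrapositive, so under $|\Phi|=n$ the only remaining possibility is \conf{regular}.

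The one place that needs care—and essentially the only place an error could creep in—is the index bookkeeping that identifies the leading symbol of $\beta(i)$ with the gap $a_{i-1}$ rather than $a_i$, together with the fact that \conf{biangular} forces $n$ to be even and the gaps to alternate strictly. Once these two points are pinned down the argument is immediate; there is no deep obstacle, only the risk of an off-by-one in the identification $b_i=a_{i-1}$.

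As an alternative that avoids invoking the dichotomy, I could argue directly: the equality $\alpha(i)=\beta(i)$ is exactly the statement that the configuration is reflective about the diameter through $r_i$ (equivalently, the cyclic gap word $a_0\cdots a_{n-1}$ satisfies $a_{i+k}=a_{i-1-k}$ for all $k$), and combining this reflective symmetry with the requirement that every robot attains the global minimum $\hat{\mu}$ (from $|\Phi|=n$) forces, via a period-counting argument on the rotations of the gap word, the word to be constant. This route is appreciably more involved, so I would prefer the short dichotomy-based argument above.
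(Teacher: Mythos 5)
Your proof is correct: the identification $b_i=\angle r_iOr_{i-1}=a_{i-1}$ is right, and since a \conf{biangular} gap word $(a_0a_1)^{n/2}$ with $a_0\neq a_1$ strictly alternates, $\alpha(i)$ and $\beta(i)$ always differ in their first symbol, so under the paper's dichotomy for $|\Phi|=n$ only \conf{regular} remains. The paper states this as an observation without proof, and your argument is exactly the intended one --- it is the same fact the paper records immediately afterwards, namely that in \conf{biangular} no robot lies on an axis of symmetry.
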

In \conf{biangular}, an axis of symmetry always passes through the midpoint between any pair of adjacent robots on $Cir$.
Also, no robot lies on this axis of symmetry (\Cref{subfig:conf_biangular}). 

\subsubsection{Case $1\leq |\Phi|< n$ -- Periodic Configurations} 
In a periodic configuration \conf{periodic}, there are $|\Phi|$ groups of robots creating some form of symmetry (reflective or rotational). 
\conf{periodic} can be classified into two subclasses as follows. 
If \conf{periodic} is \emph{not reflective}, we call it \emph{uniperiodic} and denote as \conf{uniperiodic} (\Cref{subfig:uniperiodic}). 
The special case $|\Phi|=1$ gives us an \emph{asymmetric} configuration, denoted as \conf{asymmetric}.
If the configuration is \emph{reflective}, we call it \emph{biperiodic} and denote as \conf{biperiodic}.
Given a diameter $d$ lying on an axis of symmetry, there are three cases: \emph{(i)} no robot lies on the endpoints of $d$ (\Cref{subfig:biperiodic_norobot}), \emph{(ii)} each endpoint of $d$ is covered by one robot (\Cref{subfig:biperiodic_endpointrobot}), and \emph{(iii)} a robot lies on one endpoint of $d$ whereas the other endpoint is robot-free.

 \begin{algorithm}
	\caption{\bdcp~\cite{SharmaVT17}}
	\label{alg:bdcp}
	\footnotesize	
        \SetKwInput{Input}{Input Assumptions}        
        \SetKw{Strategy}{Strategy:}

        \Input{
            \begin{enumerate}
                \item$\Omega$, a $k$-point curve, with $k\in\{2,3\}$;
                \item $2k$ \emph{beacons} lay on $\Omega$ ($k$ \emph{left} beacons and $k$ \emph{right} beacons);
                \item$m$ \emph{waiting} robots $r_1,\dots, r_m$ s.t.
                    \begin{enumerate}
                        \item they are external to $\Omega$ belonging to the same half-plane (if $k=2$) or the convex side (if $k=3$);
                        \item they can see all the $2k$ original beacons;
                        \item they have to reach $\Omega$
                        \begin{itemize}
                            \item{} in some points between the original \emph{left} beacons and the \emph{right} beacons;
                            \item{} traveling along disjoint paths $\pi_1, \dots, \pi_m$;
                            \item{} so that their paths intersect $\Omega$ only once;    
                            \item{} so that if a waiting robot $r_i$ sees at least $k$ beacons, then it can compute its path $\pi_i$;
                        \end{itemize}
                    \item a waiting robot becomes a new beacon once it reaches $\Omega$.
                    \end{enumerate}
            \end{enumerate}
            }
		\KwResult{All the waiting robots reach $\Omega$ without colliding in $O(\log{k})$ epochs.}	
\end{algorithm}

\subsection{Beacon Directed Curve Positioning}
{\bdcp} (\BDCP) is a procedure developed by \etal{Sharma} in \cite{SharmaVT17} that uses some robots as \emph{beacons} to define a $k$-point curve, and then guide other \emph{waiting} robots (external to the curve all on one side) from their initial positions to their final positions on the curve in {\asynch}. 
A $k$-point curve $\Omega$ is a curve on $\reals$ where $k$ is the minimal number of points necessary to define $\Omega$.
Here, we consider only straight lines and circular arcs (so $k\in\{2,3\}$).
Beacons and waiting robots have different colors, w.l.o.g. \colr{beacon} and \colr{waiting}.
{\BDCP} starts placing $2k$ \colr{beacon} robots on $\Omega$ (divided in $k$ left and $k$ right beacons) to define the curve. 
Specifically, the beacons are positioned at their final positions on the curve. 
After that, all the $m$ \colr{waiting} robots move to be positioned on $\Omega$ between left and right beacons (\Cref{fig:beacon}). 
Once a robot reaches $\Omega$, it changes its color to \colr{beacon} becoming a new beacon.
It was shown in \cite{SharmaVT17} that if the first epoch started with $m$ \colr{waiting} robots wanting to move to the curve (i.e., not already in transit to the curve) and an epoch $e\geq 1$ starts with $v\geq 2k$ beacons, then epoch $e+1$ starts with $\min\{m+2k,3v/2\}$ beacons, hence relocating all the robots to the curve in $O(\log k)$ epochs, using $O(1)$ colors.
In our framework, we borrow the idea of \etal{Sharma} \cite{SharmaVT17} and use {\BDCP} to relocate robots on line segments ($k=2$) and circular arcs ($k=3$) using $2k$ beacons in $O(\log{k}) = O(1)$ epochs in {\asynch}.
\Cref{alg:bdcp} summarizes the input assumptions and the resulted output of the {\BDCP} procedure.

\ifthenelse{\boolean{usefigures}}{
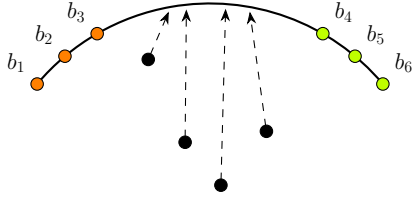
\begin{figure}
\centering
\begin{tikzpicture}[scale=0.5, transform shape, font = {\LARGE}]
    \def\radius{6cm}

    \draw[thick] (40:\radius) arc[start angle=40, end angle=140, radius=\radius];
    
    \node (A1) at (40:\radius) {};
    \node (A2) at (50:\radius) {};
    \node (A3) at (60:\radius) {};
    \node (A4) at (120:\radius) {};
    \node (A5) at (130:\radius) {};
    \node (A6) at (140:\radius) {};
    \node (P1) at (60:0.5*\radius) {};
    \node (P2) at (110:0.8*\radius) {};
    \node (P3) at (106:0.4*\radius) {};
    \node (P4) at (76:0.2*\radius) {};
    \node (D1) at (80:\radius) {};
    \node (D2) at (100:\radius) {};
    \node (D3) at (96:\radius) {};
    \node (D4) at (86:\radius) {};

    \foreach \point/\i in {A1/6, A2/5, A3/4}
    	\rnode{lime}{above right}{$b_\i$}{\point};
    \foreach \point/\i in {A4/3, A5/2, A6/1}
    	\rnode{orange}{above left}{$b_\i$}{\point};
    \foreach \point in {P1, P2, P3, P4}
    	\rnode{\coldefault}{below right}{}{\point};
    
    \draw[dashed, -{Stealth}] (P1) -- (D1);
    \draw[dashed, -{Stealth}] (P2) -- (D2);
    \draw[dashed, -{Stealth}] (P3) -- (D3);
    \draw[dashed, -{Stealth}] (P4) -- (D4);
\end{tikzpicture}
\caption{\BDCP ~on a circular arc, where $b_1,b_2,b_3$ ($b_4,b_5,b_6$, resp.) are the \emph{left} (\emph{right}, resp.) beacons.}
\label{fig:beacon}
\end{figure}}{}

\section{Overview of the Algorithm}\label{section:highlevel}
In this section, we provide an overview of our $O(1)$-time $O(1)$-color algorithm for {\UCF} under {\asynch}, avoiding collisions. 
Starting from an arbitrary configuration \conf{init} where $n$ \colr{off}-colored robots lay on distinct points of the plane, the proposed algorithm includes three major components: 
\begin{itemize}
    \item{} (I) circle formation;
    \item{} (II) transforming biangular configurations to regular configurations;
    \item{} (III) transforming periodic configurations to regular configurations.
\end{itemize}
In the next Sections, we will provide the details on each of these components.
The proposed algorithm runs Component I and, subsequently, either Component II or Component III. 
The diagram in \Cref{fig:transition} depicts the configuration transitions induced by the execution of each step of our algorithm.
\ifthenelse{\boolean{usefigures}}{
\begin{figure*}[!t]
    \centering
\resizebox{\textwidth}{!}{
    \begin{tikzpicture}
  \tikzstyle{confbox} = [rectangle, draw, text centered, minimum height=2em, rounded corners]
  \def\height{4.5}
  \def\length{23}
  
  \node [confbox] (init) at (-0.2*\length,0) {\LARGE \conf{init}};
  \node [confbox] (convex) at (-0.1*\length,\height) {\LARGE \conf{convex}};
  \node [confbox] (circle) at (0,0){\LARGE \conf{circle}};
  \node [confbox] (periodic) at (0.1*\length,\height) {\LARGE \conf{periodic}};
  \node [confbox] (usect) at (0.33*\length,\height) {\LARGE \conf{unisect}};
  \node [confbox] (oddblock) at (0.63*\length,\height) {\LARGE \conf{oddblock}};
  \node [confbox] (smallcircle) at (0.91*\length,\height) {\LARGE \conf{smallcircle}};
  \node [confbox] (biangular) at (0.5*\length,\height/3){\LARGE \conf{biangular}};
  \node [confbox] (regular) at (\length,0) {\LARGE \conf{regular}};

  \draw [-{Latex[length=3mm]}] (init) -- node[sloped,above] {\sc Complete Visibility} (convex);
  \draw [-{Latex[length=3mm]}] (convex) -- node[sloped,above] {  \sc Circle Formation} (circle);
  \draw [-{Latex[length=3mm]}] (circle) -- node[sloped,above] {  \sc } (periodic);
  \draw [-{Latex[length=3mm]}] (periodic) -- node[sloped,above] {  \sc Split} (usect);
  \draw [-{Latex[length=3mm]}] (usect) -- node[sloped,above] {  \sc Odd Block ($q \geq 12$)} (oddblock);
  \draw [-{Latex[length=3mm]}] (usect) -- node[sloped,above] {  {\sc Sequential Match} ($q < 12$)} (regular);
  \draw [-{Latex[length=3mm]}] (oddblock) -- node[sloped,above] {  \sc Small Circle} (smallcircle);
  \draw [-{Latex[length=3mm]}] (biangular) -- node[sloped,above] {  \sc Biangular} (regular);
  \draw [-{Latex[length=3mm]}] (init) -- (circle);
  \draw [-{Latex[length=3mm]}] (circle) -- (biangular);
  \draw [-{Latex[length=3mm]}] (smallcircle) -- node[sloped,above] {  \sc Slice} (regular);

  \draw [-{Latex[length=3mm]}] (init) -- node[sloped, below] {\mycircled{1}} (convex);
  \draw [-{Latex[length=3mm]}] (convex) -- node[sloped, below] {\mycircled{2}} (circle);
  \draw [-{Latex[length=3mm]}] (biangular) -- node[sloped, below] {\mycircled{3}} (regular);
  \draw [-{Latex[length=3mm]}] (periodic) -- node[sloped, below] {\mycircled{4}} (usect);
  \draw [-{Latex[length=3mm]}] (usect) -- node[sloped, below] {\mycircled{5}} (oddblock);
  \draw [-{Latex[length=3mm]}] (oddblock) -- node[sloped, below] {\mycircled{6}} (smallcircle);
  \draw [-{Latex[length=3mm]}] (smallcircle) -- node[sloped, below] {\mycircled{7}} (regular);
  \draw [-{Latex[length=3mm]}] (circle) -- (regular);
  \draw [-{Latex[length=3mm]}] (usect) -- node[sloped, below] {\mycircled{8}} (regular);
\end{tikzpicture}
}
    \caption{Transition diagram among configurations while solving {\UCF}.
    The arrows without numbering denote a transition with only color change (no robot moves). 
    The parameter $q$ is the number of robots in each uniform sector of \conf{unisect}.}  
    \label{fig:transition}
\end{figure*}
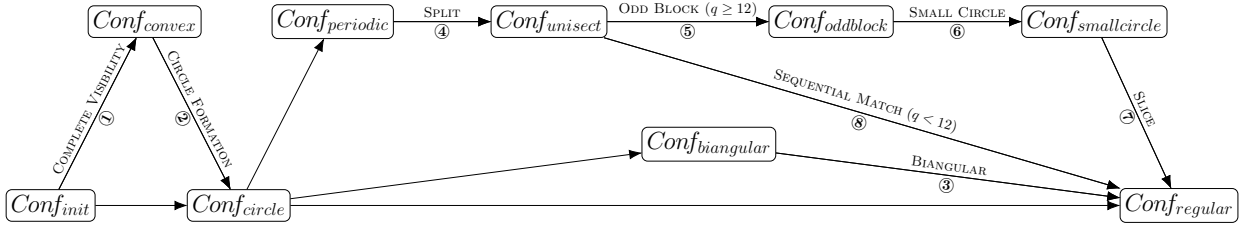
}{}

Starting from \conf{init}, robots execute \proc{Complete Visibility} \cite{SharmaVT17} to form a convex configuration \conf{convex}, ({\bf transition} \mycircled{1}).
Afterward, robots execute \proc{Circle Formation} to achieve \conf{circle} ({\bf transition} \mycircled{2}). 
Pattanayak and Sharma \cite{PattanayakS2024}  showed that \conf{init} can be transformed to \conf{circle} in $O(1)$ epochs using $O(1)$ colors under {\asynch}, avoiding collisions.

Once in \conf{circle}, if robots satisfy \conf{biangular}, they use \proc{Biangular} to arrange themselves to satisfy \conf{regular}, thus solving {\UCF} ({\bf transition} \mycircled{3}).  

Indeed, the most challenging case results when \conf{circle} satisfies \conf{periodic}. 
Our contribution in this paper will establish that \conf{periodic} can be transformed into \conf{regular} in $O(1)$ epochs under {\asynch}, avoiding collisions, using $O(1)$ colors ({\bf transitions} \mycircled{4}--\mycircled{8}).
Specifically, \conf{periodic} is first transformed into \conf{unisect} (a configuration containing uniform sectors as defined later), using \proc{Split} ({\bf transition} \mycircled{4}). 
Then \conf{unisect} is transformed into \conf{oddblock} (an odd block configuration defined later) using \proc{Odd Block} ({\bf transition} \mycircled{5}).
Then, \conf{oddblock} is transformed into \conf{smallcircle} using \proc{Small Circle} ({\bf transition} \mycircled{6}).
Finally \conf{smallcircle} is transformed into \conf{regular} using \proc{Slice} ({\bf transition} \mycircled{7}), thus solving {\UCF}. 
When \conf{unisect} has less than 12 robots in each uniform sector, we directly solve {\UCF} using \proc{Sequential Match} ({\bf transition} \mycircled{8}).
We will show that each of these procedures takes $O(1)$ time and colors.   
Light colors are also used to synchronize robots to decide on the subsequent configuration to achieve, avoiding ambiguous situations.
Moreover, we will show that our solution minimizes the computational SEC by proving that $SEC(\config)\enclosed SEC(\config_{init})$ for any intermediate configuration $\config$ achieved throughout the execution of the whole algorithm.
\sloppy

We now provide details of the algorithm. 
We give particular focus on Components II and III since Component I can be run using the techniques of \etal{Feletti} \cite{FelettiOPODIS23,feletti2023journal,feletti2023} and Pattanayak and Sharma \cite{PattanayakS2024}. Therefore, we describe Component I in brief in \Cref{section:componentI} and move on to describing techniques for Components II and III in \Cref{section:componentII,section:componentIII}, respectively. Notice that the techniques of \etal{Feletti} \cite{FelettiOPODIS23,feletti2023journal,feletti2023} and Pattanayak and Sharma \cite{PattanayakS2024} can be used in Component II but those techniques only optimize time and color complexities but not minimize the computational SEC. 
Our new idea both optimizes time/color complexities and minimizes the computational SEC.

\section{Component I - Circle Formation (Transitions \mycircled{1} and \mycircled{2})}
\label{section:componentI}
Given \conf{init}, the robots use the {\sc Complete Visibility} algorithm of \etal{Sharma} \cite{SharmaVT17} to reach \conf{convex} where robots occupy the vertices of a convex $n$-gon, thus guaranteeing complete visibility to the whole swarm.
Let $Cir$ be $SEC(\config_{convex})$. 
The {\sc Complete Visibility} algorithm in \cite{SharmaVT17} works using $O(1)$ time and colors, and guarantees that any robot in the swarm operates within $SEC(\config_{init})$ (thus minimizing the computational SEC).
W.l.o.g. we assume that each robot is colored as \colr{convex} in \conf{convex}. 

Starting from \conf{convex}, robots execute \proc{Circle Formation}. 
Let $r$ be a \colr{convex} robot.
If $r$ already lays on the perimeter of $Cir$, it sets its color as \colr{onSEC}. 
Note that at least 2 (if antipodal) or 3 robots (if not antipodal) on $Cir$ set their color as \colr{onSEC} without moving, while the remainder lies in the interior of $Cir$. 
Otherwise (i.e. if $r$ is internal to $Cir$), $r$ moves radially towards $Cir$ until reaching the perimeter of the circle.
After reaching $Cir$, $r$ assumes color \colr{onSEC}.
Once every robot turns into \colr{onSEC}, {\sc Circle Formation} is solved.   
Pattanayak and Sharma \cite{PattanayakS2024} showed that this process avoids collisions and robots correctly position themselves on $Cir$ in $O(1)$ epochs with $O(1)$ colors. 
\begin{lemma}[\textbf{Circle Formation} \cite{PattanayakS2024}]\label{lemma:circleformation}
Given \conf{init} with $n$ \colr{off}-colored robots on distinct points on a plane, robots reposition themselves in \conf{circle} using $O(1)$ epochs and $O(1)$ colors under {\asynch}, avoiding collisions, guaranteeing that robots operate within $SEC(\config_{init})$.
\end{lemma}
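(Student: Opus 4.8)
The plan is to establish \Cref{lemma:circleformation} by decomposing the claim into its four constituent guarantees — correct final configuration (\conf{circle}), collision-freedom, the $O(1)$ epoch / $O(1)$ color bounds, and the containment $SEC(\config)\enclosed SEC(\config_{init})$ — and to verify each in turn across the two transitions that make up Component I. Since the lemma is attributed to \cite{PattanayakS2024} and rests on the {\sc Complete Visibility} result of \cite{SharmaVT17}, much of the work is in correctly chaining these cited guarantees rather than re-deriving them. First I would invoke \cite{SharmaVT17} to obtain that, from any \conf{init} with $n$ robots on distinct points, the {\sc Complete Visibility} algorithm reaches \conf{convex} in $O(1)$ epochs using $O(1)$ colors, avoiding collisions, and — crucially for the spatial bound — keeping every robot inside $SEC(\config_{init})$ throughout. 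This last property is what lets us define $Cir = SEC(\config_{convex})$ and immediately conclude $Cir \enclosed SEC(\config_{init})$, because the convex hull of a point set contained in a disk has its own smallest enclosing circle contained in that disk.

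The second transition is \proc{Circle Formation} itself, and here the plan is to argue directly about the radial motion. The key observation is that each interior robot $r$ moves along the ray from the center $O$ of $Cir$ through $r$ until it meets the perimeter, while robots already on $Cir$ merely recolor to \colr{onSEC} without moving. I would verify collision-freedom by noting that distinct robots lie on distinct rays from $O$ (distinct angular coordinates), so their radial paths are disjoint except possibly at $O$, which no robot occupies or reaches; hence no two paths intersect within concurrent LCM cycles, and no two robots share a destination since each ray meets the perimeter in a single outward point. Correctness of the target configuration follows because once every robot is \colr{onSEC} it lies on $Cir$, yielding \conf{circle}. The containment bound is immediate for this step: every point any robot touches is on the segment from its interior position to $Cir$, which lies inside the closed disk bounded by $Cir$, and we have already shown $Cir \enclosed SEC(\config_{init})$; transitivity of $\enclosed$ then gives the claim for all intermediate configurations.

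For the time and color accounting, I would combine the $O(1)$ epoch bound of {\sc Complete Visibility} with the observation that \proc{Circle Formation} completes in a constant number of epochs: each interior robot needs only one activation to compute its radial destination (from an instantaneous Look) and, with rigid movement, reaches $Cir$ in that single Move, so after at most a constant number of epochs (allowing for the adversarial scheduler to activate each robot at least once per epoch) all robots are \colr{onSEC}. The palette remains $O(1)$ since only the colors \colr{convex} and \colr{onSEC} are introduced here, on top of the constant palette used by {\sc Complete Visibility}. The main obstacle I anticipate is the asynchronous interaction between the two phases and within \proc{Circle Formation}: under {\asynch} a robot may Look before neighbors have finished moving, so I must confirm that a robot's radial destination does not depend on any other robot's transient position — which holds precisely because the destination is determined only by $O$ and the robot's own ray to $Cir$, both of which are invariant under the other robots' motions once $Cir$ is fixed. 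Establishing that $Cir$ is computed consistently by every robot (so that all agree on the same target circle despite obstructed visibility and differing local coordinate systems) is the delicate point, and I would lean on the fact that in \conf{convex} complete visibility holds, so every robot sees the full convex polygon and computes the same $SEC$; this consistency is what ultimately underwrites both correctness and the containment guarantee.
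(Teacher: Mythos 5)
Your proposal is correct and follows essentially the same route as the paper: Component I is exactly the chain of invoking the {\sc Complete Visibility} result of \cite{SharmaVT17} to reach \conf{convex} inside $SEC(\config_{init})$, then moving interior robots radially outward to $Cir = SEC(\config_{convex})$, with the collision-freedom, $O(1)$ epoch/color bounds, and containment attributed to \cite{PattanayakS2024}. Your added justifications (distinct vertices of the convex polygon lie on distinct rays from the center, so radial paths are disjoint; the destination depends only on $Cir$ and the robot's own ray, so asynchrony is harmless) are consistent with, and slightly more explicit than, what the paper itself records before deferring to the cited work.
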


\section{Component II - Transforming Biangular Configurations into Regular Configurations (Transition \mycircled{3})}
\label{section:componentII}
Each \colr{onSEC} robot on \conf{circle} can check whether the current configuration satisfies either $\config_{regular}, \config_{biangular},$ or \conf{periodic}. 
Thus, each robot changes its color accordingly (without moving) to either \colr{regular}, \colr{biangular}, or \colr{periodic}, respectively.  
Colors are used to make robots agree on the procedure to be executed.
Indeed, \conf{regular} solves {\UCF}, so robots will no longer update their color or position.
\ifthenelse{\boolean{usefigures}}{
\begin{figure}[H]
    \centering  
    \resizebox{\textwidth}{!}{
    \subfloat[\conf{biangular}]{
        \begin{tikzpicture}[scale=0.5, transform shape, font = {\LARGE}]
        \def\r{3}
        \def\n{360/4}
        \def\a{15}
        \draw (0,0) circle (\r);
        \draw[draw=none] (0,0) circle (4.2);
        \foreach \i in {1,...,4}
        {
            \draw [thin, dashed] (0,0) -- (\i*\n - \a:\r);
            \draw [thin, dashed] (0,0) -- (\i*\n + \a:\r);
            \node (a) at ({\i*\n}:\r*0.7) {$a$};
            \node (b) at ({\i*\n + \n*0.5}:\r*0.7) {$b$};
            \rnode{\coldefault}{below right}{}{\i*\n - \a:\r};
            \rnode{\coldefault}{below right}{}{\i*\n + \a:\r};
        }
        \end{tikzpicture}
   \label{subfig:biangular}}\hfill
    \subfloat[Regular $n$-gon $P'$ \cite{DieudonneP07}]{
        \begin{tikzpicture}[scale=0.5, transform shape, font = {\LARGE}]
        \def\r{3}
        \def\y{3.67}
        \def\z{4.098} 
        \def\x{3.136} 
        \def\n{360/4}
        \def\a{15}
        \def\b{22.5}
        \draw[dashdotted] (0,0) circle (\r);
        \draw[draw=none] (0,0) circle (4.2);
        \foreach \i in {1,...,4}
        {
            \rnode{\coldefault}{below right}{}{\i*\n - \a:\r};
            \rnode{\coldefault}{below right}{}{\i*\n + \a:\r};
            \draw (\i*\n - \b:\x) -- (\i*\n + \b:\x);
            \draw (\i*\n + \b:\x) -- (\i*\n + \n - \b:\x);
        }
        \end{tikzpicture} 
   \label{subfig:SP}}\hfill
    \subfloat[Exogenous Polygon $EP$]{
        \begin{tikzpicture}[scale=0.5, transform shape, font = {\LARGE}]
        \def\r{3}
        \def\y{3.67}
        \def\n{360/4}
        \def\a{15}
        \draw[draw=none] (0,0) circle (4.2);
        \foreach \i in {1,...,4}
        {
            \rnode{\coldefault}{below right}{}{\i*\n - \a:\r};
            \rnode{\coldefault}{below right}{}{\i*\n + \a:\r};
            \draw (\i*\n - \a:\r) -- (\i*\n + \a:\r);
            \draw (\i*\n + \a:\r) -- (\i*\n + \n - \a:\r);
            \draw[dashed] (\i*\n :\y) -- (\i*\n + \n :\y);
        }
        \node (z) at (45:3.3) {$z$};
        \draw[<->] (\y+0.3, 0.3) -- (0.3,\y+0.3);
        \end{tikzpicture}
   \label{subfig:EP}}\hfill
    \subfloat[Regular $n$-gon $P''$]{
        \begin{tikzpicture}[scale=0.5, transform shape, font = {\LARGE}]
        \def\r{3}
        \def\y{3.67}
        \def\z{2.812}
        \def\n{360/4}
        \def\a{15}
        \draw[draw=none] (0,0) circle (4.2);
        \foreach \i in {1,...,4}
        {
            \rnode{\coldefault}{below right}{}{\i*\n - \a:\r};
            \rnode{\coldefault}{below right}{}{\i*\n + \a:\r};
            \draw (\i*\n - 22.5:\z) -- (\i*\n + 22.5:\z);
            \draw (\i*\n + 22.5:\z) -- (\i*\n + \n - 22.5:\z);
            \draw[dashed] (\i*\n :\y) -- (\i*\n + \n :\y);
        }
        \node (y) at (18:4) {$y$};
        \draw[<->] (\y+0.3, 0.3) -- (\y+0.3 - \y*0.29,\y*0.29+0.3);
        \end{tikzpicture}
   \label{subfig:inscribed}}}
    \caption{Arrangement of \conf{biangular} in a regular $n$-gon.}
    \label{fig:biangular}
\end{figure}
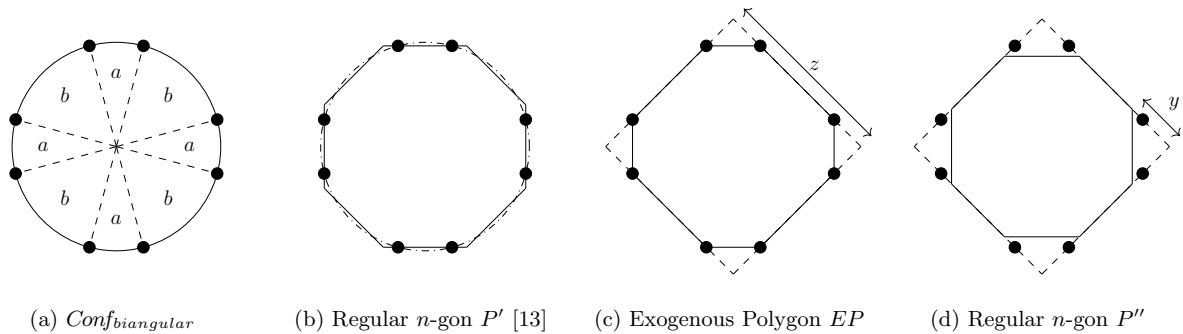
}{}
We discuss here \conf{biangular}.
\conf{periodic} is treated in Component III and discussed in the next section.
To solve this special configuration, we take inspiration from an idea of \etal{Dieudonné} \cite{DieudonneP07}, although we implement a new approach to transform \conf{biangular} into \conf{regular} which suits our objective.
In \conf{biangular}, $n$ is even and the configuration has $n/2$ axes of symmetry. 
Let $n=2k$, and let $P$ be the $n$-gon formed by the robots in  \conf{biangular} (i.e. each vertex of $P$ is covered by a robot).
The strategy developed by \etal{Dieudonné} \cite{DieudonneP07} was to spot the target regular $n$-gon $P'$ which encloses $P$, such that robots lay on alternative edges of $P'$ (\Cref{subfig:SP}), and make robots slide on the edges of $P'$ until they stop to the vertices of $P'$.
If on the one hand this strategy solves {\UCF} even under {\asynch}, on the other hand it leads the robots to move outside $SEC(\config_{biangular})$ to reach the vertices of $P'$. 
Seeing that our objective is also to minimize the computational SEC, we propose a target $n$-gon $P''$ that is inscribed inside $P$. 
Specifically, the inscribed polygon $P''$ has its vertices on the larger edge of $P$ (being biangular, $P$ has two different edge lengths). 
After detecting the target polygon $P''$, the robots are made to move along the larger edges of $P$ towards the vertices of $P''$, thus achieving \conf{regular}. 
We now show that, as in \cite{DieudonneP07}, our strategy works also considering asynchronous robots. 
Given $P$, we form a regular $k$-gon by extending the larger edges of $P$ (see \Cref{subfig:EP}).
We call this extended $k$-gon as the \emph{exogenous polygon} ($EP$ for short).
Notice that, as long as the robots move along the larger edges of $P$ such that the larger edge remains the larger edge, the $EP$ also remains the same.
For an $EP$, there exists a unique regular $2k$-gon inscribed in it (see \Cref{subfig:inscribed}).
This property guarantees that the target polygon $P''$ (i.e. the regular $2k$-gon inscribed in $EP$) is maintained even when some robots are in transit, and so it can be correctly computed even when robots no longer form $P$.
In particular, the location of the vertices of $P''$ is located at a distance $y$ from the vertex of the $EP$, where $z$ is the edge length of the $EP$. 
The value of $y$ is given by
\begin{align*}
    y = \frac{z}{2+ \sqrt{2- 2\cos\left(\frac{(k-2)\pi}{k}\right)}}
\end{align*}
After reaching a vertex of $P''$, the robots assume color \colr{regular}.
In summary, robots form \conf{regular} from \conf{biangular} in three epochs: \emph{(i)} robots change color to \colr{biangular}, \emph{(ii)} then they head to the vertices of $P''$, traveling along the edges of $EP$, and \emph{(iii)} lastly they color as \colr{regular}.

\begin{lemma}[\textbf{Biangular to Regular Configuration}]\label{lemma:biangular}
Given \conf{biangular} of $n$ \colr{biangular}-colored robots, the robots reposition to \conf{regular} in $O(1)$ epochs using $O(1)$ colors under {\asynch}, avoiding collisions, guaranteeing that robots always operate within $SEC(\config_{biangular})$. 
\end{lemma}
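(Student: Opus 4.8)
The plan is to establish the three claimed properties separately---correctness (reaching $\config_{regular}$), the $O(1)$ time/color bounds, and the spatial containment $SEC(\config) \enclosed SEC(\config_{biangular})$---while carefully accounting for the asynchrony of the scheduler. I would begin with the geometric backbone: given the biangular $P$ with its two distinct edge lengths, I would show that the exogenous polygon $EP$ (the regular $k$-gon obtained by extending the larger edges of $P$) is well-defined and, crucially, is \emph{invariant} as long as every moving robot stays on its larger edge so that the larger edges remain the larger edges. This invariance is the key structural fact, and I would prove it by noting that each robot's motion is confined to a single edge of $EP$, so the supporting lines of the larger edges never change, and hence $EP$ is a deterministic function every robot can recompute from any intermediate snapshot. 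From this I would derive that the target polygon $P''$, defined as the unique regular $2k$-gon inscribed in $EP$, is likewise invariant and consistently computable by all robots throughout the execution. The formula for the vertex offset $y$ in terms of the edge length $z$ of $EP$ would be justified once, via the chord-length relation $\sqrt{2-2\cos((k-2)\pi/k)}$ for the inscribed regular $2k$-gon, but I would not recompute it in detail.

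Next I would handle correctness under \asynch. Because $P''$ is invariant, each robot can deterministically identify its own target vertex on its larger edge regardless of how many other robots have already moved or are currently in transit; this is what immunizes the procedure against stale snapshots. I would argue collision-freedom by observing that the trajectories lie along the larger edges of $P$, that distinct robots travel on disjoint edge segments toward distinct vertices of $P''$, and that the $n/2$-fold symmetry keeps the motions confined to non-overlapping portions of $Cir$'s interior. The color discipline---\colr{biangular} during transit, switching to \colr{regular} only upon arrival---ensures that a robot finishing early and a robot not yet started never misinterpret the configuration, since the invariance of $EP$ means both read the same target. I would then invoke the three-epoch structure sketched in the text (color to \colr{biangular}, move to $P''$, color to \colr{regular}) to conclude the $O(1)$-epoch, $O(1)$-color bounds.

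For the computational SEC minimization, the essential point---and what distinguishes this approach from \cite{DieudonneP07}---is that $P''$ is \emph{inscribed} in $P$ rather than circumscribing it, so the vertices of $P''$ lie on the larger edges of $P$, which are chords of $SEC(\config_{biangular})$. Since every point a robot touches lies on such a chord (hence inside or on $Cir = SEC(\config_{biangular})$), and the final regular configuration is inscribed in a circle no larger than $Cir$, I would conclude $SEC(\config) \enclosed SEC(\config_{biangular})$ for every intermediate $\config$. I would verify that $y$ is chosen precisely so that the $P''$-vertices fall strictly between the endpoints of each larger edge, guaranteeing the robots never overshoot onto the perimeter beyond the existing robot positions.

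The main obstacle I anticipate is the asynchrony-under-invariance argument: I must rule out any transient snapshot in which a robot could mistakenly recompute a \emph{different} $EP$ (and hence a different $P''$) while some robots are mid-edge. The delicate case is when a moving robot's displacement might momentarily make a formerly-larger edge ambiguous with a smaller one, or when only a subset of robots has moved and the remaining configuration is no longer exactly biangular. Resolving this requires showing that the supporting lines of the larger edges, and therefore $EP$, are determined by data (the edge directions and the chirality fixed by the symmetry) that no robot's legal move can alter---so the invariance holds pointwise in time and not merely at clean synchronous snapshots. Once that robustness is nailed down, the remaining arguments are routine consequences of the geometry and the color protocol.
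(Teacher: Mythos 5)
Your proposal is correct and follows essentially the same route as the paper's own argument in Section 5: the invariance of the exogenous polygon $EP$ under edge-confined motion, the unique regular $2k$-gon $P''$ inscribed in $EP$ as the asynchrony-robust target, disjoint inward trajectories along the larger edges for collision-freedom, the three-epoch color protocol for the $O(1)$ bounds, and containment in $SEC(\config_{biangular})$ because all motion stays on chords of $Cir$. The "main obstacle" you flag (ruling out a transient snapshot that yields a different $EP$) is exactly the point the paper itself settles only by the assertion that the larger edges remain the larger edges throughout, so your treatment is at least as careful as the original.
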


\section{Component III - Transforming Periodic Configurations into Regular Configurations (Transitions \mycircled{4}--\mycircled{8})}
\label{section:componentIII}
Before diving into the deep of our algorithm, we mention two strategic usages of robot lights to ensure a correct evolution of the algorithm under {\asynch}.
For the sake of conciseness, we will often omit the details of these two strategies while explaining the algorithm, assuming their rationale is understood.
\begin{itemize}
    \item{}\textbf{Elections under {\asynch}.} In some points of our algorithm, a group of robots is elected to perform a movement, in parallel. 
    Being under {\asynch}, we need to ensure that all robots can correctly elect themselves even if part of the elected group is already moving.
    For this reason, our algorithm requires \emph{all} the elected robots in the group to synchronize themselves setting their colors as \colr{pre\_<color>} before starting moving, where \colr{<color>} will be the color they have to set once stopped and reactivated.
    \item{}\textbf{Movements under {\asynch}.} In the remainder, our algorithm makes a robot color itself as \colr{to\_<color>} before starting moving, where \colr{<color>} will be the color it has to set once stopped and reactivated.
    This strategy guarantees robots recognize moving robots in {\asynch} (those not hidden by other robots), thus avoiding ambiguous situations.
\end{itemize}

\subsection{\proc{Split} ({\bf Transition} \mycircled{4})}
\label{sec:procedure_split}
\proc{Split} takes in input \conf{periodic} where all $n$ robots are \colr{periodic}-colored and lay on $Cir$ in a periodic pattern.
This procedure partitions \conf{periodic} into $k\geq 2$ circular sectors $\Upsilon_0,\dots, \Upsilon_{k-1}$ such that \emph{(i)} they have the same arc length and \emph{(ii)} they are size-balanced (i.e. containing the same number of robots), and \emph{(iii)} they are chiral (i.e. the robots are arranged in an asymmetric pattern along the arc of each $\Upsilon_i$).
We call such sectors as \emph{uniform sectors}.
In this procedure, some robots will be elected as \emph{leaders} to fix the boundaries of each $\Upsilon_i$, and some robots will be elected and made to move to fix the chirality of the sector.
Leaders will turn their color into \colr{regular} if they lay on the boundaries of uniform sectors, or, otherwise, into \colr{leader} (i.e. if the boundary of a uniform sector lies in the middle point between two consecutive \colr{leader} robots).
We proceed as follows, distinguishing two cases based on $|\Phi|$.

\ifthenelse{\boolean{usefigures}}{
\begin{figure}[h] 
  	\centering
   \resizebox{\textwidth}{!}{
	\subfloat[$|\Phi| = 1$, $n$ odd.]{
			\centering
			\begin{tikzpicture}[scale=0.5, transform shape, font = {\LARGE}]
				\def\r{\ray}
				\def\a{360/13}
                \draw[draw=none] (0,0) circle (1.15*\r);
				\draw [very thick] (0,0) circle (\r);
				\draw[dotted] (90:\r) -- (270:\r); 
				\draw[] (90+\a:\r) -- (0:0); 
				\draw[] (90-\a:\r) -- (0:0); 
				\draw[] (270+0.5*\a:\r) -- (0:0); 
				\draw[] (270-0.5*\a:\r) -- (0:0); 
	
				\node (a) at (270:\r*0.6) {$\frac{2\pi}{n}$};
				\node (a) at (90-0.5*\a:\r*0.6) {$\frac{2\pi}{n}$};
				\node (a) at (90+0.5*\a:\r*0.6) {$\frac{2\pi}{n}$};
	
				\rnode{\colregular}{above}{}{90:\r};
				\rnode{\colblockL}{above left}{}{90+\a:\r};
				\rnode{\colblockL}{above right}{}{90-\a:\r};
	
				\rnode{\colblockR}{below}{}{270+0.5*\a:\r};
				\rnode{\colblockR}{below}{}{270-0.5*\a:\r};
		
			\end{tikzpicture}
		\label{subfig:unisplit_odd}}
	\hspace{0.5em}
	\subfloat[$|\Phi| = 3$, uniperiodic.]{

			\begin{tikzpicture}[scale=0.5, transform shape, font = {\LARGE}]
				\def\r{\ray}
				\def\a{360/16}
				\def\s{360/3}
		
                \draw[draw=none] (0,0) circle (1.15*\r);
				\draw [very thick] (0,0) circle (\r);

				\foreach \i in {0,1,2} 
				{	\draw[] (90+\i*\s-\a:\r) -- (0:0); 
					\draw[] (90+\i*\s+\a:\r) -- (0:0); 
					\draw[dotted] (90+\i*\s:\r) -- (0:0); 
				}
				\node (a) at (90-0.5*\a:\r*0.6) {$\frac{2\pi}{n}$};
				\node (a) at (90+0.5*\a:\r*0.6) {$\frac{2\pi}{n}$};

				\rnode{\colregular}{below right}{}{90-\s:\r};
				\rnode{\colblockL}{below right}{}{90-\s+\a:\r};
				\rnode{\colblockR}{below right}{}{90-\s-\a:\r};
	
				\rnode{\colregular}{left}{}{90-\s*2:\r};
				\rnode{\colblockL}{below left}{}{90-\s*2+\a:\r};
				\rnode{\colblockR}{above left}{}{90-\s*2-\a:\r};
	
				\rnode{\colregular}{above}{}{90-\s*3:\r};
				\rnode{\colblockL}{above left}{}{90-\s*3+\a:\r};
				\rnode{\colblockR}{above right}{}{90-\s*3-\a:\r};
	
			\end{tikzpicture}
		\label{subfig:unisplit_uniper}}
	\hspace{0.5em}
	\subfloat[$|\Phi| = 4$, biperiodic without \colr{regular} robots.]{

			\begin{tikzpicture}[scale=0.5, transform shape, font = {\LARGE}]
				\def\r{\ray}
				\def\p{2}
				\def\a{360/24}
                \draw[draw=none] (0,0) circle (1.15*\r);
				\draw [very thick] (0,0) circle (\r);
				\draw[dotted] (90:\r) -- (270:\r); 
	
				\foreach \i in {0,1} 
				{
					\draw[] (90*\i+\a*0.5:\r) -- (180 + 90*\i+\a*0.5:\r); 
					\draw[] (90*\i-\a*0.5:\r) -- (180 + 90*\i-\a*0.5:\r); 
				}
	
				\node (a) at (90:\r*0.7) {$\frac{2\pi}{n}$};
	
				\rnode{\collead}{above left}{}{90+\p*\a:\r};
				\rnode{\collead}{above right}{}{90-\p*\a:\r};
	
				\rnode{\collead}{above left}{}{270+\p*\a:\r};
				\rnode{\collead}{above right}{}{270-\p*\a:\r};
				
				\foreach \i in {0,2} 
				{
					\rnode{\colblockR}{below right}{}{90*\i+0.5*\a:\r};
					\rnode{\colblockR}{below left}{}{90*\i-0.5*\a:\r};
					\draw[dotted] (90*\i:\r) -- (0:0); 
				}
				
				\foreach \i in {1,3} 
				{
					\rnode{\colblockL}{below right}{}{90*\i+0.5*\a:\r};
					\rnode{\colblockL}{below left}{}{90*\i-0.5*\a:\r};
					\draw[dotted] (90*\i:\r) -- (0:0); 
				}
	
			\end{tikzpicture}
		\label{subfig:unisplit_biper}}
	\hspace{0.5em}
	\subfloat[$|\Phi| = 6$, biperiodic with \colr{regular} robots.]{

			\begin{tikzpicture}[scale=0.5, transform shape, font = {\LARGE}]
				\def\r{\ray}
                    \def\a{360/21} 	
				\def\L{10}		
                \draw[draw=none] (0,0) circle (1.15*\r);
				\draw [very thick] (0,0) circle (\r);

				\foreach \i in {0,2,4} 
				{
					\def\S{60*\i-30}          
                        \def\E{60*\i+60-30}       
					\draw[dotted] (\S:\r) -- (\S+180:\r); 
					\draw[] ({\E+\a}:\r) -- ({\E+\a+180}:\r); 
					\draw[] ({\E+\a}:\r) -- ({\E+\a+180}:\r); 
					
					\draw[] (\S+\a:\r) -- (\S+\a+180:\r); 
					\draw[] (\S-\a:\r) -- (\S-\a+180:\r); 
					
					\rnode{\collead}{below}{}{\S+\L:\r};
					\rnode{\collead}{below}{}{\S-\L:\r};
					\rnode{\colregular}{below}{}{\S:\r};
					\rnode{\colblockL}{below}{}{\S+\a:\r};
					\rnode{\colblockL}{below}{}{{\S -\a}:\r};
					
					\rnode{\colblockR}{below}{}{{\E+\a}:\r};
					\rnode{\colblockR}{below}{}{{\E-\a}:\r};

				}
				\node (a) at (90-\a*0.5:\r*0.7) {$\frac{2\pi}{n}$};
                \node (a) at (90+\a*0.5:\r*0.7) {$\frac{2\pi}{n}$};
                \node (a) at (270-\a*0.5:\r*0.7) {$\frac{2\pi}{n}$};
                \node (a) at (270+\a*0.5:\r*0.7) {$\frac{2\pi}{n}$};

			\end{tikzpicture}
		\label{subfig:unisplit_biper_odd}}}
\caption{\conf{unisect} where $Cir$ is split into uniform sectors (here delimited by dotted lines) using the light color: \colr{regular} (here \textit{\colregular}), \colr{left} (here \textit{\colblockL}),  \colr{right} (here \textit{\colblockR}), and \colr{leader} (here \textit{\collead}). All the other robots are omitted.}
\label{fig:uni_split}
\end{figure}
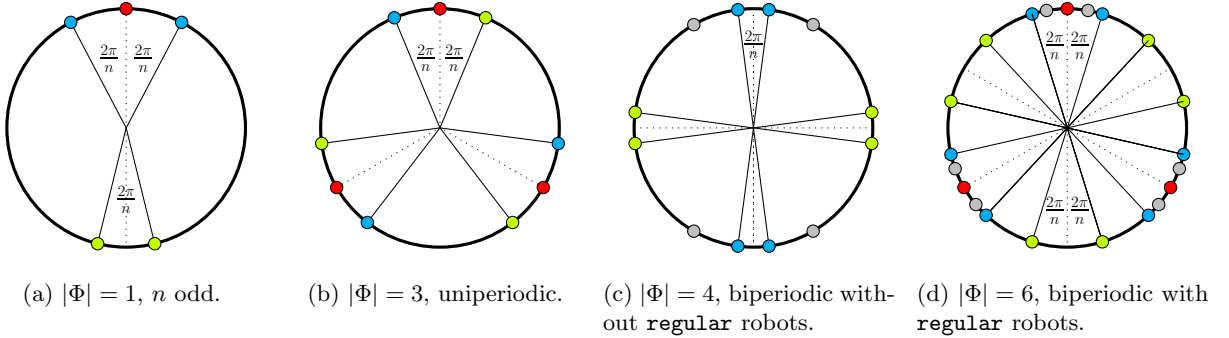 
}{}
\paragraph*{Case $|\Phi| = 1$:} In this case, we form two uniform sectors $\Upsilon_0,\Upsilon_1$. There are two subcases.
    \begin{itemize}
    \item{}\textbf{$n$ odd:}
    	It is possible to unambiguously select a diameter $d$ passing through only one robot $r$ and splitting $Cir$ into two size-balanced halves.
        These two $Cir$-halves correspond to $\Upsilon_0,\Upsilon_1$.
        The robot $r$ changes its color to \colr{regular}. 
        Then, for each $\Upsilon_i$, let $U_1$ be the uniform position on the arc $\Upsilon_i$ such that $\angle rOU_1 = {2\pi }/{n}$ and $U_{{(n-1)}/{2}}$ be the uniform position such that $\angle rOU_{{(n-1)}/{2}} = {(n-1)\pi}/{n}$. 
        Two robots from each $\Upsilon_i$ are elected and made to move to $U_1$ and $U_{{(n-1)}/{2}}$, setting their colors as \colr{left} and \colr{right}, respectively. See \Cref{subfig:unisplit_odd}.
    \item{}\textbf{$n$ even:}
     	It is possible to unambiguously select a diameter $d$ either \emph{(i)} passing through two antipodal robots $r$ and $r'$, and splitting $Cir$ into two size-balanced halves, or \emph{(ii)} passing through only one robot $r$ and splitting $Cir$ into two halves with $n/2$ and $n/2-1$ robots. 
        This diameter acts as the delimiter for the uniform sectors.
        The robot $r$ turns its color into \colr{regular}.
        When only one robot is located on $d$, then a robot $r'$ from the half containing $n/2$ robots reaches the empty endpoint of $d$.
        Similar to the odd case, two robots are elected in each sector, and reach the uniform positions $U_1$ (such that $\angle rOU_1 = 2\pi/n$) and $U_{(n-2)/2}$ (such that $\angle r'OU_{(n-2)/2} = 2\pi/n$), setting their colors as \colr{left} and \colr{right}, respectively.
        Finally, $r'$ sets its color to \colr{regular}.
    \end{itemize}

\paragraph*{Case $|\Phi| > 1$:}
        In this case, we form $|\Phi|$ uniform sectors.
        Let $(r_0, r_1, \dots, r_{|\Phi|-1})$ be the robots from $\Phi$ in a cyclic order.
        In the following, we consider indices $i\in\{0,\dots,|\Phi|-1\}$ modulo $n$.
        We distinguish between uniperiodic and biperiodic configurations as defined in \Cref{subsection:circleclasses}. 
        \begin{itemize}
    	\item{}\textbf{Uniperiodic:} 
        All the robots in $\Phi$ set their color as \colr{regular}. They define the boundaries of the uniform sectors so that each $\Upsilon_i$ is delimited by $\wideparen{r_ir_{i+1}}$.
        In a uniperiodic configuration, all the uniform sectors have the same chirality, w.l.o.g. clockwise.
        Now, inside each $\Upsilon_i$, two robots have to move and update their color to fix this chirality.
        Let $q$ be the number of \colr{periodic} robots in the uniform sectors.
        Let $U_1$ ($U_q$, resp.) be the uniform positions on the arc of $\Upsilon_i$ such that $\angle r_iOU_1 = 2\pi/n$ ($\angle U_qOr_{i+1} = 2\pi/n$, resp.).
        Within each $\Upsilon_i$, two \colr{periodic} robots are elected to be intended for the points $U_1$ and $U_q$.
        In at most two epochs, the elected \colr{periodic} robots reach the uniform positions $U_1$ and $U_q$, setting their colors as \colr{left} and \colr{right}, respectively.
        Note that, in this case, robots always enjoy complete visibility of the swarm, even if some robots are moving.
        See \Cref{subfig:unisplit_uniper}.
        
    	 \item{}\textbf{Biperiodic:}
            All the robots in $\Phi$ change their color to \colr{leader}.
            In this case, $|\Phi|$ is always even, and leaders never lay on the axes of reflective symmetry of the configuration.
    	 Let $B_i$ be the midpoint of the arc $\wideparen{r_{i-1}r_{i}}$ for each $i$.
        We define $\Upsilon_i$ as the sector whose arc is $\wideparen{B_iB_{i+1}}$.
        If a robot lies on a boundary $B_i$ for some $i$, it sets its color as \colr{regular}.
        Notice that each uniform sector $\Upsilon_i$ has a central angle $2\pi/|\Phi|$ and contains the \colr{leader}-colored robot $r_i$ (not necessary on the middle point of the arc of $\Upsilon_i$).
         In a biperiodic configuration, two adjacent sectors have opposite chirality. 
         Let us consider a uniform sector $\Upsilon_i$, and suppose that $B_i$ is the left boundary according to the chirality of $\Upsilon_i$.
        Let $U_1$ ($U_q$, resp.) be the point on the arc $\Upsilon_i$ such that it forms a central angle with $B_i$ ($B_{i+1}$, resp.) equal to \emph{(i)} $\frac{2\pi}{n}$ if a robot lies on $B_i$ ($B_{i+1}$, resp.), \emph{(ii)} $\frac{\pi}{n}$ otherwise.
         In the first epoch, two robots are elected from each sector $\Upsilon_i$ and they set their color as \colr{pre\_left} and \colr{pre\_right} so that the \colr{pre\_left} (\colr{pre\_right}, resp.) robot is intended to travel towards $U_1$ ($U_q$, resp.).
        If the elected \colr{pre\_left} (\colr{pre\_right}, resp.) robot already lies on $U_1$ ($U_q$, resp.), then it can directly assume the color \colr{left} (\colr{right}, resp.).
	Such two robots are unambiguously elected within each $\Upsilon_i$ among the closest adjacent \colr{periodic} robots to the points $U_1$ and $U_q$.
         In the second epoch, all the \colr{pre\_left} robots reach the $U_1$ point on the corresponding sector and change their color to \colr{left}.
           In the third epoch, all the \colr{pre\_right} robots reach the $U_q$ point on the corresponding sector and set their color as \colr{right}.
          See \Cref{subfig:unisplit_biper,subfig:unisplit_biper_odd}.
          \end{itemize}

Let \conf{unisect} be the configuration obtained at the end of \proc{Split} (see \Cref{fig:uni_split}).
In \conf{unisect}, all the $n$ robots lay on $Cir$, and they are partitioned in $k\geq 2$ uniform sectors $\Upsilon_0,\dots, \Upsilon_{k-1}$. 
All the robots laying on the boundaries of uniform sectors are \colr{regular}-colored.
Each $\Upsilon_i$ contains a \colr{left} robot and a \colr{right} robot, which set the chirality of $\Upsilon_i$.
All the robots that are not \{\colr{regular}, \colr{left}, \colr{right}\}-colored change their color to \colr{split}.
Note that all the \{\colr{regular}, \colr{left}, \colr{right}\}-colored robots will not move anymore: they already lay on uniform positions.
From now on, no robot will move out from its original uniform sector, and each of the subsequent procedures will run in parallel inside each uniform sector.
See \Cref{sec:correctness_procedure_split} for the correctness proofs and complexity analysis of \proc{Split}.
\begin{lemma}[\textbf{Split}]
\label{lemma:split}
Starting from \conf{periodic}, \conf{unisect} is reached in $O(1)$ epochs using $O(1)$ colors under {\asynch}, avoiding collisions, guaranteeing that robots always operate within $SEC(\config_{periodic})$. 
Robots are evenly partitioned in uniform sectors, whose chirality is set by \colr{right}- and \colr{left}-colored robots.
\end{lemma}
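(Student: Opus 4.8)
The plan is to establish the five assertions of the lemma — that the resulting \conf{unisect} has sectors of equal arc length, equal robot count, and chirality fixed by the guards; that the procedure runs in $O(1)$ epochs; that it uses $O(1)$ colors; that it is collision-free; and that $SEC(\config)\enclosed SEC(\config_{periodic})$ throughout — by a case analysis that mirrors the four branches of the procedure: $|\Phi|=1$ with $n$ odd, $|\Phi|=1$ with $n$ even, $|\Phi|>1$ uniperiodic, and $|\Phi|>1$ biperiodic. For each branch I would argue three things in order: that the sector boundaries are determined unambiguously from any (possibly stale) snapshot, that the two guards elected per sector are determined unambiguously, and that after the prescribed (at most three) epochs the three defining properties of uniform sectors hold.

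The well-definedness together with the equal-arc and size-balance properties follow from the symmetry classification of \Cref{subsection:circleclasses}. When $|\Phi|=1$ the configuration is asymmetric, so the lexicographic minimum $\hat{\mu}$ of the angular sequences is attained at a unique robot; this singles out the diameter $d$ and hence the two halves, which are size-balanced by direct vertex count ($(n-1)/2$ robots each when $n$ is odd, and $(n-2)/2$ interior robots each when $n$ is even, after $r'$ relocates to the empty endpoint). When $|\Phi|>1$ the configuration admits an order-$|\Phi|$ rotational symmetry, so the $|\Phi|$ sectors of central angle $2\pi/|\Phi|$ defined by the robots of $\Phi$ (or by the arc midpoints $B_i$ in the biperiodic case) automatically have equal arc length and identical interior count $q$ up to their boundary robots. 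For the uniperiodic case I would derive chirality for free: if some sector arc $\wideparen{r_ir_{i+1}}$ were mirror-symmetric, composing this reflection with the rotational symmetry would make the whole configuration reflective, contradicting uniperiodicity; hence each sector is already chiral and the guards merely mark it.

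I expect the biperiodic case to be the main obstacle, because there the configuration is reflective and a raw sector arc may be symmetric about an axis, so chirality must be manufactured rather than read off. The crux is to show that electing as guards the \colr{periodic} robots closest to $U_1$ and $U_q$ — whose offsets from the boundaries are set to $2\pi/n$ or $\pi/n$ according to whether a robot already occupies that boundary — together with their distinct \colr{left}/\colr{right} colors, breaks the mirror symmetry of every sector, and that this election stays unambiguous even while guards of other sectors are in transit. Here the \colr{pre\_left}/\colr{pre\_right} synchronization is essential: I would argue that no elected robot commits to moving until every guard in its sector has adopted its \colr{pre\_*} color, so that a robot recomputing on a stale snapshot still identifies the same two guards and the same targets; and that sequencing all \colr{left} moves strictly before all \colr{right} moves keeps their straight-line (chord) trajectories from ever coexisting.

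The remaining three claims are comparatively routine. Each branch terminates in a constant number of epochs and draws colors from the fixed palette $\{\colr{regular},\colr{leader},\colr{left},\colr{right},\colr{split}\}$ together with their \colr{pre\_*} and \colr{to\_<color>} variants, giving the $O(1)$ time and color bounds. For collision-freedom I would note that every mover travels to a distinct empty uniform position inside its own sector along a chord whose interior lies strictly inside $Cir$; since all stationary robots sit on the perimeter, such a chord meets them only at its endpoints, and the closest-guard election plus the epoch sequencing (and, in the even $|\Phi|=1$ case, relocating $r'$ before the guards) ensure that no two concurrent paths intersect, while the \colr{to\_<color>} convention lets robots recognize movers under {\asynch}. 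Finally, containment is immediate: all robots of \conf{periodic} lie on $Cir$, so $SEC(\config_{periodic})=Cir$, and every target is a uniform position on $Cir$ reached by a chord interior to $Cir$; hence $SEC(\config)\enclosed Cir$ at every instant of the procedure.
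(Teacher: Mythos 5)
There are two genuine gaps. First, in the case $|\Phi|=1$ you assert that the unique robot attaining $\hat{\mu}$ ``singles out the diameter $d$'' and that the two halves are then ``size-balanced by direct vertex count.'' This is not true: a diameter through an arbitrary distinguished robot has no reason to leave $(n-1)/2$ robots on each side, and the robot of $\Phi$ has no balancing property. The existence of a diameter through exactly one robot (or through two antipodal robots, in the even case) that splits the remaining robots evenly is a separate combinatorial fact, which the paper isolates as \Cref{lemma:diam_splitting} and proves by induction on odd $n$, removing a pair of robots with minimal \emph{overlap arcs}. You also misstate the structural consequence of $|\Phi|=1$: the paper's \Cref{lemma:phi1} shows the configuration is either asymmetric \emph{or} mirror-symmetric with exactly one axis, and the two subcases are handled differently (in the reflective subcase the unique axis itself provides the split, and chirality of the halves follows from the uniqueness of that axis). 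Your proposal silently assumes the asymmetric subcase.

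Second, your treatment of asynchrony reduces to ``synchronize via \colr{pre\_*} colors and sequence \colr{left} moves before \colr{right} moves,'' but under the opaque model the real difficulty is obstructed visibility while guards of \emph{other} sectors are in transit: a moving \colr{to\_left} robot can hide a \colr{leader} robot from a \colr{pre\_left} robot that is recomputing its destination, so the latter may be unable to see the very robots that define the sector boundaries and the value of $n$. The paper's \Cref{lemma:to_left_compute_destination} does the actual work here: using \Cref{obs:collinearity_circle} (a robot moving along a chord of $Cir$ can hide at most one robot on each of the two arcs it cuts) and the fact that at most one robot moves per sector, it shows that at least one of the \colr{leader}/\colr{pre\_right} robots of each adjacent sector remains visible, from which the sector boundaries, the hidden robots' positions, and $n$ can all be reconstructed. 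Without an argument of this kind the claim that the election and the target computation ``stay unambiguous'' is unsupported. The rest of your outline (uniperiodic chirality via composing a putative sector reflection with the rotational symmetry, containment in $Cir$ via chord trajectories, the $O(1)$ time and color counts) is consistent with the paper.
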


\subsection{\proc{Odd Block} ({\bf Transition} \mycircled{5})} 
\label{sec:procedure_oddblock}
Given \conf{unisect}, \proc{Odd Block} acts on each uniform sector independently.
As the name suggests, we build an \emph{odd block} inside each uniform sector $\Upsilon_i$.
Before describing the details of \proc{Odd Block}, we depict the target configuration that we want to achieve at the end of this procedure. 
\subparagraph*{Odd block.}
An odd block of size $2l+1$, for an integer $l$, is a circular sector of $Cir$ whose central angle measures $\frac{4l\pi}{n}$.
When $l$ is known, we will refer to the odd block as $l$-block.
An $l$-block contains $2l+1$ uniform positions along its arc, which we denote as $\barc$.
Let $U_1, U_2, \dots, U_{2l+1}$ be the uniform positions of this block, where $U_1$ and $U_{2l+1}$ lay on the endpoints of $\barc$.
The block contains two robots on $U_1$ and $U_{2l+1}$.
We will refer to such two robots as the \emph{left} and the \emph{right guard}.
A robot, called the \emph{median robot}, is located at position $U_{l+1}$.
Let $\chord$ be the block chord $\overline{U_1U_{2l+1}}$.
At the end of \proc{Odd Block}, the guards and the median robot are colored as \colr{blockL} (left), \colr{blockR} (right), and \colr{median}, respectively. 
These robots do not move anymore.
Instead, all the other $2l-2$ robots will have moved to distinct points of $\chord$, with the \colr{chord} color. 
Through the subsequent procedures (\proc{Small Circle} and \proc{Slice}), the \colr{chord} robots will uniformly arrange themselves on $\barc$, in order to cover the unoccupied uniform positions $U_2,\dots , U_{l}, U_{l+2}, \dots, U_{2l}$.
An odd block is built \emph{inside} each uniform sector $\Upsilon_i$.
The guards of the block fix the \emph{block chirality}, which is local and is the same as the chirality of the related uniform sector.

\ifthenelse{\boolean{usefigures}}{
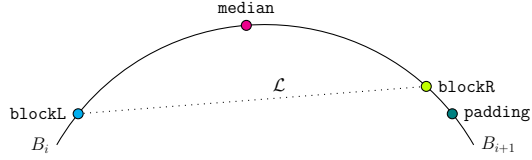
\begin{figure}[!h]
\begin{center}
		\begin{tikzpicture}[scale=0.4, transform shape, font = {\LARGE}]
			\def\r{8cm}		
			\def\a{360/40}
			\draw[]  ({\r * cos(30)}, {\r*sin(30)}) arc[start angle=30, end angle=150,radius=\r] ;
			\draw[dotted] (30+2*\a:\r) -- (150-\a:\r); 

			\node[label=180:$B_i$] (Bi) at (150:\r) {};
			\node[label=0:$B_{i+1}$] (Bi1) at (30:\r) {};
			\rnode{\colmedian}{above}{\colr{median}}{90+\a*0.5:\r};
			\rnode{\colblockL}{left}{\colr{blockL}}{150-\a:\r};
			\rnode{\colblockR}{right}{\colr{blockR}}{30+2*\a:\r};
			\rnode{\colpadding}{right}{\colr{padding}}{30+\a:\r};

   			\node[label=0:$\chord$] (Bi1) at (90:\r*0.75) {};

		\end{tikzpicture}
\end{center}
\caption{Odd block formation if $q$ is even.}
\label{fig:oddblock_padding}
\end{figure}}{}

\ifthenelse{\boolean{usefigures}}{
\begin{figure}[!h]
\resizebox{\textwidth}{!}{
\subfloat[Stage 2: migration of \colr{out\_chord} robots to $\chord$ using {\BDCP}.]{
    \centering
    \begin{tikzpicture}[scale=0.5, transform shape, font = {\LARGE}]
        \def\r{9.5cm}		
        \def\a{360/40}
        \draw[]  ({\r * cos(30)}, {\r*sin(30)}) arc[start angle=30, end angle=150,radius=\r] ;
        \draw[dotted] (30+2*\a:\r) -- (150-\a:\r); 

        \node[label=180:$B_i$] (Bi) at (150:\r) {};
        \node[label=0:$B_{i+1}$] (Bi1) at (30:\r) {};
        \rnode{\colmedian}{above left}{}{90+\a*0.5:\r};
        \rnode{\colpadding}{right}{\colr{padding}}{30+\a:\r};

        \rnode{\colblockL}{left}{\colr{blockL}}{150-\a:\r};
        \rnode{\colblockR}{right}{\colr{blockR}}{30+2*\a:\r};

        \foreach \i/\s in {2/0.3,6/0.3,116/0.4,118/0.4}
            {\draw[thin, dotted] (30+\i:\r) -- (90:\r);
            \draw [dashed, -latex,shorten >= \r*\s]  (30+\i:\r) -- (90:\r);
            \rnode{\coloutchord}{right}{}{30+\i:\r};
            }

        \foreach \i in {28,47,70, 55,78, 84, 97,100}
            {\rnode{\colinchord}{right}{}{30+\i:\r};}
        \rnode{\colbeacon}{right}{}{150-\a*1.28:\r*0.96};
        \rnode{\colbeacon}{right}{}{30+2.25*\a:\r*0.96};
        
            \rnode{green}{above}{\colr{mid}}{90:\r};

    \end{tikzpicture}
\label{fig:oddblock_stage2}}
\hfill
\subfloat[Stage 3: \colr{in\_chord} robot in position $F$ reaches $\chord$ without colliding.]{
		\begin{tikzpicture}[scale=0.5, transform shape, font = {\LARGE}]
			\def\r{9.5cm}		
			\def\a{360/40}
			\draw[]  ({\r * cos(30)}, {\r*sin(30)}) arc[start angle=30, end angle=150,radius=\r] ;
			\draw[dotted] (30+2*\a:\r) -- (150-\a:\r); 

   			\rnode{\colpadding}{right}{\colr{padding}}{30+\a:\r};

			\node[label=180:$B_i$] (Bi) at (150:\r) {};
			\node[label=0:$B_{i+1}$] (Bi1) at (30:\r) {};
			\rnode{\colmedian}{above left}{}{90+\a*0.5:\r};

			\rnode{\colblockL}{left}{\colr{blockL}}{150-\a:\r};
			\rnode{\colblockR}{right}{\colr{blockR}}{30+2*\a:\r};

                \rnode{\colbeacon}{right}{}{150-\a*1.28:\r*0.96};
                \rnode{\colbeacon}{right}{}{30+2.25*\a:\r*0.96};
                
                \rnode{\colbeacon!70}{right}{}{150-\a*1.5:\r*0.93};
                \rnode{\colbeacon!70}{below right}{$F''$}{150-\a*1.7:\r*0.905};

                \rnode{\colbeacon!70}{right}{}{30+2.55*\a:\r*0.92};
                \rnode{\colbeacon!70}{right}{}{30+3*\a:\r*0.87};

                \draw[thin, dashed, -latex] (30+100:\r) -- (30+100+3:\r*0.9);

                \foreach \i in {28,47,70, 55,60,78, 84}
                    {\rnode{\colinchord}{above}{}{30+\i:\r};}

                \rnode{\colinchord}{above}{$Z$}{30+97:\r};
                \rnode{\colinchord}{left}{$F$}{30+100:\r};

		\end{tikzpicture}
\label{fig:oddblock_stage3}}
}
\caption{Stages in \proc{Odd Block}.}
\label{fig:oddblock_stages}
\end{figure}
}{}
Given a uniform sector $\Upsilon_i$ in \conf{unisect}, \proc{Odd Block} works in three stages.  
\begin{itemize}
    \item{}\textbf{Stage 1.} Defining the guards of the odd block within $\Upsilon_i$ by adding a \colr{padding} robot if necessary.
    \item{}\textbf{Stage 2.} Moving the robots located between the boundaries of $\Upsilon_i$ and $U_1$ (resp. $U_{2l+1}$) to the block chord $\chord$.
    \item{}\textbf{Stage 3.} Moving the robots from $\barc$ to the block chord $\chord$.
\end{itemize}

Let $\Upsilon_i$ be a uniform sector in \conf{unisect}, and let $q$ be the number of robots in $\Upsilon_i$ (except for those on its boundaries).
Indeed, all the uniform sectors contain $q$ robots.
Let $U_1,\dots, U_q$ be the uniform positions along the arc of $\Upsilon_i$, such that the \colr{left} robot lies on $U_1$ whereas the \colr{right} robot lies on $U_q$.
We describe separately below how we perform the movement of robots in each of the five stages.
All the robots involved in the following stages belong to $\Upsilon_i$.
Note that some robots use the \colr{pre\_<color>} colors to synchronize themselves in the various uniform sectors before starting to move (as explained at the beginning of \Cref{section:componentIII}).

\subparagraph*{Stage 1.}  
If $q$ is even, the \colr{split} robot closest to $U_{q-1}$ sets its color as \colr{pre\_blockR} and, after the synchronization with all the other \colr{pre\_blockR} in the other uniform sectors, it reaches $U_{q-1}$ where $\angle U_{q-1}OU_q = 2\pi/n$, and it changes its color to \colr{blockR}.
The \colr{right} robot changes its color to \colr{padding}. The \colr{padding} robot will no longer move.
The \colr{left} robot changes its color to \colr{blockL}.  
If $q$ is odd, then the robots with color \colr{left} and \colr{right} change their colors to \colr{blockL} and \colr{blockR}, respectively. 
In both cases, \colr{blockL} and \colr{blockR} robots act as the left and right guards for each odd block. 
The block chord $\chord$ is defined as the chord joining the guards, and the block arc $\barc$ is the arc cut by $\chord$.
Then, the \colr{split} robot closest to the midpoint of $\barc$ elects itself as \colr{pre\_median} and then moves to the midpoint, changing its color to \colr{median} (see \Cref{fig:oddblock_padding}).

\subparagraph*{Stage 2.}
Let us consider the $l$-odd block built within $\Upsilon_i$.
Let $B_i$ and $B_{i+1}$ be the boundaries of $\Upsilon_i$, and let $U_1$ and $U_{2l+1}$  be the boundaries of $\Gamma$ (i.e. where the guards lay on).
All the \colr{split} robots which lay on $\Gamma$ change their color to \colr{in\_chord}, otherwise they change into \colr{out\_chord} (i.e. if they lay on $\wideparen{B_iU_1}$ and on $\wideparen{U_{2l+1}B_{i+1}}$).
This stage aims to make all the \colr{out\_chord} robots migrate to $\chord$.
If no robot with color \colr{out\_chord} exists in $\Upsilon_i$, then Stage 3 begins directly.
Let $M$ be the middle point of the arc of $\Upsilon_i$.
If the \colr{median} robot does not lay on $M$, then the \colr{split} robot closest to $M$ elects itself as \colr{pre\_mid} and then it moves to $M$ changing its color to \colr{mid}.
Otherwise, the \colr{median} robot sets its color to \colr{mid}.
Notice that, for $q \geq 12$, $M$ always lies on $\Gamma$.
Now, we use {\BDCP} to make \colr{out\_chord} robots migrate to $\chord$.
Firstly, we set two beacons on $\chord$: one left beacon next to $U_1$ and one right beacon next to $U_{2l+1}$. Such two beacons together with the guards on $U_1$ and $U_{2l+1}$ form the 4 beacons needed to implement {\BDCP}.
Let us show how to select the left beacon $b$ (a complement strategy holds for the selection of the right beacon).
If there exists an \colr{out\_chord} robot on $\wideparen{B_iU_1}$, then the closest \colr{out\_chord} robot to $U_1$ is elected as $b$. In this case, $b$ changes its color to \colr{pre\_beacon} and moves to the intersection of the line segments $\overline{DM}$ and $\chord$, where $D$ was the position of $b$.
Otherwise, $b$ is the farthest robot to $U_1$ selected from the other \colr{out\_chord} robots on the arc of $\Upsilon_i$. 

In this case, $b$ reaches a position on $\chord$ so that it can play the role of the left beacon and it does not create collinearities with the \colr{mid} robot. Once on $\chord$, $b$ sets its color as \colr{beacon}.
An analogous procedure is repeated to choose the right beacon close to $U_{2l+1}$. 
The two \colr{beacon} robots together with the \colr{blockL} and \colr{blockR} robots act as the four beacons for the {\BDCP} procedure (see \Cref{fig:oddblock_stage2}).
By {\BDCP}, each \colr{out\_chord} robot $r$ located at $E$ heads to $\chord$ to the point intersecting $\overline{EM}$ and $\chord$. 
Once $r$ reaches $\chord$, it changes its color to \colr{chord}.
Note that the setting of the \colr{mid} robot at the midpoint of the arc $\Upsilon_i$ was necessary to make $r$ select the \colr{mid} robot belonging to $\Upsilon_i$ (in fact, no ambiguity arises since the \colr{mid} robot of $\Upsilon_i$ is always closer to $r$ than the other \colr{mid} robots).
Stage 2 ends when no more \colr{out\_chord} robots are located in $\Upsilon_i$.

\subparagraph*{Stage 3.}
In this stage, we apply {\BDCP} to make the \colr{in\_chord} robots migrate from $\Gamma$ to $\chord$. 
From now on, the \colr{mid} robot is no longer necessary, so it changes its color to \colr{in\_chord} if the \colr{median} exists on the block arc.
Otherwise, the \colr{mid} robot resets its color to \colr{median}.
At the end of Stage 2, there are \colr{beacon}- and \colr{chord}-colored robots on $\chord$. 
Such robots can act as beacons for fixing the chord $\chord$ for the \colr{in\_chord} robots.
Let $r$ be a \colr{in\_chord} robot located at $F$ on $\Gamma$. 
Let $F'$ be the projection of $F$ on $\chord$, i.e., $\overline{FF'}$ is perpendicular to $\chord$. 
Let $Z$ be the robot location such that the projection $Z'$ on $\chord$ is the closest to $F'$.
If $F'$ is already occupied by a robot, then the revised destination is computed as $F''$ where $|\overline{F'F''}| =  |\overline{F'Z'}|/3$ (see \Cref{fig:oddblock_stage3}).
Once $r$ reaches $\chord$, it changes its color to \colr{chord}.
Note that, if there were no robots colored \colr{out\_chord} in Stage 2, then the closest \colr{in\_chord} colored robots to $U_1$ and $U_{2l+1}$ move to their projection to act as beacons at the beginning of Stage 3.
This ensures that all robots reach $\chord$ in $O(1)$ time by implementing {\BDCP}.

See \Cref{sec:correctness_procedure_oddblock} for the correctness proofs of \proc{Odd Block}.

\begin{lemma}[\textbf{Odd Block}]\label{lemma:blockformation}
Starting from \conf{unisect}, \conf{oddblock} is reached in $O(1)$ epochs using $O(1)$ colors under {\asynch}, avoiding collisions, guaranteeing robots always operate within $SEC(\config_{unisect})$. 
In each odd block, the guards are colored as \colr{blockL} and \colr{blockR}, the median robot is colored as \colr{median}, whereas all the other robots are located on the block chord $\chord$ with color \colr{chord}.
\end{lemma}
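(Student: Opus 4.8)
The plan is to prove the lemma per uniform sector: by \Cref{lemma:split}, the sectors $\Upsilon_0,\dots,\Upsilon_{k-1}$ partition $Cir$, no robot ever leaves its own sector, and the boundary robots (\colr{regular}, \colr{left}, \colr{right}) are already pinned to uniform positions. Hence the $k$ parallel invocations of \proc{Odd Block} do not interfere, and it suffices to establish three properties inside a single representative sector $\Upsilon_i$ with $q$ interior robots: correct final placement, collision-freeness, and containment in $SEC(\config_{unisect}) = Cir$. Throughout, I would invoke the \colr{pre\_<color>} and \colr{to\_<color>} conventions from the start of \Cref{section:componentIII} so that elections and movements are recomputed consistently even when some elected robots are already in transit; these fold into an invariant asserting that at the start of each stage every robot can correctly recompute its role from the current snapshot.

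First I would treat Stage 1. The parity of $q$ decides whether the block is already odd-sized: if $q$ is odd the \colr{left}/\colr{right} robots simply recolor to \colr{blockL}/\colr{blockR}; if $q$ is even I would show the \colr{split} robot closest to $U_{q-1}$ is unambiguously elected, reaches $U_{q-1}$ (at angular distance $2\pi/n$ from $U_q$), and becomes \colr{blockR}, while the former \colr{right} robot becomes the immovable \colr{padding} robot. This leaves an odd count $2l+1$ of uniform positions strictly between the guards, matching the definition of an $l$-block. The guards then determine $\chord$ and $\barc$, and the \colr{split} robot closest to the midpoint of $\barc$ is elected, travels to $U_{l+1}$, and becomes \colr{median}. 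Every destination here lies on the arc of $\Upsilon_i$, hence on $Cir$, so containment is immediate; and as the at most two moving robots target distinct uniform positions in separate regions of the arc, Stage 1 is collision-free.

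Stages 2 and 3 both reduce to invocations of \BDCP\ on the chord (a line, so $k=2$), so the heart of the argument is to verify that the input assumptions of \Cref{alg:bdcp} hold; collision-freeness and the $O(\log k)=O(1)$-epoch bound then follow from \cite{SharmaVT17}. For Stage 2 I would set the \colr{mid} robot at the arc midpoint $M$, argue that the two elected \colr{beacon} robots together with the guards supply the required $2k=4$ beacons on $\chord$, and check that every \colr{out\_chord} robot lies on one side of $\chord$, sees the four beacons, and is routed to the point where $\overline{EM}$ meets $\chord$ along a path crossing $\chord$ exactly once. The subtle point is disambiguation: an in-transit robot must pick the \colr{mid} robot of its own sector, which I would justify by noting that this robot is always the closest \colr{mid} robot to $E$. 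For Stage 3 the \colr{beacon}- and \colr{chord}-colored robots already on $\chord$ act as the \BDCP\ beacons, and each \colr{in\_chord} robot at $F$ targets its projection $F'$, or the shifted point $F''$ with $|\overline{F'F''}|=|\overline{F'Z'}|/3$ when $F'$ is occupied; I would verify this offset keeps all chord targets distinct and the descent paths pairwise non-crossing.

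The step I expect to be the main obstacle is the collision-and-visibility analysis under \asynch\ in Stage 3: robots descend onto $\chord$ while others are still moving and while the guards and median remain on the arc, so I must exclude both path intersections within concurrent LCM cycles and transient collinearities that could hide a beacon from a descending robot. I would handle this by showing that (i) projection onto $\chord$ is order-preserving, so the perpendicular descent paths $\overline{FF'}$ are parallel and non-crossing; (ii) the $1/3$-offset rule always yields a free, still order-consistent target when $F'$ is taken; and (iii) each descending robot keeps at least $k=2$ beacons in view throughout its motion, so by the \BDCP\ guarantee it can always recompute its path. Finally, because every target in the three stages lies on the arc of $\Upsilon_i$ or on the chord $\chord$ — both inside the disk bounded by $Cir = SEC(\config_{unisect})$ — the computational-SEC containment follows, and the constant number of stages together with the $O(1)$-epoch, $O(1)$-color cost of each \BDCP\ invocation gives the claimed $O(1)$-time and $O(1)$-color bounds.
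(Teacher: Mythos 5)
Your overall decomposition (per-sector analysis justified by \Cref{lemma:split}, Stage~1 by unambiguous election under the \colr{pre\_}/\colr{to\_} conventions, Stages~2--3 by verifying the input assumptions of \Cref{alg:bdcp}, containment in $Cir$ because all targets lie on the sector arc or on $\chord$) matches the paper's proof in \Cref{sec:correctness_procedure_oddblock} almost step for step, including the disambiguation of the \colr{mid} robot by proximity and the $|\overline{F'F''}|=|\overline{F'Z'}|/3$ offset in Stage~3.

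There is, however, one genuine gap. In Stage~2 you say you would ``check that every \colr{out\_chord} robot \dots is routed to the point where $\overline{EM}$ meets $\chord$ along a path crossing $\chord$ exactly once,'' but you never establish the geometric fact that makes this true: the midpoint $M$ of the arc of $\Upsilon_i$ must lie on the block arc $\Gamma$, i.e.\ on the opposite side of $\chord$ from the \colr{out\_chord} robots (which sit on $\wideparen{B_iU_1}$ and $\wideparen{U_{2l+1}B_{i+1}}$). If $M$ fell outside $\Gamma$, the segment $\overline{EM}$ need not intersect $\chord$ at all, the \colr{mid} robot could not be placed, and the whole {\BDCP} routing of Stage~2 would break. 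The paper proves this as a standalone lemma: the central angle of the block satisfies $\angle U_1OU_{2l+1}\geq \angle B_iOB_{i+1}-\frac{6\pi}{n}$ (losing at most $2\pi/n$ on the left for the left guard and $4\pi/n$ on the right for the right guard plus the \colr{padding} robot), and this is at least half of $\angle B_iOB_{i+1}$ precisely because $q\geq 12$ gives $\angle B_iOB_{i+1}\geq 12\cdot\frac{2\pi}{n}$. Your proposal never invokes the hypothesis $q\geq 12$ anywhere, which is a strong signal that this step is missing: it is the only place in the lemma where that bound is used, and it is exactly why the algorithm falls back to \proc{Sequential Match} for $q<12$. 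You should add this computation (and, as a minor point, handle the degenerate case where no \colr{out\_chord} robot exists, in which the two \colr{in\_chord} robots closest to $U_1$ and $U_{2l+1}$ must first descend to $\chord$ to supply the beacons for Stage~3).
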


\subsection{\proc{Small Circle} ({Transition} \mycircled{6})}\label{sec:procedure_smallcircle}
Consider an $l$-block in \conf{oddblock}. 
W.l.o.g., let the block chirality be clockwise.
First, the left guard, the median robot, and the right guard change their color to \colr{scL}, \colr{scMedian}, and \colr{scR}, respectively.
Let $\diam$ be the line segment connecting $U_{l+1}$ to the projection of $U_{l+1}$ on the block chord $\chord$.
We define the \emph{small circle} $\SC$ as the circle such that $\diam$ is its diameter (i.e. $U_{l+1}$ lays on $\SC$ and $\chord$ is its tangent). 
We say that $\diam$ splits $\SC$ into two halves $\SC_w$ (left) and $\SC_e$ (right). 
We will refer to $\diam$ as the \emph{median diameter}. 
See \Cref{fig:smallcircle_init_config}.

\ifthenelse{\boolean{usefigures}}{
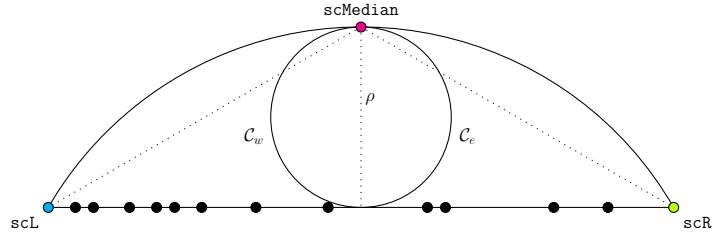
\begin{figure}[!h]
\begin{center}

		\begin{tikzpicture}[scale=0.4, transform shape, font = {\LARGE}]
			\def\r{3cm} 
			\draw[fill=yellow!0] (0,-\r) -- ({sqrt(12)*\r}, -\r) arc[start angle={30}, end angle={150},radius=\r*4] -- (0, -\r);
			\draw [] (0,0) circle (\r);
			\draw [thin, dotted] (90:\r) -- (270:\r);
			\node (a) at (\r*0.1,\r*0.2) {$\diam$};
			
			\node (a) at (190:\r*1.2) {$\SC_w$};
			\node (a) at (-10:\r*1.2) {$\SC_e$};

			\draw [thin, dotted] (90:\r) -- ({sqrt(12)*\r}, -\r);
			\draw [thin, dotted] (90:\r) -- (-{sqrt(12)*\r}, -\r);
							
			\rnode{\colblockL}{below left}{\colr{scL}}{-{sqrt(12)*\r}, -\r};
			\rnode{\colblockR}{below right}{\colr{scR}}{{sqrt(12)*\r}, -\r};
			
			\rnode{\colmedian}{above}{\colr{scMedian}}{90:\r};
			
			\foreach \a in {0.3, 0.5, 0.9, 1.2, 1.4, 1.7, 2.3, 3.1, 4.2,4.4, 5.6, 6.2}
				{\rnode{\colblock}{below}{}{-{sqrt(12)*\r + \r*\a}, -\r};
			}
			
		\end{tikzpicture}
	
\end{center}
\caption{An odd-block at the beginning of \proc{Small Circle}, containing the small circle $\SC$. All the \colr{chord} robots (here \emph{\colblock}) lay on $\chord$.}
\label{fig:smallcircle_init_config}
\end{figure}
}{}

Now, \proc{Small Circle} proceeds in three stages as described in the following.

\begin{itemize}
	\item\textbf{Stage 1 -- Moving robots to the small circle:} 
Each \colr{chord} robot moves along the line joining its position and the \colr{scMedian} robot, until it reaches $\SC$.

	\item\textbf{Stage 2 -- Moving all robots to the right half:}
The robots on $\SC_w$ migrate to $\SC_e$.

	\item\textbf{Stage 3 -- Balancing robots on the small circle:}
The robots on $\SC_e$ split into two equal groups, and one of the groups comes back to $\SC_w$ forming a reflective symmetric configuration on $\SC$.
\end{itemize}

\subparagraph*{Stage 1.}
Let $r$ be a \colr{chord} robot on the block chord $\chord$.
It always sees the \colr{scMedian} robot of its block.
Thanks to the presence of the other robots on $\chord$ (\colr{chord}-, \colr{scR}- or \colr{scL}-colored), $r$ reconstructs the supporting line of $\chord$, and so $\diam$ and $\SC$.
Then, $r$ reaches $\SC$ traveling along the trajectory connecting its own position to the \colr{scMedian} robot (see \Cref{fig:smallcircle_stage1}), and it updates its color to \colr{smallcircle}.
Consider the line connecting the left (right, resp.) guard of the block with the median robot such that it splits $\SC_w$ ($\SC_e$, resp.) into two arcs.
Indeed, all the \colr{smallcircle} robots on $\SC$ lay on the lower arcs of $\SC_w$ and $\SC_e$. We call such arcs the \emph{safe arcs} of $\SC$ (see \Cref{fig:smallcircle_stage1}).
When the \colr{scR} (resp. \colr{scL}) robot does not see any \colr{chord} or \colr{to\_smallcircle} (i.e the moving robots towards $\SC$ before becoming \colr{smallcircle}) robots on its half-block, then it sets its color as \colr{scR\_complete} (resp. \colr{scL\_complete}).
When a \colr{smallcircle} robot on $\SC_e$ (resp. $\SC_w$) sees the \colr{scR\_complete} (resp. \colr{scL\_complete}) robot on its half-block, it sets its colors as \colr{smallcircle\_complete}.
These color updates are needed to synchronize all robots on $\SC$ before proceeding with the next stage.

See \Cref{lemma:smallcircle_stage1_proof} in \Cref{appendix:smallcircle} for the correctness proof of Stage 1.

\ifthenelse{\boolean{usefigures}}{
\begin{figure}[!h]
\begin{center}

		\begin{tikzpicture}[scale=0.4, transform shape, font = {\LARGE}]
			\def\r{3cm} 
			\draw[fill=yellow!0] (0,-\r) -- ({sqrt(12)*\r}, -\r) arc[start angle={30}, end angle={150},radius=\r*4] -- (0, -\r);
			\draw [] (0,0) circle (\r);
			\draw [thin, dotted] (90:\r) -- (270:\r);

			\draw [thin, dashed] (90:\r) -- ({sqrt(12)*\r}, -\r);
			\draw [thin, dashed] (90:\r) -- (-{sqrt(12)*\r}, -\r);
							
			\rnode{\colblockL}{below left}{\colr{scL}}{-{sqrt(12)*\r}, -\r};
			\rnode{\colblockR}{below right}{\colr{scR}}{{sqrt(12)*\r}, -\r};
			
			\rnode{\colmedian}{above}{}{90:\r};
			
			\foreach \a in {0.3, 0.5, 0.9, 1.2, 1.4, 1.7, 2.3, 3.1, 4.2,4.4, 5.6, 6.2}
				{\rnode{\colblock}{below}{}{-{sqrt(12)*\r + \r*\a}, -\r};
				\draw [thin, dotted] (90:\r) -- (-{sqrt(12)*\r + \r*\a}, -\r);
			}
			
		\end{tikzpicture}
	
\end{center}
\caption{Stage 1 of \proc{Small Circle}: trajectories of \colr{chord} robots to reach $\SC$. The two dashed lines delimit the \emph{safe arcs} on $\SC_e$ and $\SC_w$.}
\label{fig:smallcircle_stage1}
\end{figure}
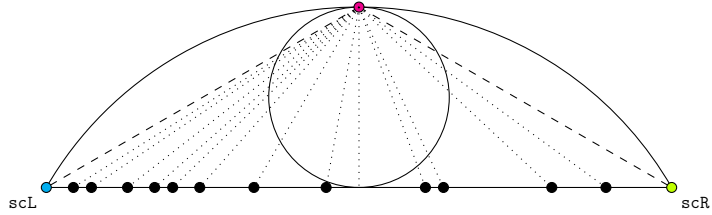
}{}

\subparagraph*{Stage 2.}
At the beginning of this stage, all the robots on $\chord$ have reached the safe arcs of $\SC$. 
Specifically, we have two groups of \colr{smallcircle\_complete} robots on $\SC_e$ and $\SC_w$, possibly with different cardinalities.
We assume all the \colr{smallcircle\_complete} robots on $\SC_e$ ($\SC_w$, resp.) set their color as \colr{smallcircle\_east} (\colr{smallcircle\_west}, resp.).
If the robots on $\SC_e$ and $\SC_w$ lay in a reflective symmetry, they do nothing else in this stage.
Otherwise, let $r$ be a \colr{smallcircle\_west} robot on $\SC_w$.
If a \colr{smallcircle\_east} robot lies on $\SC_e$, symmetrical with respect to $r$, then $r$ moves to another position $x$ on $\SC_w$ such that \emph{(i)} $r$ does not collide with other robots or change its rank in the displacement of robots, \emph{(ii)} the symmetrical position of $x$ on $\SC_e$ is robot-free, and \emph{(iii)} $r$ does not leave the safe arc of $\SC_w$.
When all the \colr{smallcircle\_west} robots have properly shifted their positions (if necessary), then they use {\BDCP} to migrate on their projection on $\SC_e$.
After the migration, we assume that all the robots on $\SC_e$ (except for the median one) are colored as \colr{smallcircle\_east}.
See \Cref{lemma:smallcircle_stage2_proof} in \Cref{appendix:smallcircle} for the correctness proof of Stage 2.

\subparagraph*{Stage 3.}
The \colr{smallcircle\_east} robots update their color alternating two colors, \colr{west} and \colr{east}, so that the closest robot to the median robot is colored as \colr{east}.
After that, each \colr{west} robot heads to the projection on $\SC_w$ of its upper-adjacent \colr{east} robot (see \Cref{fig:slice_stage3}).
See \Cref{lemma:smallcircle_stage3_proof} in \Cref{appendix:smallcircle} for the correctness proof of Stage 3.

\ifthenelse{\boolean{usefigures}}{
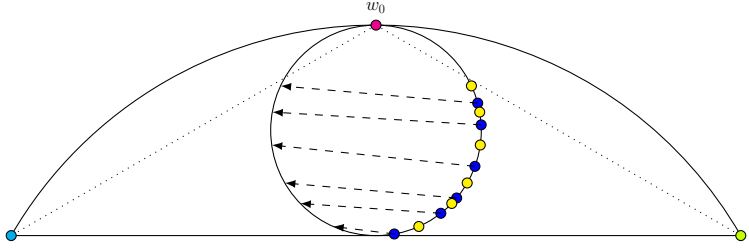
\begin{figure}[!h]
\begin{center}

		\begin{tikzpicture}[scale=0.4, transform shape, font = {\LARGE}]
			\def\r{3.5cm} 
			\draw[fill=yellow!0] (0,-\r) -- ({sqrt(12)*\r}, -\r) arc[start angle={30}, end angle={150},radius=\r*4] -- (0, -\r);
			\draw [] (0,0) circle (\r);
		
			\draw [thin, dotted] (90:\r) -- ({sqrt(12)*\r}, -\r);
			\draw [thin, dotted] (90:\r) -- (-{sqrt(12)*\r}, -\r);
						
			\rnode{\colblockL}{below right}{}{-{sqrt(12)*\r}, -\r};
			\rnode{\colblockR}{below left}{}{{sqrt(12)*\r}, -\r};
			
			\rnode{\colmedian}{above}{$w_0$}{90:\r};
			
			\foreach \e/\w in {65/75, 80/87, 98/110, 120/130, 134/142, 156/170}
				{\rnode{\coleast}{right}{}{90-\e:\r};
				\rnode{\colwest}{left}{}{90-\w:\r};
				\draw [thin, dashed, -latex] (90-\w:\r) -- (90+\e:\r);
			}
			
		\end{tikzpicture}
	
\end{center}
\caption{Stage 3 of \proc{Slice}: \colr{west} robots reach $\SC_e$.}
\label{fig:slice_stage3}
\end{figure}
}{}

\begin{lemma}[\textbf{Small Circle}]\label{lemma:smallcircle}
Starting from \conf{oddblock}, \conf{smallcircle} is reached in $O(1)$ epochs using $O(1)$ colors in the \asynch{} setting, avoiding collisions, guaranteeing robots operate within $SEC(\config_{oddblock})$. 
In \conf{smallcircle}, the robots in each odd block (guards and median robot excluded) are arranged on the same \emph{small circle} $\SC$ in a mirror-symmetric pattern where the line of symmetry lies on the median diameter $\diam$.
\end{lemma}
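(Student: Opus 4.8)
The plan is to prove the lemma by establishing correctness of the three stages of \proc{Small Circle} in sequence and then composing them, since each stage is triggered only after the previous one signals completion through monotone synchronization colors. First I would fix the geometric setting. Because $U_1,\dots,U_{2l+1}$ are equally spaced uniform positions, the midpoint of the block arc $\barc$ is exactly the median position $U_{l+1}$, so the center $O$ of $Cir$, the midpoint of $\chord$, and $U_{l+1}$ are collinear on the line perpendicular to $\chord$. Since $\diam$ runs from $U_{l+1}$ to its foot on $\chord$, it lies on this same line; hence $\SC$ and $Cir$ share the point $U_{l+1}$ with a common tangent there and are \emph{internally} tangent. As $\SC$ has the smaller radius, $\SC\enclosed Cir$, and $\chord$ is a chord of $Cir$ lying in its interior. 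All movements in \proc{Small Circle} are straight segments between points of $\SC\cup\chord$, so by convexity they remain inside $Cir=SEC(\config_{oddblock})$; combined with the fact that the guards and the \colr{scMedian} robot never move (they stay on $Cir$), this already yields the $SEC$-minimization claim.

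For \textbf{Stage 1} I would argue collision-freedom from a concurrency argument. Each \colr{chord} robot moves along the line from its (distinct) position on $\chord$ toward the common apex $U_{l+1}$, stopping at $\SC$. Two such supporting lines meet only at $U_{l+1}$, which is strictly beyond both stopping points, and since $U_{l+1}\notin\chord$ no two distinct chord points are collinear with it; thus the traveled segments are pairwise disjoint and avoid the guards and the median. A robot approaching the tangent point of $\SC$ from the $\chord$ side necessarily first meets $\SC$ on its lower arc, so every robot lands on the safe arc of $\SC_w$ or $\SC_e$ (the formal monotonicity of the landing height is the geometric claim I would cite from \Cref{lemma:smallcircle_stage1_proof}). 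The completion colors \colr{scR\_complete}/\colr{scL\_complete} and \colr{smallcircle\_complete} are set only after the relevant guard sees no more in-transit robot on its half, and are never reverted, so they correctly and unambiguously mark the end of the stage under \asynch{} in $O(1)$ epochs with $O(1)$ colors.

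The main obstacle is \textbf{Stage 2}, where asynchrony and opacity interact. Here I must show that every \colr{smallcircle\_west} robot can be reflected across $\diam$ onto a \emph{free} point of $\SC_e$ without collision and without reordering. The procedure first detects when the two halves already form a reflective pattern (in which case nothing is done); otherwise each west robot whose mirror point is occupied pre-shifts to a nearby position $x$ on the safe arc of $\SC_w$ satisfying the three stated invariants — rank/order preserved, mirror of $x$ free, $x$ still on the safe arc. I would prove such an $x$ always exists by a counting/interval argument on the safe arc (there is always room because the safe arc is a continuum while the occupied points are finite), and that these shifts are mutually collision-free since each robot stays strictly within its own rank interval. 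The subsequent reflection uses {\BDCP} on the circular arc $\SC_e$ ($k=3$): the surviving \colr{smallcircle\_east} robots together with the median and \colr{scR} supply the six beacons, the west robots lie on one side and retain visibility of the beacons (guaranteed precisely because all robots sit on the \emph{safe} lower arcs, avoiding obstructing collinearities), so \Cref{alg:bdcp} applies and relocates them in $O(1)$ epochs collision-free. The completion colors again prevent Stage 3 from starting before every west robot has settled on $\SC_e$.

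Finally, for \textbf{Stage 3} and the composition, I note that the number of non-guard, non-median robots in an $l$-block is $2l-2=2(l-1)$, an even count, so the alternating \colr{east}/\colr{west} recoloring starting from the robot closest to the median yields exactly $l-1$ of each. Each \colr{west} robot then moves to the reflection across $\diam$ of its upper-adjacent \colr{east} robot; these targets are distinct, lie on the opposite half $\SC_w$, and pair each east robot with a mirror west robot, producing the claimed mirror-symmetric configuration with axis $\diam$. Collision-freedom holds because all targets are distinct points of $\SC_w$ reached from $\SC_e$ along segments crossing $\diam$ at distinct heights. Composing the three stages, each $O(1)$ epochs and $O(1)$ colors, gives the stated bounds; all motion stays in $\SC\cup\chord\enclosed Cir$, so robots operate within $SEC(\config_{oddblock})$, completing the proof. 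Throughout, the delicate recurring point — which I would stress in each stage — is that the completion-color handshakes and the restriction to safe arcs are exactly what make the parallel, opaque, asynchronous movements correct and collision-free.
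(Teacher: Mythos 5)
Your proof follows essentially the same route as the paper: the same three-stage decomposition, the safe-arc visibility argument, the median-directed (common-apex) trajectories for Stage 1, the shift-then-\BDCP{} reflection for Stage 2, the alternating \colr{east}/\colr{west} recoloring with mirror projection for Stage 3, the completion-color handshakes for \asynch{} synchronization, and the enclosure $\SC \enclosed Cir$ argument for the computational SEC.

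One concrete point needs repair in Stage 2. You supply the six \BDCP{} beacons for the target arc $\SC_e$ from ``the surviving \colr{smallcircle\_east} robots together with the median and \colr{scR}.'' But \colr{scR} is a guard and sits on the block chord $\chord$, not on $\SC$, and \BDCP{} requires all $2k$ beacons to lie on the target curve itself (here $k=3$, so six beacons on $\SC_e$). The paper instead selects the six beacons from among the \colr{smallcircle\_east} and \colr{smallcircle\_west} robots and positions them on $\SC_e$ in constant time before the reflection begins; this also covers the degenerate case, which your argument leaves open, in which Stage 1 deposits all (or nearly all) of the $2l-2$ robots on one half of $\SC$, so that $\SC_e$ does not initially carry enough robots to act as beacons. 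With that substitution the rest of your Stage 2 argument, and hence the whole proof, goes through.
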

\subsection{\proc{Slice} ({Transition} \mycircled{7})}
\label{sec:procedure_slice}
Consider an $l$-block in \conf{smallcirle} (see \Cref{fig:slice_init_config}).
First, the guards and the median robots turn into \colr{sliceL}, \colr{sliceR} and \colr{sliceMedian}, respectively.
Let $Q$ be the center of $\SC$.
Let $e_1,\dots, e_m$ (resp., $w_1,\dots, w_m$) be the ordered sequences of \colr{east} (resp., \colr{west}) robots which lay on $\SC_e$ (resp., $\SC_w$) starting from the closest to the median robot.
Note that $e_j$ and $w_j$ lay on symmetric positions in the two half-circles, for each $1\leq j \leq m$.
We say that the median robot splits $\barc$ into the right arc $\barc_e$ and the left arc $\barc_w$.
\proc{Slice} aims at moving the robots from $\SC$ to $\barc$ to the vertices of the target regular $n$-gon.
Let us present the three main stages that constitute the entire procedure:
\begin{itemize}
	\item\textbf{Stage 1 -- Rank encoding:} Each $w_j$ encodes its rank $j$ by reaching a specific position on $\SC_w$.
	\item\textbf{Stage 2 -- Moving towards the right target vertices:} Each $e_j$ reaches its uniform position on $\barc_e$.
	\item\textbf{Stage 3 -- Moving towards the left target vertices:} Each $w_j$ reaches its uniform position on $\barc_w$.
\end{itemize}

\begin{figure}[!h]
\begin{center}

		\begin{tikzpicture}[scale=0.4, transform shape, font = {\LARGE}]
			\def\r{3cm} 
			\draw[fill=yellow!0] (0,-\r) -- ({sqrt(12)*\r}, -\r) arc[start angle={30}, end angle={150},radius=\r*4] -- (0, -\r);
			\draw [] (0,0) circle (\r);
			\draw [thin, dotted] (90:\r) -- (270:\r);
			\node (a) at (\r*0.1,\r*0.2) {$\diam$};
			
			\draw [thin, dotted] (90:\r) -- ({sqrt(12)*\r}, -\r);
			\draw [thin, dotted] (90:\r) -- (-{sqrt(12)*\r}, -\r);

			\node (a) at (2*\r,\r*0.6) {$\barc_e$};
			\node (a) at (-2*\r,\r*0.6) {$\barc_w$};

			\rnode{\colblockL}{below}{\colr{scL}}{-{sqrt(12)*\r}, -\r};
			\rnode{\colblockR}{below}{\colr{scR}}{{sqrt(12)*\r}, -\r};
			
			\rnode{\colmedian}{above}{$w_0$}{90:\r};
			
			\foreach \i/\a in {1/65, 2/80, 3/98, 4/120, 5/134, 6/156}
				{\rnode{\colwest}{left}{$w_\i$}{90+\a:\r};
				\rnode{\coleast}{right}{$e_\i$}{90-\a:\r};
			}
			
		\end{tikzpicture}
	
\end{center}
\caption{\conf{smallcirle} at the beginning of \proc{Slice}.}
\label{fig:slice_init_config}
\end{figure}

\subparagraph*{Stage 1.}
Let us denote with $w_0$ and $w_{m+1}$ the two endpoints of $\SC_w$ (the median robot laying on $w_0$).
Let $\alpha_j = \angle w_jQw_{j+1}$, for each $0\leq j \leq m$.
At the beginning of this stage, robot $w_1$ computes $\delta=\min\{\alpha_j\}_{0\leq j \leq m}$, moves to a position $w'_1$ on $\SC_w$ so that $\angle w_0Qw'_1 = \delta$ setting its color as \colr{angle} (see \Cref{subfig:delta_setting}).
In this way, the \colr{angle} robot fixes the amplitude of the slices both $\SC_e$ and $\SC_w$ will be split throughout this procedure.
After this setting, each \colr{west} robot splits $\SC_w$ into $\delta$-slices, starting from the point $w_0$ (see \Cref{subfig:slicing}).
Let $\eta_0,\dots, \eta_{\floor{\frac{\pi}{\delta}}-1}$ be the ordered sequence of such $\delta$-slices, starting from the median robot.
We ignore the possible remaining slice of amplitude $<\delta$ that can exist between $\eta_{\floor{\frac{\pi}{\delta}}-1}$ and $w_{m+1}$.
Indeed, the arc of each $\delta$-slice contains at most two robots: in the case of exactly two robots, both of them must lay on the endpoints of the slice arc.
Let $w_j$ be a \colr{west} robot and let $\eta_k$ be the slice it lays on (if $w_j$ lays on the endpoint between two slices, then it chooses the slice with the smaller rank $k$).
Then, $w_j$ moves to a new position $w'_j$ on $\SC_w$ such that $\angle w_0Ow'_j= k\delta + \frac{j\delta}{m+1}$.
Note this strategy always guarantees $w'_j$ is strictly contained in $\eta_k$, thus ensuring these movements never create collisions or collinearities among \colr{west} robots.
See \Cref{lemma:slice_stage1_proof} in \Cref{appendix:slice} for the correctness proof of Stage 1.

\begin{figure}[!h]
\centering
	\subfloat[Setting of $\delta$.]{
		\begin{tikzpicture}[scale=0.4, transform shape, font = {\LARGE}]
			\def\r{4cm} 
			\draw [] (0,0) circle (\r);
			\draw [thin, dotted] (90:\r) -- (270:\r);
			
			\node (a) at (90+7.5:\r*0.8) {$\delta$};
			\draw [thin, dotted] (0:0) -- (90+15:\r);

			\rnode{\colmedian}{above}{$w_0$}{90:\r};
			
			\foreach \i/\a in {2/80, 3/98, 4/120, 5/134, 6/156}
				{\rnode{\colwest}{left}{$w_\i$}{90+\a:\r};
				\rnode{\coleast}{right}{$e_\i$}{90-\a:\r};
			}
			
			\draw [thin, dashed, -{Latex}] (90+65:\r) -- (90+18:\r);
			\rnode{\colwest}{above left}{$w_1$}{90+65:\r};
			\rnode{\colangle}{above left}{$w'_1$}{90+14:\r};
			\rnode{\coleast}{above right}{$e_1$}{90-65:\r};

		\end{tikzpicture}
		\label{subfig:delta_setting}
	}
	\hspace{\textwidth/4}
	\subfloat[Splitting of $\SC_w$ in $\delta$-slices.]{
		\begin{tikzpicture}[scale=0.4, transform shape, font = {\LARGE}]
			\def\r{4cm} 
			\def\d{15} 
			\def\m{12} 

			\draw [] (0,0) circle (\r);
			\draw [thin, dotted] (90:\r) -- (270:\r);
			
			\node (a) at (90+7.5:\r*0.8) {$\delta$};
			\draw [thin, dotted] (0:0) -- (90+\d:\r);

			\rnode{\colmedian}{above}{$w_0$}{90:\r};
			
			\foreach \i/\a in {2/80, 3/98, 4/120, 5/134, 6/156}
				{\rnode{\colwest}{left}{$w_\i$}{90+\a:\r};
				\rnode{\coleast}{right}{$e_\i$}{90-\a:\r};
			}
			
			\rnode{\colangle}{above left}{$w_1$}{90+14:\r};
			\rnode{\coleast}{above right}{$e_1$}{90-65:\r};
			
			\foreach \i in {1, ..., \m}
				{\draw [thin, dotted] (0:0) -- (90+\d*\i:\r);}

		\end{tikzpicture}
		\label{subfig:slicing}
	}
\label{fig:procedure_slice}
\caption{Steps of Stage 1 in \proc{Slice}.}
\end{figure}
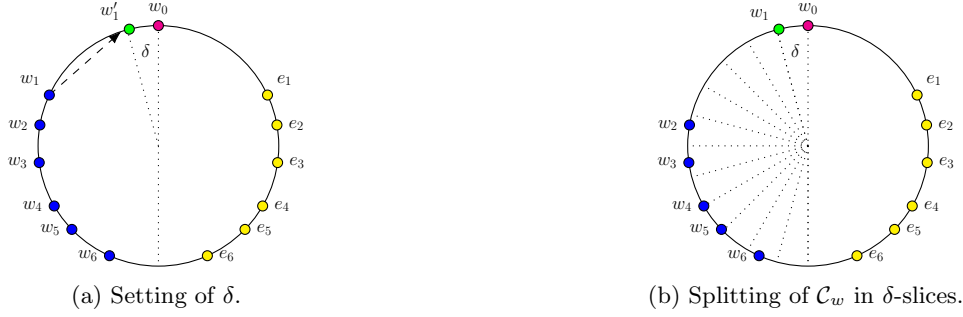

\subparagraph*{Stage 2.}
In this stage, each \colr{east} robot reaches its uniform position on $\barc_e$ in two steps.
Let $e_1,\dots,e_m$ be the sequence of the \colr{east} robots and let $U_1, \dots , U_m$ be the sequence of the uniform positions on $\barc_e$ so that $U_j$ is intended for $e_j$.
In the first step, robots $e_1,e_2$ and $e_3$ turn their color as \colr{beacon}.
Afterward, all the other \colr{east} robots $\{e_j\}_{4\leq j\leq m}$ on $\barc_e$ migrate towards $\diam$.
Specifically, each \colr{east} robot $e_j$ directly heads to the projection $e'_j$ of its position on $\SC_e$ along $\diam$, setting its color as \colr{east\_diameter}.
This migration is accomplished to make robots be aligned and so to avoid collisions and collinearities which might not allow a complete parallelism in the second step.
Finally, the three beacons $e_1,e_2$ and $e_3$ migrate on $\barc_w$ in order to fix $\barc$ and so to help the \colr{east\_diameter} robots to uniformly arrange themselves on the arc $\barc_e$ in the second step.
The beacons must safely migrate to some positions that are visible to the robots on $\diam$.
Let $u$ be the intersection point between $\barc_w$ and the line connecting $e_1$ (the first \colr{beacon}) with $w_1$ (the \colr{angle} robot).
Let $v$ be the intersection point between $\barc_w$ and the line connecting $e'_m$ with $w_2$.
We call $\wideparen{uv}$ the \emph{safe arc} of $\barc_w$.
So, $e_1,e_2$ and $e_3$ reach some deterministic and internal points of the safe arc $\wideparen{uv}$ (see \Cref{fig:slice_stage2_beacons}).

In the second step, all the \colr{east\_diameter} robots on $\diam$ compute and reach their uniform position.
Let $e$ be an \colr{east\_diameter} robot on $\diam$.
To compute its uniform position, $e$ needs to recover the arc $\barc_e$ and its rank $j$.
The rank $j$ can be obtained by slicing $\SC_w$ (resp., $\SC_e$) in $\eta_0,\dots, \eta_{\floor{\frac{\pi}{\delta}}-1}$ (resp., $\eta'_0,\dots, \eta'_{\floor{\frac{\pi}{\delta}}-1}$).
Let $\eta'_k$ be the slice where $e$ originally laid on (before its migration on $\diam$).
Since $e$ can see all the \colr{west} robots, it is sufficient for $e$ to compute the rank $j$ of the \colr{west} robot located on $\eta_k$.
After having obtained its rank $j$ and the arc $\barc_e$ (thanks to the presence of the beacons fixing $\barc$), $e$ heads to $U_j$ setting its color as \colr{regular} (see \Cref{fig:slice_stage2_east_ui}).
Finally, the beacons $e_1,e_2, e_3$ reach the missing uniform positions $U_1,U_2,U_3$.
See \Cref{lemma:slice_stage2_proof} in \Cref{appendix:slice} for the correctness proof of Stage 2.

\begin{figure}[!h]
\begin{center}
	\subfloat[Step 1: \colr{east\_diameter} robots on $\diam$ and beacons on the safe arc $\wideparen{uv}$.]{
		\begin{tikzpicture}[scale=0.4, transform shape, font = {\LARGE}]
			\def\r{4cm} 
			\def\m{6}
			\def\angle{60/7}

			\draw(0,-\r) -- ({sqrt(12)*\r}, -\r) arc[start angle={30}, end angle={150},radius=\r*4] -- (0, -\r);
			\draw [] (0,0) circle (\r);
			\draw [thin, dotted] (90:\r) -- (270:\r);

			\rnode{\colangle}{above left}{$u$}{90+14:\r};

			\foreach \i in {1,..., \m}
				{\vertx{\colvertx}{above}{$U_{\i}$}{ {4*\r*cos(90-\angle*\i)} , {4*\r*sin(90-\angle*\i) -3*\r} };}
				
			\foreach \i/\a in {1/65, 2/80, 3/98}
				{\rnode{\colbeacon}{above}{}{ {4*\r*cos(97+3*\i)} , {4*\r*sin(97+3*\i) -3*\r} };
				\rnode{\colbeacon}{right}{$e_\i$}{90-\a:\r};
				\draw [thin, dashed, -latex] (90-\a:\r) -- ({4*\r*cos(97+3*\i)} , {4*\r*sin(97+3*\i) -3*\r});
			}

			\rnode{\colblockL}{below}{\colr{sliceL}}{-{sqrt(12)*\r}, -\r};
			\rnode{\colblockR}{below}{\colr{sliceR}}{{sqrt(12)*\r}, -\r};
			
			\rnode{\colmedian}{above}{$w_0$}{90:\r};
			
			\foreach \i/\a in {2/80, 3/98, 4/120, 5/134, 6/156}
				{\rnode{\colwest}{left}{$w_\i$}{90+\a:\r};}
				
			\foreach \i/\a in {4/120, 5/134, 6/156}
				{\rnode{\coleastdiameter}{left}{}{0,{\r*sin(90-\a)}};
				\rnode{\coleast}{right}{$e_\i$}{90-\a:\r};
				\draw [thin, dashed, -latex] (90-\a:\r) -- (0,{\r*sin(90-\a)});
			}
			
			\draw [thin, dotted] (90-65:\r) -- (90+14:\r);
			\draw [thin, dotted] (0, {\r*sin(90-156)}) -- ({4*\r*cos(112)} , {4*\r*sin(112) -3*\r});
			\node (a) at ({4*\r*cos(112)} , {4*\r*sin(112) -2.9*\r}) {$v$};

		\end{tikzpicture}
		\label{fig:slice_stage2_beacons}
	}
	\vfill
	\subfloat[Step 2: the \colr{east\_diameter} robots reach their uniform positions.]{
		\begin{tikzpicture}[scale=0.4, transform shape, font = {\LARGE}]
			\def\r{4cm} 
			\def\m{6}
			\def\angle{60/7}

			\draw(0,-\r) -- ({sqrt(12)*\r}, -\r) arc[start angle={30}, end angle={150},radius=\r*4] -- (0, -\r);
			\draw [] (0,0) circle (\r);
			\draw [thin, dotted] (90:\r) -- (270:\r);

			\rnode{\colangle}{above}{}{90+14:\r};

			\foreach \i in {1,..., \m}
				{\vertx{\colvertx}{above}{$U_{\i}$}{ {4*\r*cos(90-\angle*\i)} , {4*\r*sin(90-\angle*\i) -3*\r} };}
				
			\foreach \i in {1,..., 3}
				{\rnode{\colbeacon}{above}{$e_{\i}$}{ {4*\r*cos(95+3*\i)} , {4*\r*sin(95+3*\i) -3*\r} };}
						
			\rnode{\colblockL}{below}{\colr{sliceL}}{-{sqrt(12)*\r}, -\r};
			\rnode{\colblockR}{below}{\colr{sliceR}}{{sqrt(12)*\r}, -\r};
			
			\rnode{\colmedian}{above}{$w_0$}{90:\r};
			
			\foreach \i/\a in {2/80, 3/98, 4/120, 5/134, 6/156}
				{\rnode{\colwest}{left}{$w_\i$}{90+\a:\r};}
				
			\foreach \i/\a in {4/120, 5/134, 6/156}
				{\draw [thin, dashed, -latex] (0, {\r*sin(90-\a)}) -- ({4*\r*cos(90-\angle*\i)} , {4*\r*sin(90-\angle*0.95*\i) -3.1*\r} );
				\rnode{\coleastdiameter}{left}{$e_\i$}{0,{\r*sin(90-\a)}};
			}
			
		\end{tikzpicture}
		\label{fig:slice_stage2_east_ui}
	}
	
\end{center}
\caption{Stage 2, starting with $m=6$  \colr{east} robots. }
\label{fig:slice_stage2}
\end{figure}
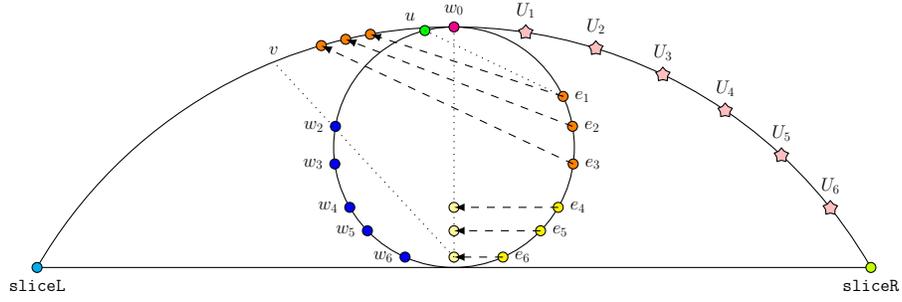
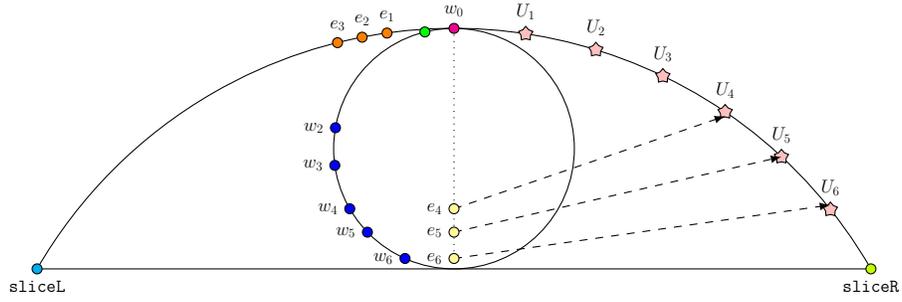

\subparagraph*{Stage 3.}\label{stage:slice_stage3}
In this stage, each \colr{west} robot reaches its uniform position on $\barc_w$ in two steps.
Let $w_2,\dots,w_m$ be the sequence of the \colr{west} robots and let $U'_2, \dots , U'_m$ be the sequence of the uniform positions on $\barc_w$ so that $U'_j$ is intended for $w_j$.
Recall that robot $w_1$ updated its color to \colr{angle} in Stage 1. This robot is destined for position $U'_1$. 

In the first step, the \colr{angle} robot moves perpendicularly to $\diam$ and reaches its symmetric position on $\SC_e$.
Then, $w_2$ reaches a position on $\SC_e$ to form an angle $\frac{\delta}{m}$ with \colr{angle}, setting its color as \colr{anglem}.
After this setting, all the \colr{west} robots migrate from $\SC_w$ to their projection on $\diam$ in constant time by implementing the \BDCP{} strategy ($w_0,w_3,w_{m-1},w_m$ as beacons) and change their color in \colr{west\_diameter}.

In the second step, all the \colr{west\_diameter} robots along $\diam$ compute and reach their uniform position.
Let $w$ be a \colr{west\_diameter} robot on $\diam$. 
It needs $\barc_w$ and its rank to compute its target uniform position.
To obtain its rank, $w$ must exploit the slicing technique of $\SC$ to decode its rank.
So, $w$ recomputes $\SC_w$ and $\SC_e$ through the presence of the \colr{angle} and \colr{anglem} robots, and the line where $\diam$ lies.
Then, $w$ splits $\SC_w$ in $\delta$-slices and determines the slice $\eta_k$ where it originally laid on.
Then, $w$ decodes its rank $j$ inverting the formula $\angle w_0Ow'= k\delta + \frac{j\delta}{m+1}$, where $w'$ is the projection of $w$ on $\SC_w$. 
Recall that $w_0$ is the position of the median robot.
Note that $w$ can determine $m$ through the angle formed by the \colr{angle} and \colr{anglem} robots.
After having obtained its rank $j$ and the arc $\barc_w$ (through the presence of the \colr{sliceR} robot and the \colr{regular} robots on $\barc_e$), $w$ heads towards $U'_j$ and sets its color as \colr{regular} (see \Cref{fig:slice_west_part}).
Lastly, the \colr{angle} and \colr{anglem} robots on $\SC_e$ reach $U'_1$ and $U'_2$ setting their color as \colr{regular}.
At the end of \proc{Slice}, all the uniform positions of the block are occupied by a robot, and no other robot lays on $\SC$ (except for the median one).
See \Cref{lemma:slice_stage3_proof} in \Cref{appendix:slice} for the correctness proof of Stage 3.

Since \proc{Slice} is repeated for each block of $Cir$, the target $n$-gon is achieved. Eventually, all robots turn into \colr{regular}.

\begin{figure}[!h]
\begin{center}

		\begin{tikzpicture}[scale=0.4, transform shape, font = {\LARGE}]
			\def\r{3cm} 
			\def\m{6}
			\def\angle{60/7}

			\draw(0,-\r) -- ({sqrt(12)*\r}, -\r) arc[start angle={30}, end angle={150},radius=\r*4] -- (0, -\r);
			\draw [] (0,0) circle (\r);
			\draw [thin, dotted] (90:\r) -- (270:\r);

			\rnode{\colangle}{above}{}{90-14:\r};
			\rnode{\colanglem}{below right}{}{90-22:\r};		
			
			\foreach \i in {1,..., \m}
				{\rnode{\colregular}{above}{}{ {4*\r*cos(90-\angle*\i)} , {4*\r*sin(90-\angle*\i) -3*\r} };
				\vertx{\colvertx}{above}{$U'_{\i}$}{ {-4*\r*cos(90-\angle*\i)} , {4*\r*sin(90-\angle*\i) -3*\r} };}

			\rnode{\colblockL}{below}{\colr{sliceL}}{-{sqrt(12)*\r}, -\r};
			\rnode{\colblockR}{below}{\colr{sliceR}}{{sqrt(12)*\r}, -\r};
			
			\rnode{\colmedian}{above}{$w_0$}{90:\r};
			
			\foreach \i/\a in { 3/98, 4/120, 5/134, 6/156}
				{\rnode{\colwestdiameter}{right}{$w_\i$}{0, {\r*sin(90+\a)}};
				\draw[thin, dashed, ->] (0, {\r*sin(90-\a)}) -- ({-4*\r*cos(90-\angle*\i)} , {4*\r*sin(90-\angle*\i) -3*\r} );}
			
		\end{tikzpicture}
	
\end{center}
\caption{Stage 3, Step 2: \colr{west\_diameter} robots reach their uniform positions (here $m=6$).}
\label{fig:slice_west_part}
\end{figure}
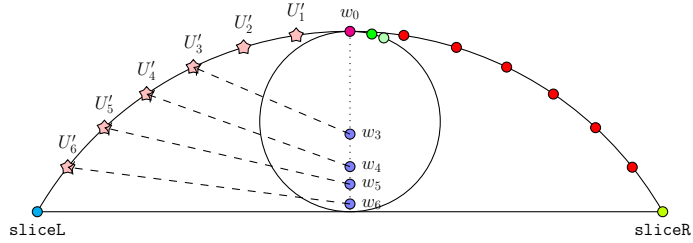

\begin{lemma}[\textbf{Slice}]
\label{lemma:slice}
Starting from \conf{smallcircle}, \conf{regular} is reached 
in $O(1)$ epochs using $O(1)$ colors in the {\asynch} setting, avoiding collisions, guaranteeing robots always operate within $SEC(Conf_{smallcircle})$. Each robot is colored \colr{regular}.
\end{lemma}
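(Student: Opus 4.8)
The plan is to prove the statement by analyzing the three stages of \proc{Slice} in sequence and showing that each preserves a working invariant: every robot of the block stays either on a safe arc of $\SC$, on the median diameter $\diam$, or already parked on a uniform position of $\barc$, and at each moment remains mutually visible to the robots whose positions it must read. Because the stages and steps are delimited by disjoint color sets (\colr{east}/\colr{west}, \colr{east\_diameter}, \colr{angle}, \colr{anglem}, \colr{regular}, etc.), I would first make explicit a ``color-as-a-program-counter'' argument: an asynchronously activated robot can always infer from its current snapshot which stage/step is in progress and what the corresponding target is. This is exactly what lets us reason about the three stages in isolation despite \asynch{} and then compose them, and it is also what guarantees that a robot can recompute its destination after being idle for an arbitrary time.

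For Stage~1 the heart of the argument is the correctness of the rank encoding, which I would treat as the central claim. After $w_1$ fixes the slice width by moving to \colr{angle}, partitioning $\SC_w$ from $w_0$ into the $\delta$-slices $\eta_0,\dots,\eta_{\lfloor \pi/\delta\rfloor-1}$ with $\delta=\min_{0\le j\le m}\alpha_j$ puts at most two \colr{west} robots in any slice, and two only when they share the slice endpoint. The destination $\angle w_0Qw'_j=k\delta+\tfrac{j\delta}{m+1}$ then lands strictly inside the slice $\eta_k$ that contains $w_j$ (since $0<\tfrac{j}{m+1}<1$), so the cyclic order of the \colr{west} robots is preserved, no two coincide, and no new collinearity through $Q$ is created. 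Hence the map (rank $j$) $\mapsto$ (angular offset within $\eta_k$) is injective and invertible from the fixed reference $w_0$; this is precisely what makes the rank robustly \emph{recomputable} by any robot that later reads the \colr{west} positions, even while every other robot is in transit.

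Stages~2 and~3 then reduce to repeated \BDCP{} migrations married to the Stage~1 encoding, and I would verify the \BDCP{} input hypotheses in each case. In Stage~2, sending $\{e_j\}_{j\ge 4}$ to their perpendicular projections $e'_j$ on $\diam$ aligns them on one segment in their fixed cyclic order, so the trajectories are pairwise non-crossing; the three beacons $e_1,e_2,e_3$, once moved to interior points of the safe arc $\wideparen{uv}$ of $\barc_w$ (bounded by the lines $e_1w_1$ and $e'_mw_2$), stay simultaneously visible to every robot on $\diam$ and off every target ray, letting each \colr{east\_diameter} robot recover $\barc_e$ from the three non-collinear beacons and read the \colr{west} robot sharing its slice index $k$ to copy rank $j$ before heading to $U_j$. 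Stage~3 is symmetric: the \colr{angle}/\colr{anglem} pair encodes both the $w_0$-direction and the value $m$ through the angle $\delta/m$, so after the \colr{west} robots project onto $\diam$ via \BDCP{} each re-decodes $j$ by inverting $\angle w_0Qw'=k\delta+\tfrac{j\delta}{m+1}$ and moves to $U'_j$, with the \colr{regular} robots already on $\barc_e$ fixing the arc. Spatial containment is immediate throughout: $\SC$ is tangent to $\chord$ and lies inside the block sector, $\diam$ is a chord of $\SC$, and all targets $U_j,U'_j$ and the beacon arc $\wideparen{uv}$ lie on $\barc\subseteq Cir=SEC(\config_{smallcircle})$, so no robot ever leaves $Cir$.

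I expect the main obstacle to be collision- and collinearity-freedom of the \emph{parallel} \BDCP{} migrations under \asynch{} with obstructed visibility, where a single outdated snapshot or a badly placed in-transit robot could hide a needed beacon or let two projection paths cross. The decisive facts to pin down are (i) that the perpendicular projections onto $\diam$ yield non-crossing trajectories arriving on a common line in the fixed cyclic order, and (ii) that the safe-arc construction keeps the three beacons simultaneously visible and off every target ray for \emph{every} activation order, so the \BDCP{} hypothesis (a waiting robot that sees $\ge k$ beacons can compute a path meeting the curve exactly once) always holds. Granting these, each stage costs $O(1)$ epochs and $O(1)$ colors by the \BDCP{} bound, the three compose to $O(1)$ total, and at termination the guards and median recolor from \colr{sliceL}/\colr{sliceR}/\colr{sliceMedian} to \colr{regular} once all of $U_1,\dots,U_m,U'_1,\dots,U'_m$ are occupied; hence every uniform position of the block carries a \colr{regular} robot, realizing its portion of the regular $n$-gon. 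Since \proc{Slice} runs identically and in parallel in every block, the swarm reaches \conf{regular}.
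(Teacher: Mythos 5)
Your proposal is correct and follows essentially the same route as the paper: a stage-by-stage analysis with the strictly-within-slice rank encoding ($k\delta+\tfrac{j\delta}{m+1}$ landing in the interior of $\eta_k$) as the central invariant, safe-arc placement of the three beacons to keep them visible from $\diam$, perpendicular non-crossing projections, \BDCP{} for the Stage-3 migration, and color-based synchronization to compose the stages under \asynch. The only minor deviations are that the paper argues the Stage-2 migration of $\{e_j\}_{j\ge 4}$ to $\diam$ is a direct one-epoch move (not a \BDCP{} instance), and it separately proves that a robot can tell whether a visible \colr{west\_diameter} or \colr{sliceMedian} robot belongs to its own block; neither affects the validity of your argument.
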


\subsection{\proc{Sequential Match} ({\bf Transition} \mycircled{8})}
\label{section:seqmatch}
Let \conf{unisect} be the output configuration of \proc{Split}, where $Cir$ is split into $k$ uniform sectors $\{\Upsilon_i\}_{0\leq i \leq k-1}$.
\proc{Sequential Match} is executed starting from \conf{unisect} when the number of robots $q$ in each uniform sector $\Upsilon_i$ (except for its boundaries) is less than 12.
Remember that, at the end of \proc{Split}, each $\Upsilon_i$ contains a \colr{left} and a \colr{right} robot at $U_1$ and $U_q$, and possibly a \colr{regular} robot at its boundaries.
We assume all the robots that are not \colr{left}-, \colr{right}- or \colr{regular}-colored set their color as \colr{unmatched}.
Now, the closest \colr{unmatched} robot to the related $U_2$ of each uniform sector sets its color as \colr{pre\_matched}.
After the setting of the \colr{pre\_matched} in each of the $k$ uniform sectors, these robots reach the corresponding uniform position $U_2$ and set their color as \colr{matched}.
This routine is repeated until each uniform position $U_j$ for $j=2,\dots, q-1$ is covered by a \colr{matched} robot.
Since the number of movements is upper-bounded by a constant, the whole procedure is performed in constant time.
Eventually, all the robots change their color to \colr{regular}, solving {\UCF}.

See \Cref{sec:correctness_procedure_seqmatch} for the correctness proofs of \proc{Sequential Match}.

\begin{lemma}[\textbf{Sequential Match}]
\label{lemma:seqmatch}
Starting from \conf{unisect}, \conf{regular} is reached 
in $O(1)$ epochs using $O(1)$ colors in the {\asynch} setting, avoiding collisions, guaranteeing all robots perform within $SEC(\config_{unisect})$. 
Each robot is colored \colr{regular}.
\end{lemma}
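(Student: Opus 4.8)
The plan is to exploit the constraint $q<12$, which makes every uniform sector contain only a constant number of robots, and to prove correctness, collision-freedom, the $O(1)$ time/color bounds, and the computational-SEC containment as separate (but interlocking) claims. First I would record the structure inherited from \Cref{lemma:split}: in \conf{unisect} all $n$ robots lie on $Cir$, each sector $\Upsilon_i$ carries a \colr{left} robot on $U_1$, a \colr{right} robot on $U_q$, and possibly \colr{regular} robots on its boundaries, while the $q-2$ remaining arc robots are recolored \colr{unmatched}. Because the robots are spread on $Cir$ across all $k\ge 2$ sectors, their convex hull contains the center, so $SEC(\config_{unisect})=Cir$. I would also note the pervasive visibility fact that a line meets a circle in at most two points, hence no three on-circle robots are collinear and every pair of on-circle robots is mutually visible; this lets each candidate robot reconstruct its sector's uniform grid, identify the next empty position $U_j$, and count the \colr{unmatched} robots.

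The core is an induction over $j=2,\dots,q-1$ with the invariant that, at the start of round $j$, positions $U_1,\dots,U_{j-1}$ and $U_q$ are occupied by non-moving robots (colored \colr{left}, \colr{matched}, or \colr{right}) and the $q-j$ robots still on the arc are \colr{unmatched}. In each round, within every sector the unique \colr{unmatched} robot closest to $U_j$ is elected; uniqueness follows from the distinctness of robot positions together with the fixed chirality set by the guards, which breaks any tie by the left-guard orientation. The \colr{pre\_matched} color realizes this election \emph{under {\asynch}}, exactly as in the \emph{Elections under {\asynch}} scheme of \Cref{section:componentIII}: a robot taking a snapshot while part of the group is already in transit still identifies the same elected set and the same target $U_j$ (read off from the already-occupied uniform positions). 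The elected robots then traverse the chord from their arc position to $U_j$ and recolor to \colr{matched}, re-establishing the invariant for $j+1$. Applying the invariant at $j=q-1$ shows $U_2,\dots,U_{q-1}$ are all filled, after which every robot recolors to \colr{regular}, yielding \conf{regular}.

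For collision-freedom I would observe that each round moves at most one robot per sector, and that uniform sectors are convex circular regions (central angle $2\pi/k\le\pi$ since $k\ge 2$) with pairwise-disjoint interiors, so a chord joining two points of $\Upsilon_i$ stays inside $\Upsilon_i$ and cannot meet a robot of another sector. Within a sector the target $U_j$ is empty, the source is the moving robot itself, and every other robot sits on $Cir$; since the open chord lies strictly inside $Cir$ and no three on-circle points are collinear, the chord's interior contains no robot, so the move creates neither a collision nor a blocking collinearity. The same containment, namely that every traversed chord lies inside $Cir=SEC(\config_{unisect})$, immediately gives $SEC(\config)\enclosed SEC(\config_{unisect})$ for every intermediate configuration $\config$, proving the computational-SEC claim. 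The time bound is $O(1)$ because the number of rounds is $q-2<10$ and each round costs a constant number of epochs (one to synchronize the \colr{pre\_matched} election under the fairness assumption, a constant more to move and recolor to \colr{matched}); the palette $\{\colr{left},\colr{right},\colr{regular},\colr{unmatched},\colr{pre\_matched},\colr{matched}\}$ is of constant size, giving the $O(1)$-color bound.

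The main obstacle I anticipate is the asynchronous coordination rather than the geometry: I must ensure the ``closest \colr{unmatched}'' election stays unambiguous and stable even while elected robots of some sectors are still moving, and even though one in-transit robot per sector momentarily sits inside $Cir$ and may transiently interrupt the otherwise-complete on-circle visibility. I would resolve this by leaning on the \colr{pre\_matched}/\colr{matched} coloring to make in-transit robots recognizable and to freeze the identity of the next target $U_j$, so that no two robots of a sector ever aim at the same position and every robot recomputes a consistent round index; the constant sector size then bounds the entire process and closes the proof.
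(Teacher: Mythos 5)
Your overall structure matches the paper's: sequential rounds filling $U_2,\dots,U_{q-1}$ one position per sector per round, election via \colr{pre\_matched}, collision-freedom from the disjointness of sector interiors and the emptiness of the traversed chords, and the $O(1)$ time/color bounds from $q<12$. The one genuine gap is that you explicitly defer the step to which the paper's proof is entirely devoted. You write that you ``would resolve'' the transient-occlusion problem ``by leaning on the \colr{pre\_matched}/\colr{matched} coloring,'' but the difficulty is not merely recognizing in-transit robots: a \colr{pre\_matched} robot must compute its target $U_j$, which requires recovering $n$, the sector boundaries, and the angle $2\pi/n$, and the \colr{left}/\colr{right} robots of an \emph{adjacent} sector that anchor this computation may be hidden behind that sector's in-transit robot at the moment of the snapshot. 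The paper closes this by the same argument as \Cref{lemma:to_left_compute_destination,lemma:to_x_compute_destination}: within any other sector at most one robot moves at a time, and by \Cref{obs:collinearity_circle} a single robot in transit along a chord occludes, from a given on-circle observer, at most one on-circle robot (one on each arc cut by the line of sight), so at least one of the \colr{left}/\colr{right} pair of each adjacent sector stays visible; from that visible robot, the equal amplitude and symmetric layout of the sectors, and the visible in-transit robot itself, the observer reconstructs the hidden robot's exact position, hence all boundaries and $n$, and then obtains the current index $j$ by counting the \colr{matched} robots in its own (static, fully visible) sector. Without this reconstruction argument your induction invariant cannot be maintained under {\asynch}, so it needs to be supplied rather than gestured at; everything else in your proposal is sound and coincides with the paper's route.
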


\section{Putting It All Together}\label{section:alltogether}
\Cref{alg:ucf} summarizes the steps of our {\UCF} algorithm.
Starting from an arbitrary initial configuration \conf{init}, robots arrange in \conf{convex} through the algorithm in \cite{SharmaVT17} which guarantees to avoid collisions and to use $O(1)$ epochs/colors (Transition \mycircled{1}). 
Moreover, no robot moves out from $SEC(\config_{init})$ while reaching $SEC(\config_{convex})$, thus not expanding the initial computational SEC.
Afterward, robots achieve \conf{circle} by reaching the boundaries of $SEC(\config_{convex})$ in $O(1)$ epochs/colors, without colliding or moving out from $SEC(\config_{convex})$ (Transition \mycircled{2}). 
Indeed, $SEC(\config_{circle}) = SEC(\config_{convex})$.
\Cref{lemma:circleformation} summarizes the results of Transitions~{\mycircled{1}-\mycircled{2}}.

If \conf{circle} is a biangular configuration, then robots arrange themselves in a regular $n$-gon through our strategy explained in \Cref{section:componentII} (Transition \mycircled{3}).
\Cref{lemma:biangular} shows that this transition occurs without collisions, in $O(1)$ epochs/colors, without expanding the computational SEC.
Otherwise, if \conf{circle} is a periodic configuration, robots perform a sequence of procedures (corresponding to Transitions \mycircled{4}-\mycircled{8}) to eventually form \conf{regular}. 
Colors are adopted to synchronize robots and make them identify the exact procedure to be executed.
Lemmas \ref{lemma:split}--\ref{lemma:seqmatch} show that such procedures are performed using $O(1)$ epochs/colors, avoiding collisions and within $SEC(\config_{circle})$.
Combining all these results, we have the following theorem, our contribution.

\begin{theorem}[\textbf{Uniform Circle Formation}]\label{theorem:optimal}
Given any \conf{init} of $n$ robots on distinct points on a plane initially colored \colr{off}, the robots reposition to \conf{regular} solving {\UCF} in $O(1)$ epochs using $O(1)$ colors under {\asynch}, avoiding collisions, always performing within $SEC(\config_{init})$.
\end{theorem}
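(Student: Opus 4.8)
The plan is to prove \Cref{theorem:optimal} by \emph{composition}: the whole algorithm is a straight-line sequence of a constant number of phases (the transitions \mycircled{1}--\mycircled{8} of \Cref{fig:transition}), each realized by a single previously analyzed procedure, so the global guarantees follow by chaining the per-phase lemmas. First I would invoke \Cref{lemma:circleformation} to drive the swarm from \conf{init} to \conf{circle} in $O(1)$ epochs using $O(1)$ colors, collision-free, with $SEC(\config_{convex}) \enclosed SEC(\config_{init})$ and hence $SEC(\config_{circle}) = SEC(\config_{convex}) \enclosed SEC(\config_{init})$. Once on \conf{circle}, since a line meets a circle in at most two points, every robot enjoys complete visibility; from its snapshot it deterministically computes $\alpha,\beta,\mu$ and $\Phi$, and because all robots see the same global picture they agree on whether the configuration is \conf{regular}, \conf{biangular}, or \conf{periodic}, recoloring accordingly without moving.

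Next I would branch into the three cases. If \conf{regular}, \UCF{} is already solved. If \conf{biangular}, \Cref{lemma:biangular} reaches \conf{regular} in $O(1)$ epochs/colors, collision-free, within $SEC(\config_{biangular})$. If \conf{periodic}, I apply \Cref{lemma:split} to obtain \conf{unisect} and then branch on the per-sector count $q$: for $q \geq 12$ I chain \Cref{lemma:blockformation} (to \conf{oddblock}), \Cref{lemma:smallcircle} (to \conf{smallcircle}), and \Cref{lemma:slice} (to \conf{regular}); for $q < 12$ I apply \Cref{lemma:seqmatch} directly. In every branch the number of transitions is bounded by an absolute constant (the longest path in \Cref{fig:transition} uses at most seven edges), so the epoch counts sum to $O(1)$ and the finitely many bounded color palettes union to an $O(1)$ palette.

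For collision-freedom I would argue that at any instant only the robots of the currently active phase move: the \colr{pre\_<color>} and \colr{to\_<color>} handshaking forces an elected group to synchronize before moving and lets every robot recognize robots in transit, so no robot advances to the next phase on a stale view. Furthermore, after \proc{Split} no robot ever leaves its uniform sector, so the parallel executions across the $k$ sectors are spatially disjoint; within a phase, collision-freedom is exactly the per-lemma guarantee. The computational-SEC claim then follows from transitivity of $\enclosed$: each lemma certifies its robots operate within the $SEC$ of \emph{its} input configuration, and by induction that input $SEC$ is itself $\enclosed SEC(\config_{init})$, so every point touched during the whole run lies inside $SEC(\config_{init})$.

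The hard part will not be any single geometric estimate---those are discharged inside the lemmas---but making the phase composition watertight under \asynch. I must show that the color-based handshaking cleanly serializes the procedures so that a robot with an outdated snapshot never applies the movement rule of one phase to the geometry of another, and that the in-$SEC(\config_{init})$ invariant holds not merely at the configuration endpoints named in each lemma but throughout every intermediate trajectory (migrations to the block chord, to the small circle $\SC$, and along the median diameter $\diam$). Establishing this containment during transit, uniformly across all concurrently active sectors, is the delicate point of the argument.
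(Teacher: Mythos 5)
Your proposal is correct and follows essentially the same route as the paper: \Cref{section:alltogether} proves \Cref{theorem:optimal} exactly by composing \Cref{lemma:circleformation,lemma:biangular,lemma:split,lemma:blockformation,lemma:smallcircle,lemma:slice,lemma:seqmatch} along the constant-length transition paths of \Cref{fig:transition}, with the color-based synchronization and the transitivity of $\enclosed$ giving collision-freedom and containment in $SEC(\config_{init})$. The ``delicate points'' you flag at the end (serialization under {\asynch} and containment during transit) are indeed discharged inside the individual lemmas and their appendix proofs rather than in the theorem's own argument, just as you anticipated.
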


As a corollary, our {\UCF} algorithm asymptotically optimizes the computational time (number of epochs) and the used light colors, and minimizes the computational SEC.

\section{Concluding Remarks}
\label{section:conclusion}
In this paper, we have studied the {\ucf} (\UCF) problem, which is considered an important special case of the fundamental \textsc{Geometric Pattern Formation} problem. 
Specifically, we have investigated {\UCF} under the {\asynch} scheduler in the \emph{luminous-opaque} robot model and presented a $O(1)$-time $O(1)$-color deterministic algorithm, which is asymptotically optimal in both the fundamental metrics, time and color complexities. 
Additionally, it minimizes what we have called the \emph{computational SEC}, i.e. the smallest circular area touched by the swarm during the execution of the algorithm.
The state-of-the-art solution \cite{PattanayakS2024} was optimal in either time complexity or color complexity but not both (\Cref{table:summary}) and the computational SEC was not minimized.

Although our solution optimizes the \emph{color complexity}, we have not focused on minimizing the \emph{exact number} of colors used along the whole algorithm. 
Indeed, the size of our palette can be significantly reduced by reusing some colors in multiple steps along the algorithm.
However, for the sake of clarity, instead of saving on the number of colors, we have used colors with specific, consistent, and meaningful names to help the reader understand their purposes and follow the algorithm steps.
Moreover, minimizing the palette size would result in proving that no ambiguities, deadlocks, or wrong computations can ever happen along the algorithm.
Trustingly, we leave this investigation for further work.

For other future work, it would be interesting to consider non-rigid movements of robots, so that a robot may not reach the computed destination but stop at some point along the Move trajectory. 
It would also be interesting to establish whether a $O(1)$-time solution can be obtained for {\UCF} considering oblivious robots (i.e. without lights) with both rigid and non-rigid movements of robots.

\newpage
\appendix{}

\section{Pseudo-code}\label{appendix:preudocode}

\begin{algorithm}[H]
	\caption{{\UCF} pseudocode}
        \label{alg:ucf}
	 \footnotesize		
        \SetKwInput{Input}{Input}
        \SetKwInput{Output}{Output}
        \SetKwInput{Result}{Final Result}

        \SetKwBlock{CV}{{\sc Complete Visibility}}{end}
        \SetKwBlock{SEC}{{\sc Circle Formation}}{end}
        \SetKwBlock{UT}{{\sc Uniform Transformation}}{end}
        \SetKwBlock{Split}{\proc{Split} [\Cref{sec:procedure_split}]}{end}
        \SetKwBlock{OddBlock}{\proc{Odd Block} [\Cref{sec:procedure_oddblock}]}{end}
        \SetKwBlock{SmallCircle}{\proc{Small Circle} [\Cref{sec:procedure_smallcircle}]}{end}
        \SetKwBlock{Slice}{\proc{Slice} [\Cref{sec:procedure_slice}]}{end}
        \SetKwBlock{Seq}{\proc{Sequential Match} [\Cref{section:seqmatch}]}{end}

        \CV{
            \Input{\conf{init} with $n$ robots in distinct positions on $\reals$}
            Use \cite{SharmaVT17} to arrange robots into a convex pattern\;
            \Output{\conf{convex}}
        }
        \SEC{
            \Input{\conf{convex}}
            $Cir \gets SEC(Conf_{convex})$\;
            Make all robots move to $Cir$ radially\;
        \Output{\conf{circle}}
        }
        
        \UT{
            \Input{\conf{circle}}
            \uIf{\conf{circle} is \conf{regular}}{
            \Result{\conf{regular}}
            } 
            \uElseIf{\conf{circle} is \conf{bingular}}{
                Robots slide along the edges of the exogenous polygon EP [\textbf{\Cref{section:componentII}}]\;
            \Result{\conf{regular}}
            }\Else{
                \conf{circle} is \conf{periodic}\;
                \Split{
                    \Input{\conf{periodic}}
                    Split $Cir$ into $k$ uniform sectors $\Upsilon_0, \dots, \Upsilon_{k-1}$\;
                    Set the chirality of each $\Upsilon_i$ through the \colr{left} and \colr{right} robots\;
                    \Output{\conf{unisect}}
                }
                $q\gets$ number of robots in each $\Upsilon_i$\;
                \uIf{$q\geq 12$}{
                    \OddBlock{
                        \Input{\conf{unisect}}
                        In each $\Upsilon_i$ form an odd-block with the \emph{left} and \emph{right} guards\;
                        Use {\BDCP} to make robots on $\Upsilon_i$ migrate on the block chord $\chord$\;
                        \Output{\conf{oddblock}}
                    }
                    \SmallCircle{
                        \Input{\conf{oddblock}}
                        In each odd-block, spot the inscribed small circle $\SC$\; 
                        Robots migrate from $\chord$ to $\SC$\;
                        Robots on $\SC$ equally distribute on the two halves $\SC_e$ and $\SC_w$\;
                        \Output{\conf{smallcircle}}
                    }
                    \Slice{
                        \Input{\conf{smallcircle}}
                        Robots on $\SC_w$ encode their rank\;
                        Robots on $\SC_e$ and $\SC_w$ reach their target uniform positions\;
                        \Result{\conf{regular}}
                    }
                }\Else{
                    \Seq{
                        \Input{\conf{unisect}}
                        Robots in each $\Upsilon_i$ reach their target uniform position using a sequential scheme\;
                         \Result{\conf{regular}}
                    }
                 }
               
            }
        }	
\end{algorithm}

\section{Correctness Proofs}\label{appendix:correctness_proofs}

\subsection{\proc{Split}}\label{sec:correctness_procedure_split}
We show here all the correctness proofs for \proc{Split} (\Cref{sec:procedure_split}).

\begin{lemma}[\cite{feletti2023journal}]\label{lemma:diam_splitting}
Let us consider a configuration where all $n$ robots lay on the same circle $Cir$.
\begin{itemize}
    \item For {\em odd} $n$, there exists a diameter passing through a robot and dividing $Cir$ into two half-circles, each having $\frac{n-1}{2}$ robots. 
    \item For \emph{even} $n$, there exists a diameter dividing $Cir$ into two halves such that
    $(i)$ either the diameter passes through just one robot and the half-circles have $\frac{n}{2}$ and $\frac{n}{2}-1$ robots, or $(ii)$ the diameter passes through two antipodal robots and the half-circles have $\frac{n}{2}-1$ robots each.
\end{itemize}
\end{lemma}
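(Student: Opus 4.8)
The plan is to prove both parts at once with a single rotating-diameter (discrete intermediate value) argument. Place the center $O$ of $Cir$ at the origin and encode each robot by its angular coordinate; for an orientation $\phi\in[0,2\pi)$ let $g(\phi)$ denote the signed imbalance of the oriented diameter in direction $\phi$, i.e.\ the number of robots strictly to its left minus the number strictly to its right, where robots lying on the diameter line count for neither side. First I would record the three structural facts that drive everything: (a) \emph{antisymmetry}, $g(\phi+\pi)=-g(\phi)$, since advancing $\phi$ by $\pi$ swaps the two open half-planes; (b) \emph{parity}, namely that whenever no robot lies on the diameter the numbers of robots on the two open sides sum to $n$, so $g(\phi)\equiv n\pmod 2$; and (c) $g$ is piecewise constant in $\phi$ and changes only when an endpoint of the diameter sweeps across a robot. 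Because a diameter meets $Cir$ in exactly two points, at most two (antipodal) robots can lie on it simultaneously.

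Next I would analyse the jumps of $g$. Since robot positions have distinct angles, generically a single endpoint crosses a single robot, moving that robot from one open side to the other; tracking signs shows each such crossing changes $g$ by exactly $\pm 2$, and at the crossing instant the robot lies momentarily on the line, so the at-instant value of $g$ is the intermediate integer. The only coincidence to handle is an \emph{antipodal event}, where the two endpoints cross an antipodal pair of robots simultaneously; a short sign check shows the two contributions $+2$ and $-2$ cancel, so the net jump across an antipodal event is $0$. Consequently, as $\phi$ ranges over a half-turn, antisymmetry forces $g$ to change sign (unless $g\equiv 0$), and it can only do so by passing through consecutive integers at an ordinary single-robot event, never purely at an antipodal event.

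For odd $n$, fact (b) makes $g$ odd on every robot-free diameter, so a sign change from $+1$ to $-1$ must occur at a single-robot event whose at-instant value is $0$. At that event exactly one robot sits on the diameter and the remaining $n-1$ robots split evenly into $(n-1)/2$ on each side, which is precisely the odd-$n$ claim (and the antipode is automatically empty). For even $n$, fact (b) makes $g$ even on robot-free diameters, so the sign change forces a genuine open interval of orientations on which $g=0$: a diameter avoiding all robots and splitting the swarm into two groups of $n/2$. I would then rotate this diameter toward the nearest robot until an endpoint first meets one. If that first event is an ordinary single crossing, one robot moves onto the line and the halves become $n/2$ and $n/2-1$, giving case $(i)$; if it is an antipodal event, two antipodal robots move onto the line and the halves become $n/2-1$ each, giving case $(ii)$. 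The degenerate case $g\equiv 0$ is harmless: robot-free balancing diameters are then abundant and the same rotation step applies.

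The main obstacle I anticipate is the jump bookkeeping of the second step: pinning down the exact sign of each single crossing, verifying the cancellation at antipodal events, and confirming that "passing through consecutive integers" genuinely forbids a sign change that skips $0$. Once those are nailed down, the parity dichotomy $g\equiv n\pmod 2$ does all the work, the odd case lands exactly on a one-robot diameter with balanced halves, and the even case splits into $(i)$/$(ii)$ according to whether the balancing diameter's nearest robot is antipodally paired.
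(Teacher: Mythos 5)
Your proof is correct, but it follows a genuinely different route from the paper's. The paper argues by induction on $n$ in steps of two (for odd $n$; the even case is declared ``similar''): it removes the two robots whose \emph{overlap arcs} are minimal, invokes the induction hypothesis to obtain a balancing diameter through a robot for the reduced configuration, and then observes that this diameter cannot leave the removed pair on the same half-circle because no third robot can sit inside their minimal overlap arcs. You instead run a rotating-diameter sweep: the signed imbalance $g(\phi)$ is antisymmetric under $\phi\mapsto\phi+\pi$, has parity $n\bmod 2$ on robot-free orientations, jumps by $\pm 2$ at single crossings and by $0$ at antipodal events, so a discrete intermediate-value argument lands you either on a one-robot diameter with $g=0$ (odd $n$) or on a robot-free balancing diameter that you then rotate until its first contact, which is single or antipodal according to cases $(i)$/$(ii)$. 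Your jump bookkeeping checks out (the at-instant value is indeed the intermediate integer, and the $+2$/$-2$ contributions of an antipodal pair do cancel), and at most two robots can ever lie on a diameter simultaneously, so there are no further event types to handle. What your approach buys is uniformity and completeness: both parities are treated by one invariant, the even-case dichotomy emerges automatically from the nature of the first contact, and no base case or ``similar'' deferral is needed. What the paper's induction buys is brevity, at the cost of leaving the even case and some degeneracies (e.g.\ whether the re-inserted robots can land exactly on $d$) to the reader.
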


\begin{proof}
Let $A,B$ be two distinct points on $Cir$, and let $D_A,D_B$ be the respective antipodal points on $Cir$.
We define the \emph{overlap arcs} of $A$ and $B$ as the two equal-length arcs $\wideparen{AD_B}$ and $\wideparen{BD_A}$.
Note that overlap arcs are null iff $A,B$ are antipodal.
Let us assume $n$ is odd, and reason by induction on $n$. 
For $n=1$, the property follows straightforwardly. 
Now, assume the property true for an odd $n>1$, and let us prove the property for $n+2$. 
Let us remove two robots, say $r_1$ and $r_2$, having minimal overlap arcs $\theta$ and $\theta'$, so that $n$ robots are left on $Cir$. 
So, by the induction hypothesis, there exists a diameter $d$ passing through a robot and dividing $Cir$ into two halves, each having $\frac{n-1}{2}$ robots. 
Notice that the only way for $d$ to leave $r_1$ and $r_2$ in the same half-$Cir$ is to pass through the overlap arcs $\theta$ and $\theta'$.
Yet, since $\theta$ and $\theta'$ are minimal, no other robot can sit on these overlap arcs (except for $r_1$ and $r_2$).
So, since $d$ passes through a robot, it clearly cannot leave $r_1$ and $r_2$ on the same half-$Cir$. 
Thus, the result follows.
For even $n$, the proof is similar.
\end{proof}

\begin{lemma}\label{lemma:phi1}
If $|\Phi| =1$, then the configuration is either asymmetric or mirror-symmetric with just one symmetry axis.
\end{lemma}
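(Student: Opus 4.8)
The plan is to argue entirely through the geometric symmetry group of the configuration rather than through the combinatorial definition, and then simply read off the number of reflection axes. Since all $n$ robots lie on $Cir$ with center $O$, any isometry of the plane fixing the robot set must fix $O$ (the unique center of the smallest enclosing circle), so the symmetry group $G$ of the configuration is a finite subgroup of the orthogonal group about $O$. Such a group is either a cyclic rotation group $C_m$ or a dihedral group $D_m$ (rotations of order $m$ together with $m$ reflection axes through $O$). Thus it suffices to show that $|\Phi|=1$ forces the rotation subgroup $G_{\mathrm{rot}} = G \cap SO(2)$ to be trivial, and then to observe that a trivial rotation part leaves room for at most one reflection axis.

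The key step is to show that $G_{\mathrm{rot}}$ acts on $\Phi$ and that this action is free. Let $\sigma \in G_{\mathrm{rot}}$ be a rotation symmetry carrying $r_i$ to $r_{\sigma(i)}$. Because a rotation preserves every central angle and the clockwise/counterclockwise orientation, it preserves the whole gap sequence read in either direction, so $\alpha(\sigma(i)) = \alpha(i)$ and $\beta(\sigma(i)) = \beta(i)$, hence $\mu(\sigma(i)) = \mu(i)$. In particular $\sigma$ maps $\{r_i : \mu(i)=\hat{\mu}\} = \Phi$ onto itself. Moreover a nontrivial rotation about $O$ has $O$ as its only fixed point, and no robot sits at $O$ since all robots lie on $Cir$; hence $G_{\mathrm{rot}}$ acts freely on the robots and a fortiori on $\Phi$. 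A free action partitions $\Phi$ into orbits each of size $|G_{\mathrm{rot}}|$, so $|G_{\mathrm{rot}}|$ divides $|\Phi|=1$, forcing $G_{\mathrm{rot}}=\{e\}$: the configuration has no nontrivial rotational symmetry.

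It remains to bound the reflections. If $G$ contained two distinct reflection axes, their composition would be a rotation through twice the (nonzero) angle between them, i.e.\ a nontrivial element of $G_{\mathrm{rot}}$, contradicting the previous step; equivalently, in the $C_m$/$D_m$ classification, $G_{\mathrm{rot}}$ trivial means $m=1$, so $G$ is either $C_1$ or $D_1$. Hence the configuration is either asymmetric (no reflection axis) or mirror-symmetric with exactly one axis, as claimed. The main obstacle is the bookkeeping of the second paragraph: one must verify carefully that rotations fix both angular sequences $\alpha$ and $\beta$ (so that $\Phi$ is genuinely invariant) and that the action is free (so that the divisibility conclusion applies); once $|G_{\mathrm{rot}}| \mid |\Phi|$ is in hand, the rest is the standard dichotomy for finite subgroups of the orthogonal group.
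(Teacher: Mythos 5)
Your proof is correct, and it takes a genuinely different route from the paper's. The paper argues directly on the angular sequences: with $|\Phi|=1$ the minimal sequence $\hat{\mu}$ starts at a unique robot $r$, and either it is readable in only one orientation (asymmetric case) or in both (then $r$ lies on a mirror axis, and $|\Phi|=1$ rules out a second axis). You instead work with the geometric symmetry group: it is a finite subgroup of the orthogonal group about the center, rotations preserve $\alpha$ and $\beta$ and hence act freely on $\Phi$, so $|G_{\mathrm{rot}}|$ divides $|\Phi|=1$ and the rotation part is trivial, after which two distinct reflections are impossible because their composition would be a nontrivial rotation. What your version buys is an explicit exclusion of rotational symmetry, which the paper's two-line proof leaves implicit (its claim that the one-orientation case is ``asymmetric,'' and that ``no other axes exist'' in the reflective case, both tacitly rely on exactly your orbit argument); what the paper's version buys is brevity and consistency with the angular-sequence formalism it uses throughout. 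One small point to tighten: you justify that every symmetry fixes $O$ by appealing to the smallest enclosing circle, but for an arbitrary set of concyclic points the SEC need not be the circle they lie on; here it is (and for $n\geq 3$ distinct concyclic robots the uniqueness of the circumcircle gives the same conclusion directly), so nothing breaks, but the circumcircle is the cleaner justification.
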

\begin{proof}
If $|\Phi| =1$, the smallest angular sequence $\hat{\mu}$ starts from a unique robot, say $r$.
There can exist just two cases: either $\hat{\mu}$ can be read just in one orientation (e.g. the clockwise one) or $\hat{\mu}$ can be read in both the orientations.
In the first case, the configuration is asymmetric. 
In the second case, $r$ lies on a mirror-symmetry axis. Since $|\Phi| =1$, no other axes exist in the configuration.
\end{proof}

\begin{lemma}
If $|\Phi| =1$, then we can split $Cir$ into two uniform sectors $\Upsilon_0,\Upsilon_1$.
\end{lemma}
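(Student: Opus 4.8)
The plan is to build the dividing diameter $d$ by feeding the purely combinatorial existence statement of \Cref{lemma:diam_splitting} into the symmetry dichotomy of \Cref{lemma:phi1}, and then to check that the two half-circles cut off by $d$ meet the three defining conditions of a uniform sector: equal arc length, equal number of robots, and a well-defined chirality. The single hypothesis $|\Phi|=1$ is what makes everything go through, since it simultaneously supplies a canonical reference robot and, by forbidding nontrivial rotational periodicity, guarantees a clean two-way split.

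First I would split on the parity of $n$ and apply \Cref{lemma:diam_splitting}. For odd $n$ it yields a diameter through exactly one robot leaving $\tfrac{n-1}{2}$ robots strictly on each side; for even $n$ it yields either a diameter through two antipodal robots (with $\tfrac{n}{2}-1$ robots on each open side) or one through a single robot (with $\tfrac{n}{2}$ and $\tfrac{n}{2}-1$). In every case $d$ cuts $Cir$ into two semicircular arcs of equal length, so the equal-arc-length condition is automatic. Size-balancedness holds outright in the odd case and in the antipodal even case; in the remaining even subcase it is restored exactly as the procedure prescribes, by keeping the lone robot on one endpoint of $d$ and relocating one robot to the empty endpoint, after which each open half carries $\tfrac{n}{2}-1$ robots.

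The hard part will be upgrading ``a balanced diameter exists'' into ``all robots deterministically agree on the same $d$'', because the robots are disoriented and several balanced diameters may a priori coexist. Here I would invoke \Cref{lemma:phi1}: when $|\Phi|=1$ the configuration is either asymmetric or has a single mirror axis, and in both cases the unique robot realizing $\hat{\mu}$ pins down a canonical frame. In the mirror-symmetric case that robot lies on the unique axis, which is a diameter splitting the remaining robots into two mirror-image families; counting the fixed robots on the axis shows this diameter is always balanced, so it is the canonical $d$. In the asymmetric case $\hat{\mu}$ is realized in only one orientation, so the reference robot fixes both an origin and a direction; every position then receives an unambiguous description, and any fixed rule (e.g.\ the balanced diameter whose on-axis robot comes first in the canonical clockwise scan) selects one common $d$, while \Cref{lemma:diam_splitting} guarantees the pool of candidates is nonempty. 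Finally I would observe that chirality is not intrinsic to the two arcs but is imposed afterwards: coloring the robot(s) on $d$ as \colr{regular} boundary markers and electing the \colr{left} and \colr{right} robots to the prescribed uniform positions $U_1$ and $U_{(n-1)/2}$ (resp.\ $U_{(n-2)/2}$) fixes a consistent local orientation inside each half, turning both into chiral uniform sectors $\Upsilon_0,\Upsilon_1$.
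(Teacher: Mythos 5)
Your proof follows essentially the same route as the paper's: it invokes \Cref{lemma:phi1} for the asymmetric/mirror-symmetric dichotomy and \Cref{lemma:diam_splitting} to obtain the balanced diameter, using the unique axis of symmetry itself as $d$ in the reflective case. The only minor divergence is on chirality — the paper asserts the two halves are intrinsically asymmetric, whereas you treat chirality as imposed afterwards by the elected \colr{left}/\colr{right} robots, which if anything tracks the \proc{Split} procedure more faithfully.
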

\begin{proof}
By \Cref{lemma:phi1}, the configuration is asymmetric or mirror-symmetric with just one symmetry axis.
In the first case, we can use \Cref{lemma:diam_splitting} to elect the leader robot and the diameter splitting the $Cir$ into two uniform sectors $\Upsilon_0,\Upsilon_1$. 
Indeed, such $Cir$-halves are asymmetric and so chiral.
In the second case, the axis of symmetry splits $Cir$ into two mirrored halves $\Upsilon_0,\Upsilon_1$. Since there is just one axis of symmetry, the two halves are asymmetric, thus guaranteeing the two sectors are chiral.
\end{proof}

\begin{lemma}\label{lemma:leaders_no_axes}
    In \conf{biperiodic}, \colr{leader} robots (i.e. the robots from $\Phi$) do not lay on the axes of symmetry.
\end{lemma}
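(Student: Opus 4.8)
The plan is to argue by contradiction, turning the geometric condition ``$r_p$ lies on an axis of symmetry'' into a purely combinatorial condition on its angular sequence and then clashing that condition with the minimality forced by $r_p \in \Phi$. First I would record the key characterization: if a robot $r_p$ lies on an axis of symmetry of a configuration on $Cir$, then the reflection across that axis (a line through the center $O$) fixes $r_p$ and reverses the cyclic orientation. Hence reading the gaps $a_i=\angle r_iOr_{i+1}$ clockwise from $r_p$ produces exactly the counterclockwise reading from $r_p$; that is, $\alpha(p)=\beta(p)$, a palindrome. In particular the two gaps flanking $r_p$ coincide, $a_{p-1}=a_p$, since the first and last entries of a palindromic sequence agree.

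Now suppose, for contradiction, that some \colr{leader} robot $r_p\in\Phi$ lies on an axis in \conf{biperiodic}. Because $r_p\in\Phi$ we have $\mu(p)=\hat{\mu}$, and because $\alpha(p)=\beta(p)$ this yields $\alpha(p)=\beta(p)=\hat{\mu}$, so the global minimum sequence $\hat{\mu}$ is itself a palindrome. Write $\hat{\mu}=\alpha(p)=(c_0,c_1,\dots,c_{n-1})$ with $c_j=a_{p+j}$ (indices mod $n$). Two facts drive the argument. First, since $\hat{\mu}$ is lexicographically $\preceq\alpha(i)$ for every $i$ and each $\alpha(i)$ begins with $a_i$, the leading gap $c_0=a_p$ is the minimum gap, so $c_0\le c_j$ for all $j$. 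Second, the palindrome condition gives $c_{n-1}=c_0$, i.e. $a_{p-1}=c_0$, which together with $a_{p-1}=a_p$ is consistent. The decisive step is to compare $\alpha(p)$ with the clockwise sequence of its neighbour $r_{p-1}$: using $a_{p-1}=a_p=c_0$ one gets $\alpha(p-1)=(c_0,c_0,c_1,c_2,\dots,c_{n-2})$, whereas $\alpha(p)=(c_0,c_1,\dots,c_{n-1})$. Minimality forces $\alpha(p)\preceq\alpha(p-1)$, so in the term‑by‑term comparison the first index $m\ge 1$ with $c_m\neq c_{m-1}$ would have to satisfy $c_m<c_{m-1}$. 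But up to that point $c_0=c_1=\dots=c_{m-1}=c_0$, and $c_m\ge c_0=c_{m-1}$ by the minimality of $c_0$, giving $c_m>c_{m-1}$ instead — a contradiction. Hence no such index exists, all $c_j$ are equal, the gaps are uniform, and the configuration is \conf{regular}, i.e. $|\Phi|=n$. This contradicts the hypothesis that a periodic (in particular \conf{biperiodic}) configuration has $|\Phi|<n$.

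I expect the main obstacle to be isolating exactly this comparison: the contradiction does not arise from any global symmetry or orbit count, but from the single local observation that a palindromic sequence cannot simultaneously be the lexicographic minimum unless it is constant. The one‑step comparison against $r_{p-1}$ is precisely what converts ``palindrome $+$ minimal'' into ``all gaps equal.'' The only remaining care is routine index bookkeeping modulo $n$, and the harmless remark that axes passing through arc midpoints (carrying no robot) are irrelevant, since the claim concerns only the robots of $\Phi$.
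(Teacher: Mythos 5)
Your proof is correct and follows essentially the same route as the paper's: both argue by contradiction that a leader on an axis has a palindromic minimal sequence, so its flanking gaps coincide, and then compare against the adjacent robot's clockwise sequence $\mu_0\mu_0\mu_1\dots$ to contradict lexicographic minimality (the paper exhibits the smaller sequence directly at the first index where $\mu_i\neq\mu_0$, while you phrase it as forcing all gaps equal and hence regularity, contradicting periodicity — the same comparison either way). Your explicit justification that $\mu_0$ is the minimum gap is a detail the paper leaves implicit, but the argument is the same.
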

\begin{proof}
    By contradiction, let $r$ be a \colr{leader} robot lying on an axis of symmetry of the configuration.
    Since $r$ belongs to $\Phi$, the minimal angular sequence $\hat{\mu}$ starts from $r$, in both directions (being $r$ on a mirror-symmetry axis).
    Let $\hat{\mu} = \mu_0\mu_1\mu_2\dots $, and let $r'$ be the left (w.l.o.g.) adjacent robot to $r$, so that from $r'$ starts the angular sequence $\dot{\mu} = \mu_0\mu_0\mu_1\dots$ in the clockwise direction.
    Let $\mu_i$ be the first angle in $\hat{\mu}$ which is different from $\mu_0$ (since the configuration is not regular, such an angle always exists).
    So we can conclude that $\dot{\mu} = \mu_0\mu_0\dots\mu_i\dots$ is lexicographically smaller than $\hat{\mu}=\mu_0\dots\mu_i\dots $, and so $r$ does not belong to $\Phi$.
    Contradiction.    
\end{proof}

\begin{observation}\label{obs:collinearity_circle}
    Consider a circle configuration where the robots lay on the same circle $Cir$.
    If a robot $r$ on position $X$ has to reach a new position $Y$ on $Cir$ traveling along the chord $\overline{XY}$, then $r$ can create a collinearity $\overline{arb}$ with just two robots $a,b$ at a given time during its movement.
    Indeed, $a$ and $b$ must lay on the two distinct arcs cut by the chord $\overline{XY}$, each on a different arc.
\end{observation}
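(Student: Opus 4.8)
The plan is to reduce the statement to the elementary fact that a line meets a circle in at most two points, supplemented by a transversality argument that forces the two witnesses onto opposite sides of the chord.

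First I would fix an arbitrary instant of the motion and let $P$ denote the current position of $r$. Since $P$ lies on the chord $\overline{XY}$ strictly between its endpoints (the endpoints $X,Y$ belong to $Cir$, but every interior point of a chord lies strictly inside the disk bounded by $Cir$), the point $P$ is interior to that disk. By hypothesis every robot other than $r$ lies on $Cir$. A line through an interior point of a circle crosses the circle in exactly two points, and the interior point lies strictly between them; hence any line through $P$ contains at most two robot positions, and whenever it contains two such positions $a$ and $b$, the point $P$ lies strictly between them. This yields the configuration $\overline{arb}$ and already establishes that $r$ can be collinear with at most two robots at a time.

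Next I would prove the arc-separation claim. Let $\ell$ be the line through the collinear triple $a,r,b$, and recall that $P\in\ell$ lies on $\overline{XY}$. I would distinguish whether $\ell$ coincides with the supporting line of $\overline{XY}$. If it does, then the only points of $Cir$ on $\ell$ are $X$ and $Y$, so $\{a,b\}=\{X,Y\}$; but $X$ is the (now vacated) starting point of $r$ and $Y$ is its still-empty destination, so these two points are not simultaneously occupied by two distinct further robots, and this degenerate collinearity does not arise. Otherwise $\ell$ crosses the supporting line of $\overline{XY}$ transversally at $P$, so the two rays of $\ell$ issuing from $P$, one toward $a$ and one toward $b$, enter the two opposite open half-planes determined by that supporting line. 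Consequently $a$ and $b$ lie in opposite half-planes, hence on the two distinct open arcs into which the chord $\overline{XY}$ partitions $Cir$, with exactly one robot on each arc.

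I expect the only delicate point to be the bookkeeping of the degenerate case $\ell=\overline{XY}$: one must invoke that robot positions are distinct and that $X$ and $Y$ are not occupied by two additional robots during the transit of $r$, so that the sole surviving possibility is the transversal one, for which the half-plane (equivalently, arc) dichotomy is immediate from the fact that $P$ sits on the chord. Everything else is a direct application of the line–circle intersection bound, so no computation is needed.
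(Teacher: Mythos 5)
Your proof is correct. The paper states this as an \emph{Observation} with no accompanying proof, and your argument — the line--circle intersection bound applied to the line through the robot's current interior point $P$, plus the transversality argument placing $a$ and $b$ in opposite half-planes of the supporting line of $\overline{XY}$ — is exactly the elementary justification the authors implicitly rely on; your careful dismissal of the degenerate case $\ell=\overline{XY}$ is a detail the paper glosses over but does not change the substance.
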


\begin{figure}[h!]
  	\centering
	\subfloat[\colr{pre\_left} (here \emph{light \colblockL}) and \colr{pre\_right} (here \emph{light \colblockR}) robots have to reach the uniform positions and thus to fix the uniform sectors.]{
			\centering
                \begin{tikzpicture}[scale=0.5, transform shape, font = {\LARGE}]
                    \def\r{\ray*1.2}
                    \def\p{1.5}
                    \def\a{360/24}
                    \draw [draw=none] (0,0) circle (1.1*\r);
                    \draw [very thick] (0,0) circle (\r);
                    \draw[dotted] (90:\r) -- (270:\r); 
            
                    \foreach \i in {0,1} 
                    {
                        \draw[] (90*\i+\a*0.5:\r) -- (180 + 90*\i+\a*0.5:\r); 
                        \draw[] (90*\i-\a*0.5:\r) -- (180 + 90*\i-\a*0.5:\r); 
                    }
            
                    \node (a) at (90:\r*0.7) {$\frac{2\pi}{n}$};
            
                    \rnode{\collead}{above left}{}{90+\p*\a:\r};
                    \rnode{\collead}{above right}{}{90-\p*\a:\r};
            
                    \rnode{\collead}{above left}{}{270+\p*\a:\r};
                    \rnode{\collead}{above right}{}{270-\p*\a:\r};

                    \draw[dashed,-latex] (180-35:\r) -- (90+\a*0.5:\r);
                    \rnode{\colblockL}{below left}{}{{90+\p*\a}:\r*0.9};
            
                    \foreach \i[evaluate=\i as \x using {Mod(\i,2)},
                                evaluate=\i as \z using {2*\x-1}] in {0,1,2,3} 
                    {
                        \rnode{\coldefault}{below right}{}{{90*(\i+\x)+30*\z}:\r};
                        \rnode{\coldefault}{below right}{}{{90*(\i+\x)+20*\z}:\r};
                        \rnode{\colblockL!50}{below left}{}{{90*(\i+\x)+37*\z}:\r};
                        \rnode{\colblockR!50}{below left}{}{{90*(\i+\x)+13*\z}:\r};
                        \draw[dotted] (90*\i:\r) -- (0:0); 
                    }
                    \node (a) at (-38:\r*1.15) {$r$};

                \end{tikzpicture}
                
        }
    \hspace{\textwidth/4}
	\subfloat[\colr{left} (\emph{\colblockL}) and \colr{right} (\emph{\colblockR}) robots fix the uniform sectors.]{

			\begin{tikzpicture}[scale=0.5, transform shape, font = {\LARGE}]
                    \def\r{\ray*1.2}
                    \def\p{1.5}
                    \def\a{360/24}
                    \draw [draw=none] (0,0) circle (1.1*\r);
                    \draw [very thick] (0,0) circle (\r);
                    \draw[dotted] (90:\r) -- (270:\r); 
            
                    \foreach \i in {0,1} 
                    {
                        \draw[] (90*\i+\a*0.5:\r) -- (180 + 90*\i+\a*0.5:\r); 
                        \draw[] (90*\i-\a*0.5:\r) -- (180 + 90*\i-\a*0.5:\r); 
                    }
            
                    \node (a) at (90:\r*0.7) {$\frac{2\pi}{n}$};
            
                    \rnode{\collead}{above left}{}{90+\p*\a:\r};
                    \rnode{\collead}{above right}{}{90-\p*\a:\r};
            
                    \rnode{\collead}{above left}{}{270+\p*\a:\r};
                    \rnode{\collead}{above right}{}{270-\p*\a:\r};
            
                    \foreach \i[evaluate=\i as \x using {Mod(\i,2)},
                                evaluate=\i as \z using {2*\x-1}] in {0,1,2,3} 
                    {
                        \rnode{\coldefault}{below right}{}{{90*(\i+\x)+30*\z}:\r};
                        \rnode{\coldefault}{below right}{}{{90*(\i+\x)+20*\z}:\r};
                        \rnode{\colblockR}{below left}{}{{90*(\i+\x)+\a*0.5*\z}:\r};
                        \rnode{\colblockL}{below left}{}{{90*(1+\i+\x)-\a*0.5*\z}:\r};
                        \draw[dotted] (90*\i:\r) -- (0:0); 
                    }

                \end{tikzpicture}
        }
\caption{\proc{Split} in a biperiodic configuration with $|\Phi|=4$ uniform sectors to be fixed. The moving \colr{to\_left} robot hides the \colr{leader} (here \emph{\collead}) robot from $r$. However, $r$ can compute its target point even in the case of collinearities, thus fixing the uniform sectors in constant time.}
\label{fig:split_collinearity}
\end{figure}
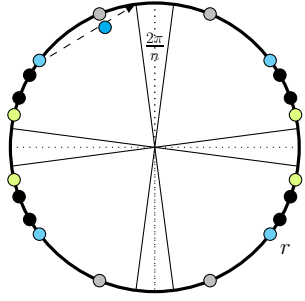
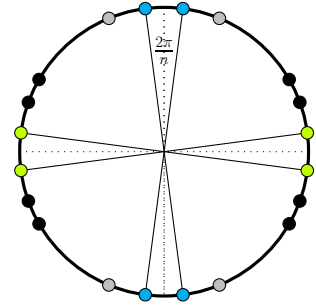

\begin{lemma}\label{lemma:to_left_compute_destination}
    In the case of a biperiodic configuration, a \colr{pre\_left} (\colr{pre\_right}, resp.) robot can always recover its destination point even if other \colr{to\_left} (\colr{to\_right}, resp.) robots are moving and hiding \colr{leader} robots.
\end{lemma}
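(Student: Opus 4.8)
The plan is to pin down exactly what a computing \colr{pre\_left} robot $s$ in a sector $\Upsilon_i$ must read off, and then show the biperiodic pattern is redundant enough that the transiting \colr{to\_left} robots cannot destroy that information. First I would recall the destination rule of \proc{Split}: $U_1$ is the point of $\Upsilon_i$ at central angle $2\pi/n$ (or $\pi/n$) from the left boundary $B_i$, and $B_i$ is the midpoint of $\wideparen{r_{i-1}r_i}$, where the $r_j$ are the \colr{leader} robots. Because the $|\Phi|$ uniform sectors are congruent and equally spaced, the boundaries $B_0,\dots,B_{|\Phi|-1}$ lie at angular intervals of $2\pi/|\Phi|$ on $Cir$; hence it suffices for $s$ to recover the periodicity $|\Phi|$, the phase of one boundary, the count $n$, and whether a robot sits on $B_i$. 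I would stress that $s$ can tell it is in the biperiodic branch at all because the $\Phi$-robots here carry color \colr{leader}, and at least one of them will be shown to be visible.

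Next I would bound the occlusions. Every robot $s$ needs to read lies on $Cir$, and points of a common circle are pairwise mutually visible (a line meets $Cir$ in at most two points, so no circle-robot ever blocks another); therefore the only possible obstacles are robots strictly interior to $Cir$, namely the \colr{to\_left} robots in transit along their chords. Specializing \Cref{obs:collinearity_circle} to the viewer $s\in Cir$, at any instant a transiting robot forms at most one collinear triple $\overline{a\,\cdot\,b}$ with circle-robots $a,b$; for $s$ to be blocked it must itself be one of $a,b$, in which case only the partner is hidden. Thus each transiting robot hides at most one circle-robot from $s$. During the left-movement phase there is one elected \colr{to\_left} robot per sector, and $s$'s own elected robot is $s$ itself, which is stationary (\colr{pre\_left}) while it computes; so there are at most $|\Phi|-1$ transiting robots, and at most $|\Phi|-1$ circle-robots are invisible to $s$ at any instant. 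The \colr{to\_} convention lets $s$ recognize and discount robots in transit, so $s$ effectively sees the fixed periodic pattern on $Cir$ with at most $|\Phi|-1$ of its points erased.

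I would then reconstruct the boundary grid. Since $Cir$ is partitioned into $|\Phi|$ congruent sectors and at most $|\Phi|-1$ points are erased, a pigeonhole argument yields at least one sector entirely visible to $s$, from which it reads the per-sector size $q$ and the exact intra-sector pattern; combined with the visible \colr{leader} robots and the reflective symmetry that every biperiodic configuration is known to possess, this determines $|\Phi|$ and the phase of the equally spaced boundary grid. From the grid $s$ selects $B_i$ as its sector's left boundary (identified through the sector's chirality), recovers $n$ from $|\Phi|$, $q$, and the boundary occupancies, tests whether a robot lies on $B_i$, and computes $U_1$; the reflected argument handles the \colr{pre\_right} robot and $U_q$ verbatim.

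The step I expect to be the main obstacle is this reconstruction, not the counting. The danger is an adversary that aims the $|\Phi|-1$ transiting robots so as to hide precisely the leaders needed to fix the phase — and indeed hiding every other leader breaks every adjacent leader pair $(r_{j-1},r_j)$, so $s$ cannot in general read a boundary as a single arc-midpoint. The real work is to show that the surviving visible structure, together with the known dihedral symmetry and the color-based identification of the (at most $|\Phi|-1$) transiting and vacated positions, still admits a \emph{unique} biperiodic completion, so that no second biperiodic configuration is consistent with $s$'s view. The bound ``at most $|\Phi|-1$ erased points'' against the redundancy of $|\Phi|$ symmetric copies is exactly what forces uniqueness, but converting ``at least one sector survives'' into ``the entire boundary grid is uniquely recomputable'' is where the argument must be made carefully.
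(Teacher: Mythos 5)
There is a genuine gap, and you have in fact flagged it yourself: your final paragraph concedes that the step from ``at least one sector survives the occlusions'' to ``the boundary grid is uniquely recomputable'' is the real work and is not carried out. Your occlusion accounting is sound (a transiting robot is interior to $Cir$, so it hides at most one circle-robot from the viewer, giving at most $|\Phi|-1$ erased points), but the pigeonhole conclusion does not by itself determine the period $|\Phi|$ or the phase of the boundaries: as you note, an adversary can hide a leader in every sector but one, and a fully visible sector cannot even be delimited without already knowing the boundaries, so the argument is circular as stated. A proof that no second biperiodic configuration is consistent with the erased view is never given, so the lemma is not established.

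The paper avoids the global uniqueness question entirely by working locally, and the two ingredients it uses are exactly what your write-up is missing. First, each sector contains \emph{two} stationary marker robots with distinguishable colors (\colr{leader} and \colr{pre\_right}), while at most \emph{one} robot per sector is in transit; by \Cref{obs:collinearity_circle} that single transiting robot can occlude at most one robot of its own sector from the viewer $r$, so in each adjacent sector at least one of the two markers remains visible. Second, when the marker that $r$ needs is hidden, the occluding \colr{to\_left} robot is itself visible to $r$, and the hidden marker must lie at the second intersection of the line of sight $r$--occluder with $Cir$, so its exact position is recoverable rather than merely bounded. With the adjacent sector's \colr{pre\_right} position in hand (seen or reconstructed), the shared boundary is the arc midpoint between it and $r$'s own \colr{pre\_right}, and congruence of the sectors then yields the whole grid and $n$. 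If you want to keep your global counting framing you would still need to inject these two observations (or an explicit uniqueness-of-completion argument) to close the proof.
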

\begin{proof}
    Let $r$ be a \colr{pre\_left} robot on the uniform sector $\Upsilon_i$, activated during the movement of a \colr{to\_left} robot (see \Cref{fig:split_collinearity}).
    Since no robot moves in $\Upsilon_i$, $r$ can see at least a \colr{leader} and a \colr{pre\_right} robot.
    If the uniform sectors are also defined by \colr{regular} robots (\Cref{subfig:unisplit_biper_odd}), then $r$ can see all of them (in fact, no robot moves out from its original uniform sector, and so no robot hides \colr{regular} robots from $r$).
    Thus, in presence of \colr{regular} robots, $r$ can determine the boundaries of the uniform sectors (i.e. the points where \colr{regular} robots lay on and the middle points on $Cir$ between any pair of consecutive \colr{regular} robots) and the cardinality $n$ of the swarm by multiplying the number of robots inside its sector $\Upsilon_i$ with the number of uniform sectors (note that $r$ has complete visibility of $\Upsilon_i$ since no robot is moving within it).
    
    If $r$ cannot see any \colr{regular} robots, then the configuration originally was biperiodic and no robots lay on the boundaries of the uniform sectors (\Cref{subfig:unisplit_biper}).
    In this case, $r$ can spot the \colr{pre\_right} and the \colr{leader} robot belonging to $\Upsilon_i$ by simply selecting the closest \colr{pre\_right} and the closest \colr{leader} robot to $r$ (note that \colr{pre\_right} and \colr{leader} robots lay on symmetrical positions in adjacent sectors).
    Following the \colr{leader}-\colr{pre\_right} orientation of $\Upsilon_i$, let us consider the adjacent uniform sector, say $\Upsilon_{i+1}$.
    Let $g$ be the closest \colr{leader} or \colr{pre\_right} robot which $r$ sees following this direction.
    Of course, $g$ must belong to $\Upsilon_{i+1}$. 
    In fact, since within $\Upsilon_{i+1}$ at most one robot is moving, at least one \{\colr{leader}, \colr{pre\_right}\}-colored robot cannot be hidden from $r$ (by \Cref{obs:collinearity_circle}).
    If $g$ is \colr{pre\_right}, then $r$ can determine the boundary between $\Upsilon_i$ and $\Upsilon_{i+1}$ as the middle point on $Cir$ between $g$ and the position of the \colr{pre\_right} robot belonging to $\Upsilon_i$.
    Otherwise, it means that the \colr{pre\_right} robot of $\Upsilon_{i+1}$ is hidden by a moving robot which is visible to $r$.
    Thus, $r$ can compute the exact position of the hidden \colr{pre\_right} robot and so determine the boundary between $\Upsilon_i$ and $\Upsilon_{i+1}$.
    The same strategy is applied searching the \colr{leader} robot in $\Upsilon_{i-1}$.
    Being the uniform sectors equal in amplitude and number of robots, $r$ can easily reconstruct the boundaries of all the other sectors and so the number of robots in the swarm.
    
    The proof for the \colr{pre\_right} robots is similar.
\end{proof}

\subsection{\proc{Odd Block}}\label{sec:correctness_procedure_oddblock}
We show here all the correctness proofs for \proc{Odd Block} (\Cref{sec:procedure_oddblock}).

\begin{lemma}\label{lemma:to_x_compute_destination}
    Consider a configuration \conf{unisect} with $k$ uniform sectors.
    Let $r$ be a \colr{pre\_x} robot (being $\colr{x}\in\{\colr{mid, median, blockR}\}$) lying on the arc of a sector $\Upsilon_i$.
    Within $\Upsilon_i$, no robot is moving.
    Then, $r$ can always determine the boundaries of all the uniform sectors and the cardinality of the swarm $n$ even if at most one robot \colr{to\_x} is moving within any other sector $\Upsilon_{j\neq i}$ to reach a new position on $Cir$.
\end{lemma}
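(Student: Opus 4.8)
The plan is to show that $r$ can reconstruct everything it needs from its own sector alone, and then to argue that the at most one moving \colr{to\_x} robot per foreign sector can never corrupt $r$'s view of $\Upsilon_i$. Since the hypothesis already gives that nothing moves inside $\Upsilon_i$, and since in \conf{unisect} all $n$ robots still lie on $Cir$ (so every robot of $\Upsilon_i$ sits on the arc $\wideparen{B_iB_{i+1}}$, and no three concyclic points are collinear), the robots of $\Upsilon_i$ cannot occlude one another from $r$. Thus, once I rule out occlusion by foreign moving robots, $r$ reads off the full internal layout of $\Upsilon_i$: the \colr{left} robot at $U_1$, the \colr{right} robot at $U_q$, any \colr{regular} robot sitting on a boundary, and the total count $q$ of robots in the sector.

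From this internal view I would recover the two boundaries $B_i,B_{i+1}$. If a \colr{regular} robot marks a boundary, $r$ has it directly. Otherwise (the biperiodic case with no robot on the boundary) the construction of \proc{Split} placed $U_1$ (resp.\ $U_q$) at angular distance $\pi/n$ from $B_i$ (resp.\ $B_{i+1}$); to realize this offset concretely, $r$ first reconstructs the uniform step $2\pi/n$ by counting the $q$ robots and measuring $\angle U_1OU_q=(q-1)\cdot 2\pi/n$, then steps $\pi/n$ outward from $U_1$ and $U_q$. Either way $r$ obtains $B_i,B_{i+1}$, hence the central angle $\angle B_iOB_{i+1}=2\pi/k$ and so $k$, and it reads $n=2\pi/(\text{uniform step})$. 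Because all uniform sectors are congruent (equal central angle and equal count $q$), $r$ extrapolates every sector boundary around $Cir$ without ever having to observe a foreign sector; this is precisely why occlusions occurring elsewhere are harmless to the computation.

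The hard part will be proving that no moving \colr{to\_x} robot in a sector $\Upsilon_{j\neq i}$ can hide a robot of $\Upsilon_i$ from $r$. Here I would invoke \Cref{obs:collinearity_circle}: such a robot travels along a chord $\overline{XY}$ whose endpoints lie on the arc of $\Upsilon_j$, so its minor arc $\wideparen{XY}$ stays inside $\Upsilon_j$ and the whole of $\Upsilon_i$ lies on the complementary (major) arc. Any line of sight $\overline{rs}$ with $r,s\in\Upsilon_i$ therefore joins two points of the \emph{same} arc, whereas by \Cref{obs:collinearity_circle} the moving robot can be collinear only with a pair of robots lying on \emph{opposite} arcs of its own chord; hence it can never lie on $\overline{rs}$ and cannot occlude $s$ from $r$. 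Consequently $r$ sees every robot of $\Upsilon_i$ irrespective of the foreign moving robots, which closes the argument. The reasoning parallels that of \Cref{lemma:to_left_compute_destination} but is in fact simpler, since here $r$ can confine its attention entirely to $\Upsilon_i$ rather than having to locate reference robots in adjacent sectors.
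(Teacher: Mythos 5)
Your occlusion argument is correct and in fact cleaner than the paper's: a \colr{to\_x} robot in $\Upsilon_{j\neq i}$ moves along a chord whose endpoints both lie on the arc of $\Upsilon_j$, so by \Cref{obs:collinearity_circle} it can only block a line of sight whose two endpoints lie on opposite arcs of that chord, whereas two robots of $\Upsilon_i$ both lie on the major arc. This does give $r$ full visibility of its own sector (and of every \colr{regular} robot, since a sector boundary is never an interior point of such a minor arc).

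The gap is in the reconstruction step. You derive $n$ from $q$ via $\angle U_1OU_q=(q-1)\cdot \frac{2\pi}{n}$, and you derive the boundaries $B_i,B_{i+1}$ from $n$ (the $\pi/n$ offset), but you obtain $q$ by ``counting the robots of $\Upsilon_i$'' --- which presupposes that $r$ already knows $B_i$ and $B_{i+1}$. This is circular, and not harmlessly so: in \conf{unisect} the \colr{split} robots are still at their original positions, so robots may lie on $\wideparen{B_iU_1}$ and on $\wideparen{U_qB_{i+1}}$ (these are exactly the \colr{out\_chord} robots of Stage~2 of \proc{Odd Block}), and, by the reflective symmetry of the original \conf{biperiodic}, mirror robots sit just across the boundary in $\Upsilon_{i\pm1}$. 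In the case where no \colr{regular} robot marks the boundaries, $r$ therefore sees an undifferentiated cluster of \colr{split} robots around where $B_{i+1}$ ought to be and has no sector-internal way to decide which of them to count into $q$; an error there corrupts $n$ and every boundary. This is precisely the case the paper's proof works hardest on: it abandons the ``own sector only'' viewpoint and locates the \colr{right} (resp.\ \colr{left}) robot of $\Upsilon_{i+1}$ (resp.\ $\Upsilon_{i-1}$), recovering $B_{i+1}$ as the midpoint between the two \colr{right} robots straddling it; the possible occlusion of that foreign robot by the single moving \colr{to\_x} robot is then disposed of via \Cref{obs:collinearity_circle} together with the fact that $r$ can reconstruct the hidden robot's position as the intersection of the obstructed line of sight with $Cir$. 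Your proof needs either this foreign-sector step or a genuinely sector-internal way to pin down $q$; as written, neither is supplied. (When a \colr{regular} robot does bound $\Upsilon_i$, your argument is fine, though it is simpler to read $2\pi/n$ directly off the angle between that robot and the adjacent \colr{left} or \colr{right} robot, as the paper does, rather than to count.)
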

\begin{proof}
    The proof is similar to that of \Cref{lemma:to_left_compute_destination}.
    Since no robot moves in $\Upsilon_i$, $r$ can see at least a \colr{left} and a \colr{right} robot.
    If \conf{unisect} is also defined by \colr{regular} robots, then $r$ can see all of them (in fact, no robot moves out from its original uniform sector, and so hides \colr{regular} robots from $r$).
    According to the number of \colr{regular} robots and the number of \{\colr{left}, \colr{right}\}-colored robots between two \colr{regular} robots, $r$ can determine if the configuration originally was uniperiodic (\Cref{subfig:unisplit_uniper}) or biperiodic (\Cref{subfig:unisplit_odd,subfig:unisplit_biper,subfig:unisplit_biper_odd}).
    If there are \colr{regular} robots, $r$ can determine $n$ by the central angle $2\pi/n$ between the \colr{regular} and the related \colr{left} or \colr{right} robot, and the positions of all the uniform sectors.
  
    If $r$ cannot see any \colr{regular} robots, then the configuration originally was biperiodic and no robots lay on the boundaries of the uniform sectors (\Cref{subfig:unisplit_biper_odd}).
    In this case, $r$ can spot the pair of \colr{left}-\colr{right} robots belonging to $\Upsilon_i$.
    In fact, $r$ selects the closest \colr{left}-\colr{right} pair which cuts a sector where no robot is moving.
    Following the \colr{left}-\colr{right} orientation of $\Upsilon_i$, let us consider the adjacent uniform sector $\Upsilon_{i+1}$, which is located on the right w.r.t. $\Upsilon_i$.
    Let $g$ be the closest \colr{left} or \colr{right} robot which $r$ sees following the right direction.
    Of course, $g$ must belong to $\Upsilon_{i+1}$. 
    In fact, since within $\Upsilon_{i+1}$ at most one \colr{to\_x} robot is moving, at least one \{\colr{left}, \colr{right}\}-colored robot cannot be hidden from $r$ (by \Cref{obs:collinearity_circle}).
    If $g$ is \colr{right}, then $r$ can determine the boundary between $\Upsilon_i$ and $\Upsilon_{i+1}$ as the middle point on $Cir$ between $g$ and the position of the \colr{right} robot belonging to $\Upsilon_i$.
    Otherwise, it means that the \colr{right} robot of $\Upsilon_{i+1}$ is hidden by a moving robot which is visible to $r$.
    Thus, $r$ can compute the exact position of the hidden \colr{right} robot and so determine the boundary between $\Upsilon_i$ and $\Upsilon_{i+1}$.
    The same strategy is applied searching the \colr{left} robot in $\Upsilon_{i-1}$.
    Being the uniform sectors equal in amplitude and number of robots, $r$ can easily reconstruct the boundaries of all the other sectors and so the number of robots in the swarm.
\end{proof}

\begin{corollary}
    In \proc{Odd Block} the setting of the \colr{blockR}, \colr{median}, and \colr{mid} takes $O(1)$ epochs.
\end{corollary}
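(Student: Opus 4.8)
The plan is to show that the setting of each of the three special colors decomposes into a constant number of phases, each completing in $O(1)$ epochs, while correctness under \asynch{} is guaranteed by the immediately preceding lemma. The key structural fact I would first establish is an invariant: for each of the three colors, the election rule designates a \emph{unique} \colr{split} robot per uniform sector, namely the one closest to a well-defined target (the uniform position $U_{q-1}$ for \colr{blockR} when $q$ is even, the midpoint of $\barc$ for \colr{median}, and the midpoint $M$ of the arc of $\Upsilon_i$ for \colr{mid}). Together with the \colr{pre\_x} synchronization barrier—which forces all sectors to agree before any robot starts moving—this invariant ensures that at any instant at most one \colr{to\_x} robot is in transit within each sector. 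This is precisely the hypothesis required to invoke \Cref{lemma:to_x_compute_destination}.

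Next I would analyze the phases for the setting of a single color. In the \emph{election} phase, each designated robot computes its target position and colors itself \colr{pre\_x}; by \Cref{lemma:to_x_compute_destination} this computation is correct even while a \colr{to\_x} robot moves in some other sector, and by the fairness assumption every such robot activates within one epoch. In the \emph{synchronization} phase, all sectors wait until their \colr{pre\_x} robot is set, which again completes within $O(1)$ epochs by fairness. In the \emph{movement} phase, the \colr{pre\_x} robots move simultaneously along chords to their targets; by \Cref{obs:collinearity_circle} each such motion produces only transient collinearities with at most two robots, so \Cref{lemma:to_x_compute_destination} still lets every robot re-verify its destination on each activation, and all movements finish in $O(1)$ epochs. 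A final \emph{coloring} phase, in which each arrived robot adopts color \colr{x}, costs one more epoch. Summing these four phases yields $O(1)$ epochs per color. Since the three colors are assigned sequentially (\colr{blockR} first when $q$ is even, then \colr{median}, then \colr{mid} during Stage 2), the total is a sum of three $O(1)$-epoch processes and hence $O(1)$.

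The main obstacle I anticipate is not the epoch count—which follows routinely from fairness—but rigorously justifying the invocation of \Cref{lemma:to_x_compute_destination}: I must verify that the election is genuinely unique within each sector and that the \colr{pre\_x}/\colr{to\_x} color discipline truly serializes the in-transit robots, so that no sector ever contains two simultaneously moving robots of the same type. Edge cases deserve care here, such as when the \colr{median} robot already lies on $M$ (in which case no \colr{mid} robot moves at all) or when the designated \colr{pre\_x} robot already occupies its target. Once this serialization invariant is firmly in place, the corollary is an almost immediate consequence of \Cref{lemma:to_x_compute_destination} combined with the fairness assumption.
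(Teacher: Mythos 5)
Your proposal is correct and follows essentially the same route as the paper: the paper's own proof is just a two-sentence invocation of \Cref{lemma:to_x_compute_destination} to conclude that each \colr{pre\_x} robot computes its target correctly despite concurrent \colr{to\_x} movement in other sectors, so the setting time is independent of the number of uniform sectors and hence $O(1)$. Your additional scaffolding (uniqueness of the election, the \colr{pre\_x} synchronization barrier, the four-phase decomposition, and the sequential treatment of the three colors) is a more explicit spelling-out of what the paper leaves implicit, not a different argument.
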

\begin{proof}
    By \Cref{lemma:to_x_compute_destination}, we know that each \colr{pre\_x} robot $r$ (being $\colr{x}\in\{\colr{mid, median, blockR}\}$) computes its target position (where it will assume the color \colr{x} once stopped there and reactivated) even if other \colr{to\_x} robots are moving to reach their target positions in the other uniform sectors.
    It follows that the time taken for setting the \colr{x} robots is independent of the number of uniform sectors.
\end{proof}

\begin{lemma}
Given an odd block built inside a uniform sector $\Upsilon_i$, the midpoint $M$ of the arc of $\Upsilon_i$ lies on the block arc $\Gamma$.
\end{lemma}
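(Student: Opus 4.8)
The plan is to convert the statement into a one‑dimensional inequality on central angles. I parametrize the arc of $\Upsilon_i$ by the central angle $\theta$ measured from the left boundary $B_i$, using the two facts that all uniform positions are spaced $2\pi/n$ apart and that every uniform sector is a circular sector of $Cir$ of central angle $2\pi/k$ (the $k$ sectors share equal arc length, established in \Cref{lemma:split}). Under this parametrization $B_i$ sits at $\theta=0$, the opposite boundary $B_{i+1}$ at $\theta=2\pi/k$, and the arc midpoint $M$ exactly at $\theta=\pi/k$. It therefore suffices to prove that each guard lies strictly on the same side of $M$ as its own boundary, i.e. that the angular offset $\theta_L$ of the left guard (on $U_1$) from $B_i$ and the angular offset $\theta_R$ of the right guard from $B_{i+1}$ are both strictly smaller than $\pi/k$.

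First I would bound these offsets by reading off the placement rules of \proc{Split} and Stage~1 of \proc{Odd Block}. The left guard lies on $U_1$, which \proc{Split} places at central angle $2\pi/n$ from $B_i$ when a robot occupies $B_i$, and at $\pi/n$ otherwise; hence $\theta_L\le 2\pi/n$. For the right guard there are two cases. If $q$ is odd, the right guard lies on $U_q$, whose offset from $B_{i+1}$ is at most $2\pi/n$ by the same rule. If $q$ is even, Stage~1 inserts a \colr{padding} robot on $U_q$ and moves the right guard one uniform gap inward, onto $U_{q-1}$, so its offset from $B_{i+1}$ is at most $2\pi/n+2\pi/n=4\pi/n$. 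In all subcases one gets $\theta_L\le 2\pi/n$ and $\theta_R\le 4\pi/n$.

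Then I would close the argument using the hypothesis $q\ge 12$ under which \proc{Odd Block} is invoked. Since the $q$ interior uniform positions are equally spaced with gap $2\pi/n$ strictly inside an arc of central angle $2\pi/k$, a counting argument gives $q\le n/k$; combined with $q\ge 12$ this yields $n\ge 12k$, hence $\pi/k\ge 12\pi/n$. Therefore $\theta_L\le 2\pi/n<\pi/k$ and $\theta_R\le 4\pi/n<\pi/k$, so the left guard sits at angle $\theta_L<\pi/k$ (before $M$) and the right guard at angle $2\pi/k-\theta_R>\pi/k$ (after $M$). Consequently $M$ lies strictly between the two guards, i.e. in the interior of the block arc $\Gamma$, which is the claim.

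The main obstacle I anticipate is purely the case bookkeeping: making the offset bounds $\theta_L,\theta_R$ airtight across every configuration produced by \proc{Split} — the $|\Phi|=1$ odd/even cases, the uniperiodic case, and the three biperiodic subcases in which $U_1$ and $U_q$ may be anchored at either $\pi/n$ or $2\pi/n$ from their boundaries — and verifying that the padding shift in the even‑$q$ branch never pushes a guard past $M$. A secondary point worth stating explicitly is the inequality $q\le n/k$, obtained by packing $q$ points of angular gap $2\pi/n$ strictly inside an arc of angle $2\pi/k$; this is exactly what converts the combinatorial bound $q\ge 12$ into the geometric bound $n\ge 12k$ that dominates both guard offsets and makes the strict inequalities go through.
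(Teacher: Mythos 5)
Your proof is correct and follows essentially the same route as the paper's: both bound the left guard's angular offset from $B_i$ by $2\pi/n$ and the right guard's offset from $B_{i+1}$ by $4\pi/n$ (accounting for the \colr{padding} robot), and both use $q\geq 12$ to show the sector's central angle $2\pi/k \geq 12\cdot\frac{2\pi}{n}$ dominates these offsets. The only cosmetic difference is that the paper concludes via ``the block arc spans at least half the sector arc,'' whereas you show each guard individually stays strictly on its own side of $M$; the two formulations are equivalent here.
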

\begin{proof}
Let $U_1$ and $U_{2l+1}$ be the uniform positions where the guards of the block lay on (i.e. they are the endpoints of $\Gamma$), and let $B_i$ and $B_{i+1}$ be the boundaries of $\Upsilon_i$. 
To prove the statement, we need to prove that $\angle U_1O U_{2l+1} \geq \frac{\angle B_iOB_{i+1}}{2}$ (where $O$ is the center of $Cir$), i.e. the amplitude of the odd block is always not smaller than the amplitude of half $\Upsilon_i$.
By construction we know that $\angle B_iOU_1 \leq \frac{2\pi}{n}$ (for the left guard), while $\angle U_{2l+1}OB_{i+1} \leq \frac{2\pi}{n} + \frac{2\pi}{n}$ (for the right guard and for the padding robot).
Thus, $\angle U_1O U_{2l+1} \geq \angle B_iOB_{i+1} - \frac{6\pi}{n}$.
To prove that $\angle U_1O U_{2l+1} \geq \frac{\angle B_iOB_{i+1}}{2}$, it is sufficient to verify that 
$$\angle B_iOB_{i+1} - \frac{6\pi}{n} \stackrel{?}{\geq} \frac{\angle B_iOB_{i+1}}{2}$$
$$\angle B_iOB_{i+1}  \stackrel{?}{\geq} \frac{12\pi}{n} = 6\cdot\frac{2\pi}{n}.$$
Since each $\Upsilon$ contains $q\geq12$ robots (laying not on its boundaries), we know that $\angle B_iOB_{i+1} \geq 12\cdot\frac{2\pi}{n}$, thus proving our statement.
\end{proof}

\begin{observation}\label{obs:chords}
Two chords of the same circle can \emph{(i)} have no points in common, \emph{(ii)} intersect in only one point, or \emph{(iii)} be coincident.
\end{observation}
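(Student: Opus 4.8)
The plan is to reduce the statement to two elementary facts of planar geometry: that two distinct lines meet in at most one point, and that a line meets a circle in at most two points. First I would fix the two chords, say $\overline{AB}$ and $\overline{CD}$, with their four endpoints on the given circle, and introduce their \emph{supporting lines} $\ell_1$ and $\ell_2$, i.e. the unique lines through $A,B$ and through $C,D$ respectively. Since every chord is contained in its supporting line, we have $\overline{AB}\cap\overline{CD}\subseteq \ell_1\cap\ell_2$, so the whole argument splits according to whether $\ell_1$ and $\ell_2$ coincide. This dichotomy is exhaustive and will match it to the three listed outcomes.

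If $\ell_1\neq\ell_2$, then two distinct lines in the Euclidean plane share at most one point, so $\ell_1\cap\ell_2$, and hence $\overline{AB}\cap\overline{CD}$, is either empty or a single point; this yields cases (i) and (ii). It then remains to treat $\ell_1=\ell_2$: here I would invoke that a secant line crosses the circle in exactly two points, and that the endpoints of a chord are precisely the two points where its supporting line crosses the circle. Consequently a single line carries at most one chord, which forces $\{A,B\}=\{C,D\}$ and therefore $\overline{AB}=\overline{CD}$, i.e. case (iii). Collecting the subcases over the dichotomy gives exactly the three possibilities claimed.

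The only step needing a moment of care — the closest thing to an obstacle in an otherwise routine argument — is ruling out that two \emph{distinct} chords could overlap along a whole subsegment (sharing infinitely many points), which would fall outside all three listed cases. This is impossible precisely because any such overlap forces both chords onto a common supporting line, and that line, being a secant, pins down its two circle-intersection points uniquely; the two chords then have identical endpoints and must coincide. I would state this uniqueness explicitly, so that the passage from ``at most one common point'' (for distinct supporting lines) to ``full coincidence'' (for a shared supporting line) is airtight and no fourth configuration is overlooked.
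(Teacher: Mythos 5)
Your proof is correct. Note that the paper states this as an \emph{observation} and offers no proof at all, treating it as a self-evident fact of plane geometry; your supporting-line argument (distinct lines meet in at most one point; a common supporting line is a secant meeting the circle in exactly two points, forcing identical endpoints) is the standard and complete justification, and it correctly rules out the only potentially overlooked configuration, namely two distinct chords overlapping along a subsegment.
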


\begin{lemma}\label{lemma:odd_block_phase2}
During Stage 2 of \proc{Odd Block}, all the \colr{out\_chord} robots can reach $\chord$ in $O(1)$ epochs using {\BDCP}.
\end{lemma}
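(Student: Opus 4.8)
The plan is to show that Stage~2 is an instance of \BDCP{} (\Cref{alg:bdcp}) with the curve $\Omega = \chord$, so that the $O(\log k)$-epoch guarantee with $k=2$ immediately yields the claimed $O(1)$ bound. I would first match the ingredients: $\chord$ is a line segment, hence a $2$-point curve ($k=2$), and the four required beacons are the two guards (\colr{blockL} on $U_1$ and \colr{blockR} on $U_{2l+1}$) together with the two \colr{beacon} robots placed just inside them, giving two \emph{left} and two \emph{right} beacons on $\chord$. The waiting robots are the \colr{out\_chord} robots, and their common aiming point is the \colr{mid} robot $M$. It then remains only to verify the geometric input assumptions of \BDCP{}, after which the result of \etal{Sharma} quoted in \Cref{alg:bdcp} closes the argument.

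For the geometric assumptions, the key observation is that each \colr{out\_chord} robot $E$ and the robot $M$ both lie on $Cir$. By the preceding lemma (which uses $q\geq 12$), the sector-arc midpoint $M$ lies on the block arc $\Gamma$, i.e. strictly between the guards on the minor arc cut by $\chord$, whereas every \colr{out\_chord} robot sits on $\wideparen{B_iU_1}$ or $\wideparen{U_{2l+1}B_{i+1}}$, both contained in the major arc of $\chord$. Hence $E$ and $M$ are separated by $\chord$, so $\overline{EM}$ is itself a chord of $Cir$ whose endpoints alternate around $Cir$ with those of $\chord$; by \Cref{obs:chords} the two chords meet in a single point interior to both, which is exactly the destination prescribed by the procedure and lies strictly between the guard-beacons. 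This simultaneously shows that each path crosses $\chord$ exactly once and that all \colr{out\_chord} robots belong to the single half-plane of $\chord$ not containing $\Gamma$. Disjointness of the paths follows because the chords $\overline{E_jM}$ pairwise meet only at $M$ (no line meets $Cir$ in three points, so distinct $E_j$ are never collinear with $M$), while each destination is interior to $\overline{E_jM}$; thus the traversed portions avoid $M$ and are pairwise disjoint.

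Next I would pin down the ``between left and right beacons'' requirement and the computability of the paths. Choosing $b$ as the \colr{out\_chord} robot closest to $U_1$ makes its chord-crossing the leftmost among the left-ear robots: moving $E$ away from $U_1$ toward $B_i$ pushes the intersection $\overline{EM}\cap\chord$ strictly inward, so every remaining left-ear robot lands to the right of $b$, and the symmetric statement holds for the right beacon. Consequently all remaining waiting robots target points strictly between the two inner beacons. Since $b$ (and its mirror) is placed on $\chord$ without forming a collinearity with $M$, each \colr{out\_chord} robot sees $M$ together with at least two beacons, so it can reconstruct the line of $\chord$ and compute $\overline{EM}\cap\chord$; upon arrival it recolors to \colr{chord} and becomes a new beacon, as \BDCP{} requires.

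The hard part will be the visibility argument under opaqueness: guaranteeing that throughout Stage~2 every \colr{out\_chord} robot keeps seeing $M$ and at least two beacons even while other robots are in transit. I would handle this by exploiting that $M$ is fixed at the midpoint of the sector arc (hence the unique nearest \colr{mid} reference and, by the preceding lemma, a point of $\Gamma$), that the two \colr{beacon} robots are placed non-collinearly with $M$, and that no robot ever leaves $\Upsilon_i$ nor crosses $\chord$ in a way that occludes the growing beacon set; together with the \BDCP{} invariant that freshly arrived beacons only improve visibility, this keeps all hypotheses of \Cref{alg:bdcp} satisfied at every activation. Invoking the \BDCP{} bound with $k=2$ then relocates all \colr{out\_chord} robots to $\chord$ collision-free in $O(\log k)=O(1)$ epochs, proving the lemma.
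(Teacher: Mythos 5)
Your proposal is correct and follows essentially the same route as the paper: both proofs proceed by checking the input assumptions of \BDCP{} (\Cref{alg:bdcp}) with $k=2$, using the two guards plus the two \emph{ad hoc} \colr{beacon} robots as the four initial beacons, the $M$-directed segments $\overline{EM}$ as the disjoint paths crossing $\chord$ exactly once (via \Cref{obs:chords} and the fact that distinct points of $Cir$ are never collinear with $M$), and concluding with the $O(\log 2)=O(1)$ bound. Your added monotonicity argument for why the elected beacon's landing point is extremal is a small elaboration of what the paper asserts "by construction," not a different method.
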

\begin{proof}
    We verify that all the input assumptions listed in \Cref{alg:bdcp} hold to correctly implement {\BDCP} in $O(1)$ time.
    \begin{enumerate}
	\item $\chord$ is the target $2$-curve;
        \item 4 beacons are originally set to fix $\chord$ (the guards and the two \textit{ad hoc} beacons);
        \item the \colr{out\_chord} robots are the waiting robots which have to reach $\chord$, so that
            \begin{enumerate}
                \item they lay on the arcs $\wideparen{B_iU_1}$ and $\wideparen{U_{2l+1}B_{i+1}}$ (except for their endpoints), so they are external to $\chord$, belonging to the same half-plane delimited by $\chord$;
                \item at the beginning of Stage 2, no robot is moving in $\Upsilon_i$, and all robots of the uniform sector lay on the arc of $\Upsilon_i$ or on $\chord$. 
	           So, any \colr{out\_chord} robot can see the 4 original beacons on $\chord$.
                \item they have to reach $\chord$
                    \begin{itemize}
                        \item by construction, the two internal beacons lay on $\chord$ so that the target points of each \colr{out\_chord} robots are located between them;
                        \item each \colr{out\_chord} $r$ robot must reach $\chord$ traveling along the trajectory $\overline{rM}$ where $M$ is the position of the \colr{mid} robot. Of course, the trajectories do not intersect each other; 
                        \item such trajectories intersect $\chord$ in just one point (see Observation \ref{obs:chords});
                        \item by construction, the beacons have been placed on $\chord$ so that they do not create collinearities with the \colr{mid} robot and the \colr{out\_chord} robots.	This guarantees that any \colr{out\_chord} robot can see the \colr{mid} robot at the beginning of Stage 2. Moreover, throughout the whole execution of {\BDCP}, \colr{out\_chord} robots travel along distinct concurrent trajectories that are generated by the origin (the \colr{mid} robot), they never create collinearities with the \colr{mid} robot. 
                        As follows, any \colr{out\_chord} robot can reconstruct its path towards $\chord$.
                        For the same reason, the moving robots never collide since each \colr{out\_chord} trajectory does not intersect with any other trajectory. 
                    \end{itemize}
            \item each \colr{out\_chord} robot sets its color as \colr{chord} once on $\chord$, playing the role of a new beacon fixing the chord $\chord$. 
            \end{enumerate}
    \end{enumerate}
    All these assumptions guarantee all \colr{out\_chord} robots reach $\chord$ in $O(\log{2}) = O(1)$ epochs without colliding.
\end{proof}

\begin{lemma}
During Phase 3 of \proc{Odd Block}, all the \colr{in\_chord} robots can reach $\chord$ in $O(1)$ epochs using {\BDCP}.
\end{lemma}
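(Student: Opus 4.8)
The plan is to argue exactly as in \Cref{lemma:odd_block_phase2}, by verifying that the input assumptions of {\BDCP} listed in \Cref{alg:bdcp} are met, this time with the $2$-point curve being the block chord $\chord$ (so $k=2$). The first task is to exhibit the four beacons fixing $\chord$. In the general case, at the end of Stage~2 the chord already carries the two \colr{beacon} robots placed next to the guards together with the \colr{chord} robots produced by the migration of the \colr{out\_chord} robots; these, jointly with the \colr{blockL} and \colr{blockR} guards at $U_1$ and $U_{2l+1}$, provide at least two left and two right beacons. I would treat separately the degenerate case in which Stage~2 found no \colr{out\_chord} robot: here the only robots on $\chord$ are the two guards, so as prescribed I first let the \colr{in\_chord} robots closest to $U_1$ and to $U_{2l+1}$ move to their projections on $\chord$, which yields the missing pair of beacons and bootstraps {\BDCP}.

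Next I would check the assumptions on the waiting robots, which are the \colr{in\_chord} robots lying on the block arc $\barc$. Since they all sit on the single arc $\barc$ cut off by $\chord$, they are external to $\chord$ and lie in the same half-plane delimited by it. At the start of Stage~3 no robot is moving inside $\Upsilon_i$ (every robot of the sector is either on $\barc$ or on $\chord$), so each \colr{in\_chord} robot sees all four beacons and can reconstruct the supporting line of $\chord$. Finally, each \colr{in\_chord} robot recolors itself \colr{chord} upon arrival, thereby becoming a new beacon, as required.

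The crux of the argument, and the step I expect to be the main obstacle, is establishing that the paths are pairwise disjoint and that each meets $\chord$ exactly once. The intended destination of a robot at $F$ is its perpendicular projection $F'$ on $\chord$. Here I would use the geometric fact that the block arc $\barc$ has central angle $\angle U_1OU_{2l+1}\le \pi$, since it fits inside a uniform sector whose angle is at most $\pi$ (as there are $k\ge 2$ sectors); hence $\barc$ is at most a semicircle, a vertical line perpendicular to $\chord$ meets it only once, and the map $F\mapsto F'$ is injective and monotone along $\barc$. Thus, when every foot $F'$ is free, the paths are parallel perpendicular segments, trivially disjoint, each meeting $\chord$ once at its endpoint $F'$ (as $F$ lies strictly on the arc side of $\chord$). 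The subtlety is that some $F'$ may already be occupied by a \colr{chord} or \colr{beacon} robot carried over from Stage~2; I would resolve this with the prescribed displacement to $F''$ where $|\overline{F'F''}|=|\overline{F'Z'}|/3$ and $Z'$ is the nearest occupied projection. The factor $1/3$ keeps $F''$ strictly inside the gap between $F'$ and $Z'$, so the shifted destinations remain distinct, preserve the left-to-right order inherited from $\barc$, and the segments $\overline{FF''}$ stay order-respecting and hence non-crossing; this order-preservation together with the strict gaps is precisely what rules out collisions during transit, even in {\asynch} when several robots move concurrently. Having checked all hypotheses of \Cref{alg:bdcp} with $k=2$, I would conclude that {\BDCP} relocates every \colr{in\_chord} robot onto $\chord$ without collisions in $O(\log 2)=O(1)$ epochs, which is the claim.
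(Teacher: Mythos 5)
Your proposal is correct and follows essentially the same route as the paper's proof: it verifies the {\BDCP} input assumptions with $\chord$ as the $2$-point curve, handles the degenerate case with no \colr{out\_chord} robots by promoting the \colr{in\_chord} robots nearest $U_1$ and $U_{2l+1}$ to beacons, and uses the $F''$-displacement rule with the $1/3$ factor to keep destinations distinct and paths non-crossing. The extra geometric justification you give (the block arc being at most a semicircle, so projections are injective and monotone) is left implicit in the paper but is consistent with its argument.
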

\begin{proof}
    The proof is similar to the proof of \Cref{lemma:odd_block_phase2}, so we report just the differences.
	The presence of the original 4 beacons, plus the \colr{chord} robots, fix the target curve $\chord$.
    If no such beacons exist, then the robots closest to $U_1$ and $U_{2l+1}$ move to the $\chord$ to act as beacons.
	By construction, the beacons have been set in order to be on the left and on the right of the target positions of the \colr{in\_chord} robots on $\chord$.
	Let $r$ be a \colr{in\_chord} which sees at least 2 robots on $\chord$, and let $F'$ be its projection on $\chord$.
	Let $Z$ be the position of the closest \colr{in\_chord} robot (still on $\Gamma$ or in transit towards $\chord$) and let $Z'$ be its projection on $\chord$.
	Then $r$ always can compute its target position $F''$ on $\chord$ and so its path towards $\chord$ ($F''=F'$ if no robot lies on $F'$, otherwise $|\overline{F'F''}| =|\overline{F'Z'}|/ 3$).
	According to this rule, two adjacent \colr{in\_chord} robots will always compute two paths without colliding.
	The other assumptions hold for the same reasons as in \Cref{lemma:odd_block_phase2}.
\end{proof}

\subsection{\proc{Small Circle}}
\label{appendix:smallcircle}
We show here all the correctness proofs for each step in \proc{Small Circle} (\Cref{sec:procedure_smallcircle}).

\begin{lemma}\label{lemma:safe_unsafe_arcs}
When all the robots have migrated from $\chord$ to $\SC$, then they can always see the right guard (if they lay on $\SC_e$) or the left guard (if they lay on $\SC_w$).
\end{lemma}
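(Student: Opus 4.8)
The plan is to prove the statement for a robot on the safe arc of $\SC_e$ and the right guard at $U_{2l+1}$; the case of $\SC_w$ and the left guard $U_1$ then follows verbatim by the mirror symmetry across the median diameter $\diam$. I would first recall the relevant geometry. By construction $\chord$ is tangent to $\SC$ at the foot $T$ of $\diam$, and $U_{2l+1}$ lies on $\chord$, hence on the tangent line of $\SC$ at $T$; in particular $U_{2l+1}$ is \emph{external} to $\SC$. By definition, the safe arc of $\SC_e$ is the arc $\wideparen{TP^*}$ of $\SC$, where $P^*$ is the second intersection of $\SC$ with the line through $U_{2l+1}$ and the median robot $U_{l+1}$ (the top endpoint of $\diam$, which itself lies on $\SC$). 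Under the hypothesis, every non-guard, non-median robot now sits on a safe arc of $\SC$.

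My argument would combine two observations. First, a \textbf{separation by the median diameter}: the line through $\diam$ splits the plane into a closed left half, containing $U_1$ and every robot on $\SC_w$, and a closed right half, containing $U_{2l+1}$ and the whole safe arc of $\SC_e$ (since $\SC_e$ is by definition the right half of $\SC$). For any robot $r$ on the safe arc of $\SC_e$, the segment $\overline{r\,U_{2l+1}}$ stays in the closed right half, so no robot on $\SC_w$, nor the left guard, nor the median (which sits at the top endpoint of $\diam$, away from this segment) can lie on it. It then remains only to rule out that another robot \emph{on the safe arc of $\SC_e$} blocks the view.

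Second, I would invoke the \textbf{tangent-cone structure} at the external point $U_{2l+1}$. Its two tangent lines to $\SC$ touch at $T$ (the tangent being $\chord$) and at a second point $T'$, and the chord of contact $\overline{TT'}$ splits $\SC$ into a \emph{near arc} facing $U_{2l+1}$ and a \emph{far arc}; every secant from $U_{2l+1}$ meets the near arc once and the far arc once, with the near-arc point the closer of the two. Applying this to the secant $\overline{U_{2l+1}U_{l+1}}$ identifies $P^*$ as its near intersection and $U_{l+1}$ as the far one, so $P^*$ lies strictly inside the near arc. Since $T$ is an endpoint of the near arc and $P^*$ is interior to it, the whole safe arc $\wideparen{TP^*}$ is contained in the near arc. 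Consequently each robot $r$ on the safe arc is the near intersection of its own ray from $U_{2l+1}$; the only other point of $\SC$ collinear with $U_{2l+1}$ and $r$ is the far-arc intersection, which lies \emph{beyond} $r$ and not between $r$ and $U_{2l+1}$. Hence no robot on $\SC_e$ lies strictly inside $\overline{r\,U_{2l+1}}$, and together with the half-plane argument the segment is robot-free, i.e.\ $r$ sees the right guard.

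The step I expect to be the main obstacle is the second one: showing that the safe arc does not self-occlude as seen from the guard, i.e.\ that $\wideparen{TP^*}$ lies entirely inside the near tangent arc of $U_{2l+1}$. The crux is recognizing that the far endpoint $P^*$ of the safe arc—where the guard–median line re-enters $\SC$—is itself a near-arc point of the \emph{same} external point $U_{2l+1}$; once this is established, the monotone angular order of rays along the near arc delivers mutual visibility with no further computation. A minor boundary case to dispatch is $r=T$, where $\overline{r\,U_{2l+1}}$ runs along the (now robot-free) chord $\chord$ and the guard is trivially visible.
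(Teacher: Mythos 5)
Your proof is correct and rests on the same geometric fact as the paper's: the safe arc of $\SC_e$ is positioned so that every line through the right guard meets it in at most one point, and hence robots on it cannot occlude one another from the guard. Your tangent-cone (near-arc/far-arc) formulation in fact supplies the justification that the paper's proof only asserts (namely that a line through the guard which meets the unsafe arc meets $\SC_e$ there twice, and otherwise meets the safe arc exactly once), and your half-plane step additionally rules out occlusion by robots on $\SC_w$, the median robot, and the left guard — cases the paper's proof leaves implicit.
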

\begin{proof}
Let $l$ be the line passing through the median robot and the right guard.
This line splits $\SC_e$ into two arcs: the upper one (which we call \emph{the unsafe} arc) and the lower one (which we call \emph{the safe} arc).
According to Stage 1 of \proc{Small Circle}, all the robots on $\SC_e$ are located in the safe arc cut by line $l$ (see \Cref{fig:slice_init_config}).
Let $l'$ be any other line passing through the right guard and a point of $\SC_e$ (different from the median robot).
If $l'$ intersects the unsafe arc, then it intersects it into two points.
Otherwise, $l'$ intersects the safe arc into only one point.
This means that no robot on $\SC_e$ creates collinearity with another robot on $\SC_e$ and the right guard.
The same holds considering the left guard and $\SC_w$.
\end{proof}

\begin{lemma}\label{lemma:smallcircle_stage1_proof}
During Stage 1 of \proc{Small Circle}, the following statements hold: 
\begin{enumerate}
	\item each \colr{chord} robot on $\chord$ can see the \colr{scMedian} robot even when other robots are moving to $\SC$.
	Moreover, the movements at Stage 1 are collision-less.
	\item If the right guard (resp. left guard) sets its color as \colr{scR\_complete} (resp. \colr{scL\_complete}), no \colr{chord} or \colr{to\_smallcircle} robot exists in the right (resp. left) half-block.
\end{enumerate}
\end{lemma}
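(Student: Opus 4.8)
The plan is to exploit the single geometric fact that drives Stage~1: every moving robot travels along a segment of the line joining its current position to the fixed point $U_{l+1}$ occupied by the \colr{scMedian} robot. First I would fix coordinates with $\chord$ horizontal, its midpoint $P$ at the origin, and $U_{l+1}=(0,2\diam)$ directly above $P$, so that $\SC$ is tangent to $\chord$ at $P$ and $U_{l+1}$ is its topmost point. Since $U_{l+1}\notin\chord$, any two distinct \colr{chord} robots determine two \emph{distinct} lines through $U_{l+1}$, and two distinct lines through a common point meet only at that point. For statement~1, a robot $r$ fails to see \colr{scMedian} only if some third robot lies on the open segment $\overline{rU_{l+1}}$; I would rule this out by cases: another \colr{chord} robot cannot, because $r$'s sightline meets $\chord$ only at $r$; a \colr{to\_smallcircle} or \colr{smallcircle} robot cannot, because it sits on a different line through $U_{l+1}$ and hence could lie on $r$'s line only at $U_{l+1}$ itself, which only \colr{scMedian} occupies; and a robot on $r$'s own line would have had to originate at $r$, contradicting distinctness. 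Thus \colr{scMedian} stays visible throughout the stage.

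For the collision-free half of statement~1, I would show that each robot's trajectory is the segment from its \colr{chord} position to the \emph{near} intersection $N$ of that line with $\SC$ (the far intersection being $U_{l+1}$), and that $N\neq U_{l+1}$. Hence the trajectories lie on pairwise distinct lines through $U_{l+1}$, none of them containing $U_{l+1}$, so they are pairwise disjoint; moreover each trajectory meets $\chord$ only at its start and meets $\SC$ only at its endpoint $N$, so it passes through neither a stationary \colr{chord} robot nor an already-arrived \colr{smallcircle} robot. Because the endpoints $N$ depend injectively on the starting abscissa, no two robots share a destination. Since all paths are disjoint and all destinations distinct, no two robots can ever occupy the same point or cross paths, even when several move concurrently under {\asynch}, which is exactly the collision notion of the model.

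For statement~2 I would argue the contrapositive: if a \colr{chord} or \colr{to\_smallcircle} robot still lies in the right half-block, the right guard $U_{2l+1}$ sees at least one such robot, so it cannot have set \colr{scR\_complete}. The \colr{chord} case is immediate: all \colr{chord} robots lie on the line $\chord$ through the guard, while every \colr{to\_smallcircle} or \colr{smallcircle} robot has strictly positive height, so the guard's view of the \emph{nearest} \colr{chord} robot along $\chord$ is unobstructed. The remaining case, where only \colr{to\_smallcircle} robots survive on the right, is the crux: I must show that at least one in-transit robot bound for $\SC_e$ is visible from $U_{2l+1}$, i.e. not screened by an already-arrived \colr{smallcircle} robot on the safe arc of $\SC_e$ nor by another in-transit robot. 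Here I would order the right half-block robots angularly about $U_{2l+1}$ and argue, using \Cref{obs:chords} together with the non-collinearity guaranteed by \Cref{lemma:safe_unsafe_arcs} (which already gives that every $\SC_e$ robot sees, and is seen by, the right guard), that the in-transit robot closest to $\chord$ in this angular order cannot be occluded. A symmetric argument handles the left guard and \colr{scL\_complete}.

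The main obstacle is precisely this last case: under {\asynch} the right half-block can simultaneously contain in-transit robots at many heights together with \colr{smallcircle} robots already resting on the safe arc of $\SC_e$, and I must certify that no in-transit robot can hide behind an arrived one relative to $U_{2l+1}$. I expect to resolve it by combining the near-point parametrization of the trajectories with the safe-arc collinearity property of \Cref{lemma:safe_unsafe_arcs}, showing that the guard's sightline to the lowest surviving in-transit robot is never blocked, so that the guard's ``complete'' declaration is always truthful.
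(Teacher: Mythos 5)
Your treatment of statement~1 is correct and is essentially the paper's own argument, written out in more detail: every trajectory lies on a distinct line through the \colr{scMedian} position $U_{l+1}$, two such lines meet only at $U_{l+1}$, and each trajectory meets $\chord$ only at its start and $\SC$ only at its endpoint; hence \colr{scMedian} stays visible and no two paths can intersect. Nothing is missing there.

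For statement~2 there is a genuine gap, and you flag it yourself: the claim that an in-transit \colr{to\_smallcircle} robot cannot be screened from the right guard $U_{2l+1}$ by a robot already sitting on the safe arc of $\SC_e$ is precisely the assertion the paper's proof leans on, and ``I expect to resolve it by\dots'' does not establish it. Moreover, the tool you propose, \Cref{lemma:safe_unsafe_arcs}, only controls collinearities among points \emph{on} $\SC$, whereas the robot you must certify visible is strictly \emph{exterior} to $\SC$, so an additional idea is required. The missing step can be closed by a shadow argument: a point $y$ exterior to $\SC$ is hidden behind $\SC$ from $G=U_{2l+1}$ only if $\overline{Gy}$ is a secant of $\SC$ with $y$ beyond the far intersection, and the number of points of $\overline{Gy}\cap\SC$ can change, as $y$ varies continuously, only when $y$ crosses $\SC$ or crosses one of the two tangent rays from $G$ beyond its point of tangency. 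Each transit segment starts at a point of $\chord$ strictly between the tangency point of $\SC$ with $\chord$ and $G$ (where $\overline{Gy}\cap\SC=\emptyset$), stays exterior to $\SC$ until its endpoint on the safe arc, and lies weakly below the secant $GU_{l+1}$, hence never reaches either tangent ray beyond its tangency point; so $\overline{Gy}$ meets $\SC$ in no points throughout the transit, no arrived robot can block the guard's view, and the only possible occluder of a \colr{to\_smallcircle} robot is another \colr{to\_smallcircle} robot, which still leaves the guard seeing one. Without an argument of this kind, your contrapositive for \colr{scR\_complete} is not proved.
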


\begin{proof}
We prove each statement:
\begin{enumerate}
	\item At the beginning of Stage 1, all the robots (except for the \colr{scMedian} one) are located on the block chord $\chord$ (and not inside the block), so no robot obstructs the visibility of the \colr{chord} robot.
Moreover, since the \colr{chord} robots travel along distinct concurrent trajectories that are generated by the origin (the \colr{scMedian} robot), they never create collinearities with the \colr{scMedian} robot. 
For the same reason, the moving robots never collide.

	\item Let us prove for $\SC_e$ (the same holds for $\SC_w$).
		By \Cref{lemma:safe_unsafe_arcs}, robots on $\SC_e$ do not create collinearities with the right guard.
		Moreover, a \colr{to\_smallcircle} robot cannot be hidden by other robots on $\SC_e$ from the right guard.
		So, if a \colr{to\_smallcircle} robot exists, the right guard can see it.
		At the same time, if at least a \colr{chord} robot exists on $\chord$, then the right guard can see.
	
\end{enumerate}
\end{proof}

\begin{lemma}\label{lemma:smallcircle_stage2_proof}
During Stage 2 of \proc{Small Circle}, the following statements hold: 
\begin{enumerate}
	\item Each \colr{smallcircle\_complete} robot on $\SC$ can correctly detect if it belongs on $\SC_e$ or $\SC_w$ and so set its color as \colr{smallcircle\_east} or \colr{smallcircle\_west}, respectively.
	\item The position shift of the \colr{smallcircle\_west} robots can be executed without colliding and in $O(1)$ time.
        \item The \colr{smallcircle\_west} robots migrate from $\SC_w$ to $\SC_e$ in $O(1)$ using {\BDCP}.
\end{enumerate}
\end{lemma}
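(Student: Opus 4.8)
The plan is to establish the three claims in the order stated, leaning throughout on the fact that the three landmark robots \colr{scL}, \colr{scR}, and \colr{scMedian} stay stationary for the entire duration of Stage~2, so that the ambient geometry ($\chord$, $\SC$, $\diam$) can be recovered at every activation even while other robots are in transit. For claim~1, I would first argue that any robot on $\SC$ recovers $\chord$ as the line through the two guards, then recovers $\SC$ as the circle tangent to $\chord$ and passing through the \colr{scMedian} robot with diameter $\diam$ perpendicular to $\chord$, and hence recovers $\diam$. Because the guards carry the distinct colors \colr{scL} and \colr{scR}, their positions orient $\diam$ (west is the \colr{scL} side, east is the \colr{scR} side); the robot then compares its own side of $\diam$ with that of the guard it sees and colors itself \colr{smallcircle\_east} or \colr{smallcircle\_west}. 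The only real point to check is visibility: by \Cref{lemma:safe_unsafe_arcs} a robot always sees its own-side guard, and since robots in transit remain on the safe arcs, a landmark can be occluded only in the degenerate collinear situations controlled by \Cref{obs:collinearity_circle}; I would spell out that the data needed to reconstruct $\diam$ and to read one guard color is always available.

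For claim~2, the key observation is that at the start of the shift there are only finitely many \colr{smallcircle\_east} robots, hence only finitely many forbidden mirror points on $\SC_w$ (the reflections across $\diam$ of the occupied $\SC_e$ positions), whereas $\SC_w$ is a continuum. I would show that each \colr{smallcircle\_west} robot $r$ whose mirror is occupied can therefore choose a target $x$ arbitrarily close to its current position that (i) lies strictly between its two neighbours on $\SC_w$, preserving its rank, (ii) has a free mirror on $\SC_e$, and (iii) stays inside the safe arc. Since every robot moves only within the open sub-arc bounded by its neighbours, the cyclic order is invariant and the arcs swept never cross, so no collision occurs; each robot moves at most once, which yields $O(1)$ time. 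Asynchrony is handled by recomputation, since at each activation a robot re-evaluates whether its mirror is free, and the color synchronisation guarantees the phase terminates only once every \colr{smallcircle\_west} robot has a free mirror.

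For claim~3, I would invoke {\BDCP} (\Cref{alg:bdcp}) with the arc $\SC_e$ as the target $3$-point curve and verify its input assumptions. Each \colr{smallcircle\_west} robot travels along the chord perpendicular to $\diam$ towards its mirror image on $\SC_e$; since distinct points of $\SC_w$ have distinct projections onto $\diam$, these chords sit at distinct heights, are mutually parallel, pairwise disjoint, and meet $\SC_e$ exactly once. The robots approach $\SC_e$ from its interior side, exactly as in \Cref{fig:beacon}, they see the beacons fixing $\SC_e$ (furnished by the \colr{scMedian} robot and the resident \colr{smallcircle\_east} robots, promoting additional robots to beacons if needed), and, crucially, every target point is free by claim~2, so no moving robot ever collides with a resident \colr{smallcircle\_east} robot. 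With all assumptions met, {\BDCP} relocates every \colr{smallcircle\_west} robot onto $\SC_e$ in $O(\log 3) = O(1)$ epochs, after which all non-median robots of the block lie on $\SC_e$ and recolor to \colr{smallcircle\_east}.

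The main obstacle I anticipate is the interplay between claims~2 and~3 under {\asynch}: I must guarantee the parallel migration stays collision-free even though the shift of claim~2 and the {\BDCP} of claim~3 may interleave across activations. The delicate part is arguing that the free-mirror invariant produced by the shift persists up to the instant each west robot crosses $\diam$, so that the perpendicular paths never sweep through an occupied $\SC_e$ position; this is precisely where the rank-preservation condition (i) and the safe-arc condition (iii) carry the weight, and where the color-based synchronisation separating the two phases must be shown to be watertight.
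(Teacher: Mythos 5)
Your proposal is correct and follows essentially the same route as the paper's proof: claim~1 via the safe-arc visibility guarantee of \Cref{lemma:safe_unsafe_arcs}, claim~2 via small local shifts to positions with free mirrors that preserve the cyclic order on $\SC_w$, and claim~3 by checking the input assumptions of {\BDCP} for the $3$-point curve $\SC_e$ with parallel trajectories perpendicular to $\diam$ landing on the reflected positions. The synchronization concern you flag at the end is handled in the paper simply by the color-gated phase separation (no robot is in transit within $\SC$ when the {\BDCP} migration begins), which is the same mechanism you describe.
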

\begin{proof}
We prove each statement:
\begin{enumerate}
	\item By \Cref{lemma:safe_unsafe_arcs}, all the \colr{smallcircle\_complete} robots on $\SC_e$ (resp. $\SC_w$) see the related guard. 
	Thus, they can detect if they belong to the right/left side of $\SC$, and update their color properly.
	\item The new positions of \colr{smallcircle\_west} robots can be easily computed so that each robot shifts on $\SC_w$ of a little distance downward so that it neither collides with nor goes beyond the adjacent robot. 
	Such movements always guarantee complete visibility among the robots within $\SC$.
	Note that a \colr{smallcircle\_west} robot can always recompute correctly $\SC$ even when the other robots are shifting; in fact, there are always at least three robots still on $\SC$: the robot itself, the median robot, and at least one \colr{smallcircle\_east} robot on $\SC_e$ (otherwise, no shift is needed).
	\item  We verify that all the input assumptions listed in \Cref{alg:bdcp} hold to correctly implement {\BDCP} in $O(1)$ time.
    \begin{enumerate}
	\item $\SC_e$ is the target $3$-curve;
        \item 6 robots (selected among the \colr{smallcircle\_east} or \colr{smallcircle\_west} robots) can be set on $\SC_e$ so that they play the role of the left and right beacons (using a proper color). This setting can be achieved in constant time;
        \item the \colr{smallcircle\_west} robots are the waiting robots which have to reach $\SC_e$. We can guarantee that:
            \begin{enumerate}
                \item being on $\SC_w$, they are external to $\SC_e$, and they lay on the convex region delimited by $\SC_e$;
                \item at the beginning of this step, no robot is moving within $\SC$, so the waiting robot on $\SC_w$ can see all the robots on $\SC_e$;
                \item they have to reach $\SC_e$
                    \begin{itemize}
                        \item to their projections on $\SC_e$. Being the beacons originally the ``leftmost'' and the ``rightmost'' robots on $\SC_e$, the robots on $\SC_w$ will migrate between the two groups of beacons;
                        \item being parallel, the robots' trajectories do not intersect each other; 
                        \item such trajectories are chords connecting $\SC_w$ to $\SC_e$, thus they intersect $\SC_e$ in just one point;
                        \item if a \colr{smallcircle\_west} robot on $\SC_w$ can reconstruct $\SC$, then it can reconstruct its trajectory (looking at the left guard, a \colr{smallcircle\_west} robot can detect the position of $\diam$ and so its trajectory towards $\SC_e$);
                    \end{itemize}
            \item each \colr{smallcircle\_west} robot becomes a new beacon (setting its color as \colr{smallcircle\_east}) once on $\SC_e$.
            \end{enumerate}
    \end{enumerate}
    All these assumptions guarantee all \colr{smallcircle\_west} robots reach $\SC_e$ in $O(\log{3}) = O(1)$ epochs without colliding.
\end{enumerate}
\end{proof}

\begin{lemma}\label{lemma:smallcircle_stage3_proof}
During Stage 3 of \proc{Small Circle}, the following statements hold: 
\begin{enumerate}
	\item Each \colr{west} robot on $\SC_e$ correctly detects its upper-adjacent \colr{east} robot, even if other robots are moving.
	\item Each \colr{west} robot on $\SC_e$ correctly and safely reaches its target position on $\SC_w$.

\end{enumerate}
\end{lemma}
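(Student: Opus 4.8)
The plan is to exploit the two structural facts that make Stage~3 tractable: in this stage only the \colr{west} robots move, while the \colr{scMedian} robot and all \colr{east} robots stay fixed on $\SC_e$, and by \Cref{lemma:safe_unsafe_arcs} every robot on $\SC_e$ lies on its safe arc, so the robots on $\SC_e$ together with the right guard are in mutually non-collinear position. I would first fix notation matching \Cref{fig:slice_stage3}: along the safe arc of $\SC_e$ the robots alternate as $e_1,w_1,e_2,w_2,\dots$ going away from the median, so each \colr{west} robot $w_j$ has the \colr{east} robot $e_j$ immediately above it (closer to the median), with no robot between them on the arc. The target of $w_j$ is the reflection $e_j'$ of $e_j$ across the median diameter $\diam$, which by symmetry of the safe arcs lands on the safe arc of $\SC_w$. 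Throughout, let $y(\cdot)$ denote the coordinate along $\diam$, with larger $y$ meaning closer to the median; note that reflection across $\diam$ preserves $y$, so $y(e_j')=y(e_j)$.

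For statement~1 I would argue that $w_j$ always sees $e_j$. Since $e_j$ and $w_j$ are adjacent on $\SC_e$, no robot lies on the arc between them, so the only possible obstruction of the chord $\overline{w_je_j}$ is another \colr{west} robot $w_i$ ($i\neq j$) in transit. Each such transit segment spans exactly the band $[\,y(w_i),y(e_i)\,]$, since $w_i$ moves from its starting height up to the height of $e_i$. Using the alternating order $\dots,e_{j-1},w_{j-1},e_j,w_j,e_{j+1},\dots$, one has $y(w_{j-1})>y(e_j)$ and $y(w_j)>y(e_{j+1})$; hence every $w_i$ with $i<j$ stays strictly above $y(e_j)$ and every $w_i$ with $i>j$ stays strictly below $y(w_j)$. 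In either case the transit band is disjoint from the band $[\,y(w_j),y(e_j)\,]$ of the chord $\overline{w_je_j}$, so that chord is never blocked, and $w_j$ correctly identifies $e_j$ as the closest \colr{east} robot toward the median regardless of the asynchronous schedule.

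For statement~2, correctness follows because $w_j$ reconstructs $\SC$ and $\diam$ from the \colr{scMedian} robot and three points on $\SC$, then computes $e_j'$ as the reflection of the detected $e_j$; as noted, $e_j'$ lies on the safe arc of $\SC_w$ by the mirror symmetry of the safe arcs. For safety I would show the trajectories are pairwise non-crossing: each is a chord of $\SC$ from $\SC_e$ to $\SC_w$ with height band $[\,y(w_j),y(e_j)\,]$, and the same inequality $y(w_j)>y(e_{j+1})$ places the band of $w_j$ strictly above that of $w_{j+1}$; thus consecutive, and hence all, motion chords have disjoint height bands and cannot intersect, so no two \colr{west} robots ever collide under any timing in \asynch. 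Moreover a chord's interior lies strictly inside $\SC$, so it never meets the fixed \colr{east} or \colr{scMedian} robots, which sit on $\SC$. Combining, every \colr{west} robot reaches its target on $\SC_w$ safely.

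The main obstacle I anticipate is the bookkeeping of this geometric ordering: converting ``alternating on the arc'' into the clean inequalities $y(w_{j-1})>y(e_j)$ and $y(w_j)>y(e_{j+1})$ that simultaneously (i)~keep each detection chord $\overline{w_je_j}$ unobstructed and (ii)~separate the height bands of the motion chords for collision-freeness. Everything else --- reconstructing $\diam$, and verifying that reflections of safe-arc points are again safe-arc points --- is routine once this ordering is pinned down, and the disjoint-band property is precisely what makes the argument robust to asynchrony, since paths with disjoint height bands cannot collide no matter when each robot is activated.
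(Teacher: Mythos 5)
Your disjoint-height-band argument is sound and in fact supplies a more explicit justification than the paper gives for two of the claims at stake: that the chord $\overline{w_je_j}$ to the upper-adjacent \colr{east} robot is never occluded by a transiting robot (since the band $[y(w_i),y(e_i)]$ of any other transit chord is disjoint from $[y(w_j),y(e_j)]$ by the alternation $y(e_1)>y(w_1)>y(e_2)>\cdots$), and that the motion chords of distinct \colr{west} robots are pairwise non-intersecting. The paper merely asserts both of these ("can always see its two adjacent \colr{east} robots", "all the trajectories are disjoint"), so on these points your argument is an improvement.

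However, there is a genuine gap in how a \colr{west} robot establishes the \emph{orientation} of the block --- i.e.\ decides which of its two visible adjacent \colr{east} robots is the upper one --- and reconstructs $\diam$. You rely on the \colr{scMedian} robot being visible for both tasks, but this is not guaranteed, and your own band argument does not cover it: taking the top of $\SC$ at height $1$, the segment $\overline{w_jw_0}$ to the median spans the band $[y(w_j),1]$, while the transit chord of any $w_i$ with $i<j$ spans $[y(w_i),y(e_i)]\subset(y(w_j),1)$; these bands overlap, and a direct computation on a concrete placement shows a transiting $w_i$ can indeed lie on $\overline{w_jw_0}$ and hide the median. This is exactly the case the paper's proof singles out ("let us suppose the median robot is hidden from $w$"): it falls back on the right guard, which is always visible by \Cref{lemma:safe_unsafe_arcs} (transit chords stay inside the disk of $\SC$ while the segment to the guard leaves it immediately), and then disambiguates between the two candidate tangent chords through the right guard by inspecting the robots of the other blocks, thereby recovering the up--down orientation and hence $\diam$. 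Without such a fallback, a \colr{west} robot activated while an earlier one is in transit may be unable to tell which adjacent \colr{east} robot is the upper one, so your argument does not fully establish statement~1, and the target computation in statement~2 (which needs $\diam$ to form the reflection $e_j'$) inherits the same hole.
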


\begin{proof}
We prove each statement:
\begin{enumerate}
	\item Let $w$ be a \colr{west} robot on $\SC_e$. 
		Thanks to the presence of the right guard (always visible to $w$), $w$ understands it has to migrate on the other side.
		Moreover, $w$ can always see its two adjacent \colr{east} robots even when other robots are migrating to $\SC_w$ (the first robot on $\SC_e$ has an \colr{east} robot and the median robot as adjacent).
		So $w$ recomputes $\SC$.
		Now, let us suppose the median robot is hidden from $w$. 
            Thus, $w$ must detect which of its two \colr{east} adjacent robots is the upper-adjacent one.
		From the right guard, two chords can be elected as $\chord$. 
		However, by the presence of other robots of other blocks, $w$ can always detect which is the correct block chord, and so it can establish the upper-down orientation of the block.
		
	\item Again, let $w$ be a \colr{west} robot on $\SC_e$. 
		As proved above, $w$ computes $\SC$ by the presence of at least three robots on it, and $\diam$ by the presence of the right guard.
		Once obtained $\SC_w$ and once selected its upper-adjacent robot $u$, $w$ heads to the projection of $u$ on $\SC_w$.
		All the trajectories are disjoint and so no collision can occur.
\end{enumerate}
\end{proof}

\subsection{\proc{Slice}}
\label{appendix:slice}
We show here all the correctness proofs for each step in \proc{Slice} (\Cref{sec:procedure_slice}).

\begin{lemma}\label{lemma:slice_stage1_proof}
During Stage 1 of \proc{Slice}, the following statements hold: 
	\begin{enumerate}
		\item\label{item:slice_phase1_most_two} 
		At the beginning of Stage 1, each slice arc on $\SC_w$ contains at most two robots: in the case of exactly two robots, both of them must lay on the endpoints of the slice arc;
		
		\item\label{item:slice_phase1_chosen_one} 
		Each slice is chosen by at most one \colr{west} robot where it will encode its rank;
		
		\item\label{item:slice_phase1_no_cross} 
		During the movements in Stage 1, each \colr{west} robot $w_j$ will move to a new point $w'_j$ which is strictly contained in its original slice;
		
		\item\label{item:slice_phase1_end_phase} 
		At the end of Stage 1, each slice arc of $\SC_w$ contains at most one robot, except for the slice $\eta_0$ which contains two robots (the \colr{sliceMedian} robot and the \colr{angle} one) on its endpoints.
		
		\item\label{item:slice_phase1_no_collisions} 
		Movements at Stage 1 never create collisions or collinearities among robots on $\SC$.
	\end{enumerate}
\end{lemma}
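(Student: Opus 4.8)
The plan is to reduce all five claims to two structural facts: that $\delta=\min_{0\le j\le m}\alpha_j$ is the \emph{minimum} angular gap between consecutive west robots, and that every slice $\eta_k$ has angular width exactly $\delta$. Writing $\theta_j=\angle w_0 Q w_j$ for the angular position of $w_j$ measured from the median (with $Q$ the center of $\SC$), we have $0=\theta_0<\theta_1<\dots<\theta_m<\theta_{m+1}=\pi$ and $\alpha_j=\theta_{j+1}-\theta_j\ge\delta$. I would first record that the ignored remainder arc is empty of robots: from $\alpha_m\ge\delta$ we get $\theta_m\le\pi-\delta<\lfloor\pi/\delta\rfloor\delta$, so every $w_j$ lies inside a full slice and the slicing is well-defined for all of them. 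For Claim~1, if two robots $w_i,w_j$ (with $i<j$) shared a closed slice $[k\delta,(k+1)\delta]$, then $\theta_j-\theta_i\le\delta$, while $\theta_j-\theta_i=\sum_{s=i}^{j-1}\alpha_s\ge(j-i)\delta$; hence $j=i+1$, $\alpha_i=\delta$, and $\theta_i=k\delta$, $\theta_j=(k+1)\delta$, i.e. both robots sit on the slice endpoints, and no third robot can fit (a third would force separation $\ge 2\delta$).

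Claims~2--4 then follow from the tie-break rule together with the encoding formula. Since a robot on a shared endpoint $k\delta$ is assigned to the lower-ranked slice $\eta_{k-1}$, a slice carrying two endpoint robots (Claim~1) retains only the one on its right endpoint $(k+1)\delta$, the other being charged to $\eta_{k-1}$; thus each slice is chosen by at most one encoding robot (Claim~2). For Claim~3, the target angle $k\delta+\frac{j\delta}{m+1}$ lies strictly inside $\eta_k$ because $0<\frac{j\delta}{m+1}<\delta$ for $1\le j\le m$. Combining these, after the moves each slice holds at most one robot in its interior, the lone exception being $\eta_0$, which keeps the median at angle $0$ and the \colr{angle} robot (the relocated $w_1$) at angle $\delta$ on its two endpoints; no encoding robot $w_2,\dots,w_m$ is assigned to $\eta_0$, since $\theta_j\ge 2\delta$ for $j\ge 2$ (Claim~4).

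For Claim~5 I would confine each trajectory to a single closed slice: a robot assigned to $\eta_k$ starts in $(k\delta,(k+1)\delta]$ and, sliding monotonically along $\SC_w$ to its strictly interior target, never leaves $[k\delta,(k+1)\delta]$. Distinct encoding robots therefore travel in slices meeting only at a shared boundary point, and no collision can occur there: the robot assigned to the higher-ranked slice is never at its left boundary (that angle is excluded from its slice by the tie-break and its target is interior), while the robot assigned to the lower-ranked slice can touch the shared point only at time $0$. Collinearities are handled in one stroke: throughout Stage~1 every robot of the block (median, \colr{west}, and \colr{east}) stays on $\SC$, and since a line meets the circle $\SC$ in at most two points, no three of them are ever collinear. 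The main obstacle I anticipate is precisely this boundary bookkeeping in Claims~2 and~5: the slices are closed and adjacent, so the only place a slice could be double-claimed or two robots could coincide is at the shared endpoints, and the whole argument rests on verifying that the tie-break together with the strict-interior targets consistently assigns each endpoint robot to exactly one slice and keeps every moving robot off the shared endpoints.
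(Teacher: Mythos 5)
Your handling of Claims 1--4 is correct and essentially the paper's own argument: minimality of $\delta$ forces any two robots sharing a closed slice onto its two endpoints, the tie-break to the smaller-ranked slice prevents a slice from being claimed twice, and $0<\frac{j\delta}{m+1}<\delta$ places each target strictly inside the chosen slice. Your observation that $\theta_m\le\pi-\delta<\floor{\pi/\delta}\delta$ is a clean way to make precise the paper's remark that the leftover sub-$\delta$ slice is robot-free, and your endpoint bookkeeping for Claim 2 matches the paper's case analysis.

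The gap is in Claim 5. You write that each robot slides ``monotonically along $\SC_w$'' and that ``throughout Stage~1 every robot of the block \ldots{} stays on $\SC$,'' concluding that no three robots are ever collinear because a line meets a circle in at most two points. But in this model the \emph{Move} phase is along a \emph{straight} trajectory: a robot in transit from $w_j$ to $w'_j$ travels on the chord $\overline{w_jw'_j}$ and is strictly inside the disk bounded by $\SC$ for the entire motion. A point $p$ interior to the disk \emph{can} be collinear with two robots $a,b$ on $\SC$ (namely when $p$ lies on the chord $\overline{ab}$), so the two-points-per-line argument does not apply and your proof does not rule out such collinearities. The correct closing step, which the paper supplies, is: a robot at an interior point of $\overline{w_jw'_j}$ lies on a chord $\overline{ab}$ of $\SC$ only if $a$ and $b$ sit on the two different arcs cut by $\overline{w_jw'_j}$; since the minor arc $\wideparen{w_jw'_j}$ is contained in the slice $\eta_k$ and, by your Claims 1--2, contains no robot other than $w_j$ itself, no such pair exists and no collinearity can arise. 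Your collision analysis at shared slice endpoints is fine, but the collinearity part of Claim 5 must be replaced by this minor-arc argument (or an equivalent one that accounts for chordal trajectories).
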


\begin{proof}
We prove each statement:
	\begin{enumerate}
		\item
			By construction, each slice has an amplitude of $\delta$, where $\delta$ is the minimum angle between any two robots on the circle $\SC_w$.
			Let $\eta$ be a $\delta$-slice of $\SC_w$.
			If three robots lay on the arc of $\eta$, then they define a smaller angle than $\delta$ (contradiction).
			The same contradiction results in the case of two robots where one of them does not lay on the endpoint of the $\eta$ arc.
			Note that the remaining slice of amplitude $<\delta$ which can exist between the last $\delta$-slice and the lower endpoint of $\SC_w$ does not contain any robot (otherwise $\delta$ would not be the minimum angle).
		\item
			If a slice does not contain any robots, then it will be chosen by no robots.
			If a slice contains just one \colr{west} robot, then it will choose its slice to encode its rank.
			If a slice $\eta_k$ contain two \colr{west} robots, say $w_j,w_{j+1}$, one per each arc endpoint, then $w_j$ chooses $\eta_{k-1}$ whereas $w_{j+1}$ chooses $\eta_k$.
			Since $\eta_{k-1}$ contains either just $w_j$ or both $w_{j-1}$ and $w_j$ (by \Cref{item:slice_phase1_most_two}), $\eta_{k-1}$ will be chosen just by $w_j$ (in fact, if $w_{j-1}$ is not \colr{angle}, then it will choose $\eta_{k-2}$).
		\item 
			Let $\eta_k$ be the chosen slice for the rank encoding of $w_j$ (i.e. the slice where $w_j$ lays on, possibly on one of its endpoints).
			Since $w_j$ reaches a new point $w'_j$ such that $\angle w_0Ow'_j= k\delta + \frac{j\delta}{m+1}$, it is sufficient to prove that $w'_j$ belongs to $\eta_k$ (endpoints excluded).
			Such a property is easily proved since $\frac{j\delta}{m+1}< \delta$ (in fact, $j<m+1$) and since $j\neq 0$ (\colr{west} robots have rank $j=2,\dots, m$).
		\item 
			By \Cref{item:slice_phase1_most_two}, each \colr{west} robot travels in a new position which is strictly contained in its chosen slice.
			By \Cref{item:slice_phase1_chosen_one}, each slice is chosen by at most one robot.
			These two results ensure each slice arc of $\SC_w$ contains at most one robot.
			To conclude, $\eta_0$ is the only slice in $\SC_w$ containing two robots at the end of Stage 1.
		\item 
			Movements of \colr{west} robots are performed in separated slices, thus never creating collisions with the other robots.
			Let $w_j$ be a \colr{west} robot which has to travel in the slice $\eta_k$ to reach the point $w'_j$.
			This robot creates collinearity with another robot on $\SC$ only if both the arcs cut by the trajectory $\overline{w_jw'_j}$ contain at least a robot.
			However, since in the minor arc $\wideparen{w_jw'_j}$ (i.e. the arc strictly contained in $\eta_k$) no robots are contained except for $w_j$ itself, no collinearities can occur.
\end{enumerate}
\end{proof}

\begin{lemma}\label{lemma:slice_stage2_proof}
During Stage 2, the following statements hold: 
	\begin{enumerate}
	\item
			{\bf [Step 1]} The migration of the \colr{east} robots $\{e_j\}_{4\leq j \leq m}$ on $\diam$ is always possible in one epoch without collisions.
			
	\item
			{\bf [Step 1]} The beacons $e_1,e_2$ and $e_3$ correctly and safely migrate on the safe arc $\wideparen{uv}$.

		\item{}
			{\bf [Step 2]} An \colr{east\_diameter} robot on $\diam$ can recompute the slices on $\SC_e$ and $\SC_w$.
			
		\item{}
			{\bf [Step 2]} The rank $j$ computed by an \colr{east\_diameter} robot on $\diam$ coincides with its original rank on $\SC_e$, for $2\leq j \leq m $.
			
			\item{}\label{item:slice_phase2_see_beacons}
				{\bf [Step 2]} Every \colr{east\_diameter} robot on $\diam$ can see the three beacons.
			
		\item{}
			{\bf [Step 2]} An \colr{east\_diameter} robot $e$ on $\diam$ can instantly recompute its uniform position $U_j$ on $\barc_e$.
			
\item\label{item:slice_phase2_beacons_arc}
			{\bf [Step 2]} Beacons $e_1,e_2$ and $e_3$ on $\barc_w$ safely reach their target positions on $\barc_e$ in constant time.

	\end{enumerate}
\end{lemma}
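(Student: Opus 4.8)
The plan is to verify the seven claims by splitting Stage~2 into its two steps: items~1, 2 and 7 are geometric statements about collision-free parallel motion, whereas items~3--6 concern the reconstruction, by a robot sitting on $\diam$, of the slicing structure and of its own rank. I would dispatch the collision and visibility items first, since the rank-recovery arguments presuppose the configuration produced by Step~1.

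For item~1, every \colr{east} robot $e_j$ with $4\le j\le m$ moves to its mirror projection $e'_j$ on $\diam$, i.e.\ along the perpendicular from $e_j$ to $\diam$; hence all these trajectories are mutually parallel. As $e_4,\dots,e_m$ occupy pairwise distinct points of the safe arc of $\SC_e$, their feet on $\diam$ are pairwise distinct, so parallel segments on distinct supporting lines cannot meet and the migration is collision-free and completes in one epoch (each robot locates $\diam$ from the median robot and the guards and moves independently). For items~2 and 7 I would show that the beacon trajectories are pairwise disjoint and sweep no occupied point: in item~2 the three ordered beacons $e_1,e_2,e_3$ travel from $\SC_e$ to three deterministic interior points of $\wideparen{uv}$ chosen to preserve their left-to-right order, so their straight paths do not cross; in item~7, after $U_4,\dots,U_m$ have been filled, the remaining targets $U_1,U_2,U_3$ on $\barc_e$ are unoccupied and the beacons reach them along disjoint paths in $O(1)$ time.

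For Step~2 I would first prove item~3. A \colr{east\_diameter} robot $e$ on $\diam$ still sees the median robot $w_0$ and the \colr{west} robots; three of these are non-collinear and determine $\SC$, while the line $\diam$ is recovered as the one carrying $e$ itself and $w_0$. From the angular position of the \colr{angle} robot $e$ reads $\delta$ and hence reconstructs the full $\delta$-slicing $\eta_0,\dots$ of $\SC_w$ together with its mirror slicing $\eta'_0,\dots$ of $\SC_e$. Item~4 is the structural crux: knowing its own coordinate on $\diam$ and the circle $\SC$, $e$ inverts the mirror projection, which is injective on the safe arc, to recover its original point $e_j$ on $\SC_e$ and thus the slice $\eta'_k$ containing it; since $e_j$ and $w_j$ were mirror-symmetric before Stage~1 and $w_j$ moved only \emph{within} its slice $\eta_k$ (\Cref{lemma:slice_stage1_proof}), the unique \colr{west} robot in the mirror slice $\eta_k$ is exactly $w_j$, so inverting $\angle w_0Qw'_j=k\delta+\tfrac{j\delta}{m+1}$ returns the original rank $j$. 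Item~6 then follows at once: with $j$ in hand and $\barc_e$ reconstructed from the three beacons (three points fix the arc $\barc$), $e$ computes the $j$-th uniform position $U_j$ in closed form.

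The main obstacle I anticipate is item~5, that every \colr{east\_diameter} robot sees all three beacons, which is exactly what legitimizes the $\barc$-reconstruction under obstructed visibility. The argument must guarantee that no \colr{west} robot and no other \colr{east\_diameter} robot occludes a beacon from any viewpoint on $\diam$, and this is precisely the reason $u$ and $v$ are taken as the intersections of $\barc_w$ with the extremal sightlines $e_1w_1$ and $e'_mw_2$: these bound the sub-arc of $\barc_w$ simultaneously visible from the whole segment of $\diam$ occupied by robots, so placing the beacons strictly inside $\wideparen{uv}$ yields clearance. I would make this rigorous by verifying that any sightline from a point of $\diam$ to a point of $\wideparen{uv}$ crosses neither the aligned cluster on $\diam$ (established in item~1) nor the \colr{west} robots, invoking the convexity of the disk bounded by $\SC$. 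Assembling the seven items shows that Stage~2 relocates every \colr{east} robot to its target uniform position on $\barc_e$ in $O(1)$ epochs without collisions.
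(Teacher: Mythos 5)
Your treatment of items 1 and 3--6 matches the paper's proof in substance: the parallel perpendicular trajectories for the migration to $\diam$, the reconstruction of $\SC$, $\delta$ and the slicing from the still-visible \colr{west} robots, the rank-preservation argument via \Cref{lemma:slice_stage1_proof}, and the visibility of the beacons through the robot-free arc $\wideparen{w_1w_2}$ are all the right mechanisms (for item 1 the paper is slightly more careful, noting that each of the at most $m-4$ concurrently moving robots can occlude only one robot on $\SC$, so enough of $\SC$ remains visible to reconstruct it; you should make that count explicit rather than asserting visibility of ``the median robot and the guards'').

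There is, however, a genuine gap in items 2 and 7. In both places you replace the paper's mechanism -- the beacons move \emph{sequentially}, $e_3$ first and $e_1$ last, each waiting until it sees a static block -- with the bare assertion that the three straight paths are ``disjoint'' because the targets preserve left-to-right order. Order preservation of endpoints does not by itself prevent two segments from crossing, and more importantly it says nothing about the paths sweeping over \emph{occupied} points. This bites concretely in item 7: a beacon travels from the safe arc $\wideparen{uv}$ on $\barc_w$ all the way across the block to $U_1,U_2,U_3$ near the top of $\barc_e$, and its straight trajectory can pass exactly through the position of the \colr{angle} robot sitting on $\SC$ (which has not yet moved at this point of Stage 2). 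The paper handles this explicitly: at most one beacon can be collinear with \colr{angle} and its own target, and that beacon reaches $U_j$ in two legs, stopping first at a point of $\diam$ visible to the other beacons. Without the sequential scheme and this detour, your claim of collision-free constant-time placement of $U_1,U_2,U_3$ does not go through. The rest of your argument stands once these two items are repaired along the paper's lines.
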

\begin{proof}
We prove each statement:
	\begin{enumerate}
		\item
		Let $e$ be an \colr{east} robot on $\SC_e$ at the beginning of Stage 2 (after the setting of the beacons on $\SC_e$).
		At most $m-4$ \colr{to\_east\_diameter} robots are moving to $\diam$, thus possibly obstructing the visibility of $e$. 
		However, since each moving robot can hide only one other robot on $\SC$ from $e$, and since there are at least $m+5$ robots on $\SC$ (included $e$), then $e$ can always reconstruct $\SC$.
		By \Cref{lemma:safe_unsafe_arcs}, $e$ can see \colr{sliceR} and so it can recompute the supporting line of $chord$, and so $\diam$.
		Thus, $e$ directly travels perpendicularly towards $\diam$, without waiting. 
		Since all the trajectories are perpendicular to the same line, the migration of the \colr{east} robots is accomplished without colliding.
		\item 
		Let $b$ be a \colr{beacon} robot on $\SC_e$ which sees all the \colr{east\_diameter} robots on $\diam$.
		Thanks to the presence of the \colr{west} robots, $e$ can recompute $\SC$, and so it can establish whether a \colr{beacon} is still moving or if the configuration is static.
		For the sake of simplicity, we make our beacons reach $\barc_w$ following a sequential scheme: thus, firstly $e_3$ reaches a deterministic point on $\wideparen{uv}$, while $e_1$ is the last beacon to move.
		As soon as $b$ sees a static configuration (within its block) and understands it is its turn, it can easily compute the safe arc $\wideparen{uv}$ and move to a deterministic position on $\wideparen{uv}$, thus avoiding collisions with the other robots.
Since the number of beacons for a block is constant, their migration takes $O(1)$ time.

		\item
			Since at Stage 2 no robot moves within the west half-circle given by $\SC_w$, an \colr{east\_diameter} robot $e$ on $\diam$ can always see all the \colr{west} robots and the \colr{angle} one, thus allowing it to determine $\SC$ and the angle $\delta$.
			Moreover, $e$ can easily recompute $\diam$ by the presence of the adjacent \colr{east\_diameter} or \colr{sliceMedian} robot.
			So, $e$ can split $\SC_w$ and $\SC_e$ in $\delta$-slices and determine the slice $\eta'_k$ and $\eta_k$ which contain its projection on $\SC$.
		\item
			Thanks to \Cref{lemma:slice_stage1_proof}, each \colr{west} robot $w_j$ (except for $w_1$ which became the \colr{angle} robot) has moved in its own original slice during Stage 1, thus maintaining the one-to-one rank association with the \colr{east} side.
			
		\item
			See \Cref{fig:slice_stage2_beacons}.  
			The beacons lay on the safe arc $\wideparen{uv}$, which was defined by the two lines: the first one which connects the initial position of $e_1$ and $w_1$ (the \colr{angle} robot), and the second one which connects the position of $e_m$ and $w_2$.
			Let $\diam'$ be the segment of $\diam$ cut by such lines.
			By construction, all the \colr{east\_diameter} lay on $\diam'$ ($e_m$ lay on its lower endpoint).
			Let $x$ be an internal point of the safe arc $\wideparen{uv}$ and let $y$ be a point of $\diam'$.
			Then, the line connecting $x$ with $y$ crosses the arc $\wideparen{w_1w_2}$ (endpoints excluded).
			Since no robot lays within the arc $\wideparen{w_1w_2}$, any robot on $\wideparen{uv}$ can see any robot on $\diam'$.

		\item
			The robot $e$ can always recompute its rank $j$ (according to the previous Items), the circle $\SC$, and the diameter $\diam$.
			Thanks to \Cref{item:slice_phase2_see_beacons}, $e$ can see the three beacons on $\wideparen{uv}$ and so it can reconstruct $\barc_e$.
		Since $e$ has complete visibility of $\SC_w$, it counts the $m$ \colr{west} robots and so it can define the positions of all the uniform positions on $\barc_e$.
		Thus, it computes its target position $U_j$.

		\item
		See \Cref{fig:slice_stage2_east_ui}.  
		Let $b\in\{e_1,e_2,e_3\}$ be a \colr{beacon} robot on $\barc_w$ that sees no \colr{east\_diameter} robots or moving robots within the block.
		Note that, in this situation, $b$ has complete visibility of $\barc_e$ except for at most one robot which can be hidden by the \colr{angle} robot.
		So, following a sequential scheme ($b=e_3,e_2,e_1$), $b$ recomputes $\barc_e$, and the angle of each uniform arc $\wideparen{U_jU_{j+1}}$ and so its target position.
		Thus, $b$ reaches its target uniform position and sets its color as \colr{regular}.
		In particular, if \colr{angle} lays on the trajectory between $b$ and its target position $U_j$, then $b$ reaches $U_j$ in two moves: firstly, it stops in a position on $\diam$ which is visible to the other \colr{beacon} on $\barc_w$, secondly, it reaches $U_j$.
		Note that at most one beacon has to split its trajectories into two steps (in fact, \colr{angle} can be collinear with just one $e_j$ and the related $U_j$).
		So, this task is accomplished in constant time.
	\end{enumerate}
\end{proof}

\begin{lemma}\label{lemma:slice_stage3_proof}
During Stage 3, the following statements hold: 
	\begin{enumerate}
		\item{}{\bf [Step 1]}
		Robots $w_1$ and $w_2$ can safely and correctly travel to $\SC_e$ fixing the angle $\delta$ and the cardinality $m$.

            \item{}{\bf [Step 1]}
		If a \colr{west} robot on $\SC_w$ sees a \colr{west\_diameter} or a \colr{sliceMedian} robot, then it is always able to determine whether such robots belong to its block or not. 

		\item{}{\bf [Step 1]}\label{item:slice_stage3_beacons}
			The \colr{west} robots $w_3,w_{m-1},w_m$ migrate to $\rho$ in constant time in order to play the role of the beacons (together with $w_0$, the median robot) for the other \colr{west} robots;
			
		\item{}{\bf [Step 2]}\label{item:step3_compute_C}
		A \colr{west\_diameter} robot $w$ on $\diam$ can correctly and instantly compute $\SC_e$ and $\SC_w$ and its original slice $\eta_k$. So, it correctly decodes its rank $j$. 
		
		\item{}{\bf [Step 2]}
		A \colr{west\_diameter} robot $w$ on $\diam$ can correctly spot $\barc_w$ and its target uniform position $U'_j$.

		\item{}{\bf [Step 2]}
		The \colr{angle} and \colr{anglem} robots reach $U'_1$ and $U'_2$ without colliding.
	\end{enumerate}
\end{lemma}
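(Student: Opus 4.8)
The plan is to establish the six statements in the order they arise in Stage~3, leaning throughout on the geometric invariants of the small circle $\SC$ proved in \Cref{lemma:safe_unsafe_arcs} and on the rank-encoding invariants of \Cref{lemma:slice_stage1_proof}, and invoking {\BDCP} (\Cref{alg:bdcp}) for the parallel migrations. For the two Step~1 claims of item~(1), I would first note that at the start of Stage~3 the half-circle $\SC_e$ is robot-free except for the median robot $w_0$ at the top endpoint of $\diam$, since all \colr{east} robots left $\SC_e$ for $\barc_e$ during Stage~2 (\Cref{lemma:slice_stage2_proof}). Hence the \colr{angle} robot $w_1$ can travel along the chord perpendicular to $\diam$ to its mirror image on $\SC_e$ without meeting any robot; because reflection across $\diam$ preserves central angles at the center $Q$ of $\SC$, the image still satisfies $\angle w_0 Q\,(\text{\colr{angle}}) = \delta$, so $\delta$ stays encoded. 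Sequenced by the color change (so that $w_2$ moves only after seeing \colr{angle} on $\SC_e$), $w_2$ then computes the unique point of $\SC_e$ subtending the angle $\delta/m$ with \colr{angle} at $Q$ and moves there to become \colr{anglem}; this target is distinct from \colr{angle} and lies on the otherwise empty $\SC_e$, so the move is collision-free. The pair $(\text{\colr{angle}},\text{\colr{anglem}})$ now encodes both $\delta$ and $m=\delta/(\angle\,\text{\colr{angle}}\,Q\,\text{\colr{anglem}})$.

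For items~(2) and~(3), I would argue ownership and beacon placement exactly as in \Cref{lemma:to_x_compute_destination} and \Cref{lemma:smallcircle_stage2_proof}. A \colr{west} robot sees its own right guard \colr{sliceR} (\Cref{lemma:safe_unsafe_arcs}) and thereby reconstructs the supporting line of $\chord$, hence $\diam$ and $\SC$ of its own block; its \colr{sliceMedian} is the robot of $\SC$ at the top endpoint of this $\diam$, and the \colr{west\_diameter} robots of its block are exactly those projecting onto this $\diam$ segment, so robots of neighbouring blocks (which project onto different diameters) are excluded, giving item~(2). For item~(3), the relocation of all \colr{west} robots onto $\diam$ is an instance of {\BDCP} with $k=2$: the four beacons $w_0, w_3, w_{m-1}, w_m$ first project perpendicularly onto $\diam$, each along a chord meeting $\diam$ once and straddling the destinations of the remaining robots, after which the input assumptions of \Cref{alg:bdcp} are verified as in \Cref{lemma:smallcircle_stage2_proof}, yielding the migration in $O(\log 2)=O(1)$ epochs without collision.

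The crux is item~(4), the robust rank decoding on $\diam$ under {\asynch}. Here I would show that a \colr{west\_diameter} robot $w$ always possesses three non-collinear, stationary reference points on $\SC$ — the median $w_0$ and the robots \colr{angle}, \colr{anglem} on $\SC_e$, none of which moves during Step~2 — so it reconstructs $\SC$, its center $Q$, and the values $\delta$ and $m$ exactly, independently of how many other \colr{west\_diameter} robots are in transit. Projecting its position on $\diam$ back to the point $w'$ on $\SC_w$ and reading $\vartheta=\angle w_0 Q w'$, $w$ recovers the slice index $k=\lfloor\vartheta/\delta\rfloor$; by \Cref{lemma:slice_stage1_proof}(\ref{item:slice_phase1_chosen_one},\ref{item:slice_phase1_no_cross}) robot $w$ stayed strictly inside its own slice $\eta_k$ during Stage~1 and that slice was chosen by no other robot, so inverting the encoding relation $\vartheta=k\delta+\tfrac{j\delta}{m+1}$ returns $j=(\vartheta-k\delta)(m+1)/\delta$ uniquely (the map $j\mapsto k\delta+j\delta/(m+1)$ is injective on $\{2,\dots,m\}$ and lands strictly inside $\eta_k$). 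The delicate point I must handle with care is occlusion: by \Cref{obs:collinearity_circle} a single moving robot hides at most one reference, and since three fixed references are available, enough always survive for $w$ to reconstruct $\SC$ — this is precisely what makes the encoding robust in full asynchrony even while every other robot is moving.

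Finally, for item~(5), once $w$ holds its rank $j$ it reconstructs the block arc $\barc$ — hence $\barc_w$ and the $j$-th uniform position $U'_j$ — from the right guard \colr{sliceR} together with the \colr{regular} robots already placed on $\barc_e$ in Stage~2 (\Cref{lemma:slice_stage2_proof}(\ref{item:slice_phase2_beacons_arc})), and heads to $U'_j$. For item~(6), after every \colr{west\_diameter} robot has reached $\barc_w$, the only robots remaining on $\SC_e$ are \colr{angle} and \colr{anglem}, whose targets $U'_1, U'_2$ are still free; moving them in a fixed order along the two chords to $U'_1, U'_2$ keeps their paths disjoint from each other and from all occupied targets, so the moves are collision-free. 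Assembling the six items gives the lemma, and since every migration is an $O(1)$-epoch {\BDCP} or a single synchronised step, Stage~3 runs in $O(1)$ epochs with $O(1)$ colors.
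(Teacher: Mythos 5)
Your overall structure tracks the paper's proof closely, and items (3)--(6) are argued essentially as the paper does (three stationary references to reconstruct $\SC$, inversion of the encoding formula, {\BDCP} for the migration to $\diam$, sequential final moves of \colr{angle} and \colr{anglem}). However, your argument for item (2) has a genuine gap. You claim that a \colr{west} robot on $\SC_w$ ``sees its own right guard \colr{sliceR} by \Cref{lemma:safe_unsafe_arcs} and thereby reconstructs the supporting line of $\chord$.'' First, \Cref{lemma:safe_unsafe_arcs} guarantees that robots on $\SC_w$ see the \emph{left} guard, not the right one, so the citation does not support the visibility claim. Second, even granting visibility of one guard, a single point does not determine the supporting line of $\chord$; at this stage of the algorithm no robots remain on $\chord$ (they all migrated to $\SC$ during \proc{Small Circle}), so the reconstruction step you rely on is unavailable. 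The paper closes item (2) by a different, counting argument: at most $m-6$ robots of the block can be in motion (the six stationary ones being $w$ itself, \colr{angle}, \colr{anglem}, and the beacons $w_3,w_{m-1},w_m$), each obstructing at most one other robot, so $w$ always sees at least six \colr{regular} robots on its own $\barc_e$, and these suffice to rule out the median diameter $\diam'$ of a neighbouring block. Your proof needs either this counting argument or some other concrete mechanism for block identification.

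A secondary, smaller omission is in item (1): you argue that $w_1$'s move is collision-free because $\SC_e$ is empty, but you do not say how $w_1$ \emph{detects under {\asynch} that Stage~2 has terminated} and that it is therefore safe and correct to move. The paper handles this explicitly: $w_1$ moves only once it sees every uniform position of $\barc_e$ covered by a \colr{regular} robot. Since the lemma asserts that $w_1$ and $w_2$ travel ``safely and correctly,'' this synchronization condition is part of what must be proved and should be added.
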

\begin{proof}
We prove each statement:
	\begin{enumerate}
	
		\item
		The \colr{angle} robot ($w_1$) can detect Stage 2 has ended since it can see all the uniform positions on $\barc_e$ are covered by \colr{regular} robots.
		So, it can safely migrate on its projection on $\SC_e$, maintaining the angle $\delta$.
		Once $w_1$ has switched half-circle, $w_2$ enjoys complete visibility of all the $m+1$ robots on $\SC$ and so it can correctly migrate on $\SC_w$ to fix the angle $\frac{\delta}{m}$.

            \item
		Let us suppose that a \colr{west} robot $w$ on $\SC_w$ sees some robots (\colr{west\_diameter} or \colr{sliceMedian}) of another median diameter $\diam'$, and it cannot see the robots on its $\diam$ (due to the movements of the other robots).
		This situation never creates ambiguous snapshots since $w$ can always understand $\diam'$ does not belong to its block.
		In fact, the moving robots in the block of $w$ can never hide all the \colr{regular} robots on $\barc_e$: the visibility of $w$ can be obstructed by at most $m-6$ robots which are moving inside $\SC$ (where the 6 still robots are: $w$ itself, the \colr{angle} and \colr{anglem} robots, and the three beacons $w_3,w_{m-1},w_m$). 
            So $w$ can see at least 6 \colr{regular} robots on its block.
		Such \colr{regular} robots are sufficient to prevent $w$ from selecting the wrong median diameter.

		\item 
            Before this step, $w_0$ (the median robot) is the only robot on $\diam$.
		Robots $w_3$ (as the ``left'' beacon) and $w_{m-1},w_m$ (as ``right'' beacons) can easily and safely elect themselves and reach their projections on $\diam$, by following a sequential scheme (i.e. in constant time). 
        Traveling along parallel trajectories, they do not collide.
		
		\item{}
		Let $w$ be a \colr{west\_diameter} robot on $\diam$.
		It can always recompute the supporting line of $\diam$ ($w$ can always see another \colr{west\_diameter} robot or the \colr{sliceMedian} robot).
		Moreover, it can always see the robots \colr{angle} and \colr{anglem}.
		Thus, $w$ can recompute $\SC$ (the center of $\SC$ is the intersection point between the bisector of \colr{angle}-\colr{anglem} and the supporting line of $\diam$) and its projection $w'$ on $\SC_w$.
		Then, $w$ splits $\SC_e$ and $\SC_w$ in $\delta$-slices ($\delta$ being fixed by the \colr{angle} robot) and computes the slice $\eta_k$ where $w'$ lays on (i.e. where $w$ originally laid before the migration on $\diam$).
	
		 After having recomputed $\eta_k$, $w$ obtains its original rank $j$ by inverting the formula $\angle w_0Ow'= k\delta + \frac{j\delta}{m+1}$, where $m$ is fixed by \colr{angle} and \colr{anglem}.
		 
		 \item{}
			See \Cref{fig:slice_west_part}. 
		 Thanks to \Cref{item:step3_compute_C}, we know that $w$ can compute $\SC_e$, $\SC_w$, and its rank $j$, even when other robots are moving to reach their uniform positions on $\barc_w$.
		 Moreover, $w$ can see at least three robots on $\barc_e$ (one of them is \colr{sliceR}) and so it recomputes $\barc_w$.
	The presence of the \colr{angle} and \colr{anglem} robots allows $w$ to recompute $m$ and so the positions of each uniform position. 
Thus, it can easily spot the position of its target $U'_j$.

	\item{}
		For the sake of simplicity, we make \colr{angle} and \colr{anglem} move sequentially.
		Let $a$ be the \colr{angle} or \colr{anglem} robot.
		When $a$ sees no \colr{west\_diameter} robots, $a$ easily recomputes the (possible) missing uniform positions $U'_1$ or $U'_2$ (by the presence of the \colr{sliceMedian} robot and the robots on $\barc_{\{e,w\}}$).
	Then, it heads to its $U'_j$ without colliding with any other robots (due to the absence of robots on $\diam$ and $\SC_w$, except for the \colr{sliceMedian} robot).
	\end{enumerate}
\end{proof}

\begin{lemma}\label{lemma:slice_bdcp}
During Stage 3 of \proc{Slice}, all the \colr{west} robots on $\SC_w$ can reach $\diam$ in $O(1)$ epochs by implementing {\BDCP}.
\end{lemma}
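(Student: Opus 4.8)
The plan is to establish the statement by a direct verification that the migration configured at the start of Stage~3 of \proc{Slice} satisfies every input assumption of {\BDCP} listed in \Cref{alg:bdcp} for the straight-line case $k=2$; the $O(\log 2)=O(1)$-epoch bound then follows immediately from the {\BDCP} guarantee. First I would fix the target curve to be the supporting segment of the median diameter $\diam$, a $2$-point (straight-line) curve, so that {\BDCP} requires $2k=4$ beacons partitioned into two ``upper'' and two ``lower'' beacons along $\diam$. These are supplied by the median robot $w_0$, which already occupies the upper endpoint of $\diam$, together with $w_3$ (the second upper beacon) and $w_{m-1}, w_m$ (the two lower beacons). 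Their self-election and constant-time relocation onto their projections on $\diam$ along parallel, hence disjoint, trajectories were already secured in \Cref{item:slice_stage3_beacons} of \Cref{lemma:slice_stage3_proof}, so this part can simply be cited.

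Next I would dispatch the assumptions on the waiting robots, namely the remaining \colr{west} robots $w_4,\dots,w_{m-2}$ still on $\SC_w$. Since all of them lie on the safe arc of $\SC_w$, they are external to $\diam$ and confined to a single half-plane bounded by its supporting line, which is exactly the half-plane condition required for $k=2$. Each waiting robot $w_j$ is routed to its mirror projection $w'_j$ on $\diam$; because every such path is perpendicular to $\diam$, the paths are mutually parallel, hence pairwise disjoint, each crossing $\diam$ in exactly one point and landing strictly between the upper and lower beacons. This single observation simultaneously discharges the disjoint-path, single-intersection, and ``between the beacons'' conditions, and guarantees collision-freeness even under full parallelism in \asynch.

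The crux, and the step I expect to be the main obstacle, is the residual-visibility assumption: a waiting robot must be able to reconstruct the supporting line of $\diam$ (equivalently, see at least $k=2$ beacons) even while arbitrarily many of its peers are already in transit. Here I would reuse the reconstruction mechanism proved in \Cref{item:step3_compute_C} of \Cref{lemma:slice_stage3_proof}: the \colr{angle} and \colr{anglem} robots, which remain fixed on $\SC_e$ during this step, let every \colr{west} robot recompute $\SC$, its center, and thus $\diam$, while the safe-arc placement of the four beacons prevents the bounded set of in-transit robots from occluding enough beacons (and \colr{angle}/\colr{anglem} references) to defeat the computation. Combining the four-beacon placement, the parallel perpendicular trajectories, and this guaranteed visibility, all the \colr{west} robots reach $\diam$ without collision in $O(1)$ epochs, completing the verification and hence the lemma.
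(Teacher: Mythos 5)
Your proposal is correct and follows essentially the same route as the paper: both proofs proceed by checking, one by one, the input assumptions of {\BDCP} for the $k=2$ case with the same four beacons ($w_0,w_3,w_{m-1},w_m$, citing the earlier stage-3 lemma for their placement), the same perpendicular parallel trajectories onto $\diam$, and the same conclusion that the $O(\log 2)=O(1)$ bound of {\BDCP} applies. The only cosmetic difference is that the paper discharges the path-reconstruction condition directly (seeing two beacons on $\diam$ determines the line, hence the perpendicular projection), whereas you route the same conclusion through the \colr{angle}/\colr{anglem}-based reconstruction of $\SC$; both are adequate.
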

\begin{proof}
    We verify that all the input assumptions listed in \Cref{alg:bdcp} hold to correctly implement {\BDCP} in $O(1)$ time.
    \begin{enumerate}
	\item $\diam$ is the target $2$-curve;
        \item 4 beacons ($w_3,w_{m-1},w_m$ together with the median robot $w_0$) are originally set to fix $\diam$ (see \Cref{lemma:slice_stage3_proof}, \Cref{item:slice_stage3_beacons});
        \item the \colr{west} robots are the waiting robots which have to reach $\diam$, so that
            \begin{enumerate}
                \item since they start from $\SC_w$ (excluded its endpoints), they are external to $\diam$, belonging to the same half-plane delimited by $\diam$;
                \item at the beginning of this step, no robot is moving within $\SC$, so each \colr{west} robot can see all the robots on $\diam$;
                \item they have to reach $\diam$
                    \begin{itemize}
                        \item to their projections on $\diam$. Being the beacons originally the ``leftmost'' and the ``rightmost'' \colr{west} robots on $\SC_w$, the \colr{west} robots will migrate between the two groups of beacons;
                        \item being parallel, the \colr{west} robots' trajectories do not intersect each other; 
                        \item such trajectories are perpendicular to $\diam$, so they intersect each other in just one point;
                        \item if a \colr{west} robot on $\SC_w$ can see its target line $\diam$, then it can reconstruct its trajectory (the projection on $\diam$);
                    \end{itemize}
            \item each \colr{west} robot sets its color as \colr{west\_diameter} once on $\diam$, playing the role of a new beacon fixing the chord $\diam$. 
            \end{enumerate}
    \end{enumerate}
    All these assumptions guarantee all \colr{west} robots reach $\diam$ in $O(\log{2}) = O(1)$ epochs without colliding.
\end{proof}

\subsection{\proc{Sequential Match}}\label{sec:correctness_procedure_seqmatch}
We show here all the correctness proofs for \proc{Sequential Match} (\Cref{section:seqmatch}).

\begin{lemma}
Given \conf{unisect} with $q<12$ robots in each uniform sector (boundaries excluded), a \colr{pre\_matched} robot can compute and reach its target uniform position $U_j$ even if other \colr{to\_matched} robots are moving and hiding \colr{right} or \colr{left} robots.
\end{lemma}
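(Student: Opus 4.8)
The plan is to show that the \colr{pre\_matched} robot $r$ can reconstruct everything it needs --- the supporting circle $Cir$, its centre $O$, the uniform spacing $2\pi/n$, the two guards of its own sector, and its target index $j$ --- entirely from the robots lying inside its own uniform sector $\Upsilon_i$, so that the \colr{to\_matched} robots travelling in the other sectors become irrelevant. First I would note that while $r$ still carries the color \colr{pre\_matched} it has not yet left $Cir$, and that by the one-elected-robot-per-sector discipline of \proc{Sequential Match} (each $\Upsilon_i$ holds at most one robot in the \colr{pre\_matched}/\colr{to\_matched} state at a time) no robot other than $r$ is in motion inside $\Upsilon_i$. Hence every robot of $\Upsilon_i$, including $r$ itself, lies on $Cir$; and since three distinct points of a circle are never collinear, no robot of $\Upsilon_i$ can occlude another one of $\Upsilon_i$ from $r$.

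Next I would rule out occlusion from the outside. A moving \colr{to\_matched} robot in a different sector $\Upsilon_{i'}$ travels along the chord joining its \colr{unmatched} origin to its target, both of which lie on the arc of $\Upsilon_{i'}$, so this chord is contained in that arc. Because $\Upsilon_i$ and $\Upsilon_{i'}$ are disjoint arcs of $Cir$, their endpoints do not interleave, and therefore a chord inside $\Upsilon_{i'}$ cannot cross a chord inside $\Upsilon_i$. In particular no external moving robot can sit on a segment $\overline{rs}$ with $s\in\Upsilon_i$, so it cannot hide any robot of $\Upsilon_i$ from $r$. Combining the two observations, $r$ enjoys complete visibility of $\Upsilon_i$ no matter how many \colr{to\_matched} robots are in transit elsewhere, and no matter which \colr{left} or \colr{right} robots of other sectors they happen to hide.

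With complete visibility of $\Upsilon_i$ the computation is routine. The robot $r$ identifies the unique \colr{left}/\colr{right} pair bounding the arc that contains it with no other \colr{left} or \colr{right} robot strictly in between: these are the guards sitting on $U_1$ and $U_q$. Taking $r$, \colr{left}, and \colr{right} as three points of $Cir$ it recovers $Cir$ and its centre $O$; counting the (all visible) robots of $\Upsilon_i$ yields $q$; and since $U_1,\dots,U_q$ are equally spaced uniform positions it obtains $2\pi/n = \angle U_1 O U_q /(q-1)$, hence the locations of $U_2,\dots,U_{q-1}$. It then reads off its target index $j$ as the smallest $j\ge 2$ whose position $U_j$ is not yet covered by a \colr{matched} robot, and moves to $U_j$. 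Since only $r$ moves inside $\Upsilon_i$ and chords belonging to different sectors do not cross, this straight move meets no other trajectory, so $r$ reaches $U_j$ without colliding.

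The delicate point, and the one I would spend the most care on, is the external non-occlusion step: arguing that a robot moving inside another sector can never lie on a line of sight from $r$ to a point of $\Upsilon_i$. This is exactly where the disjointness of the sector arcs (equivalently, the non-interleaving of chord endpoints) is used, and it is what makes the present situation strictly easier than the \proc{Split} analysis: in \Cref{lemma:to_x_compute_destination} the robot genuinely had to read data from neighbouring sectors and so appealed to \Cref{obs:collinearity_circle} to guarantee that a moving robot hides at most one guard per sector, leaving a visible \colr{left} or \colr{right} in each. Should one prefer to mirror that earlier argument rather than confine the computation to $\Upsilon_i$, the same bound would let $r$ reconstruct the whole partition and the cardinality $n$; but, as the visibility of $\Upsilon_i$ alone already determines $U_j$, this stronger reconstruction is not needed for the claim.
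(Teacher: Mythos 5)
Your visibility analysis is correct and in one respect sharper than the paper's: the observation that a \colr{to\_matched} robot travelling along a chord whose endpoints both lie on the arc of $\Upsilon_{i'}$ can never intersect a line of sight between two points of the disjoint arc of $\Upsilon_i$ cleanly yields complete visibility of the own sector, and the final collision argument is fine. The gap is in the reduction that follows, namely the claim that visibility of $\Upsilon_i$ alone determines $U_j$. \proc{Split} moves only two robots of each sector onto $U_1$ and $U_q$; the remaining robots (later \colr{unmatched}) stay wherever they were on the arc $\wideparen{B_iB_{i+1}}$, so they --- possibly including $r$ itself, which is merely the \colr{unmatched} robot closest to $U_2$ --- may lie \emph{outside} $\wideparen{U_1U_q}$, between a boundary and a guard. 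This breaks your construction twice. First, the identification rule fails there: the two guards flanking such an $r$ with nothing in between are, e.g., the \colr{right} robots of two adjacent sectors (adjacent sectors have opposite chirality in the biperiodic case, so same-type guards face each other across a boundary), not the \colr{left}/\colr{right} pair of $r$'s own sector. Second, even with the guards correctly identified, $2\pi/n=\angle U_1OU_q/(q-1)$ needs the full sector count $q$, and counting only the robots visible on $\wideparen{U_1U_q}$ undercounts it; recovering $q$ requires knowing $B_i$ and $B_{i+1}$, which in the biperiodic configuration with no robots on the symmetry axes are unmarked midpoints locatable only from the guards of the \emph{neighbouring} sectors.

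That is precisely the cross-sector reconstruction of \Cref{lemma:to_left_compute_destination,lemma:to_x_compute_destination} that you set aside as optional; the paper's proof of this lemma consists essentially of re-running it: spot the own \colr{left}/\colr{right} pair, rebuild the sector boundaries and the angle $2\pi/n$ from consecutive occupied uniform positions (\colr{left}, \colr{right} or \colr{regular}), then count the \colr{matched} robots of $\Upsilon_i$ to obtain the index $j$. Your non-occlusion argument should be kept but redirected: it guarantees that $r$ sees its own sector entirely, while the ``at most one hidden robot per mover'' bound of \Cref{obs:collinearity_circle} must still be invoked to read off, or deduce, the positions of the adjacent sectors' guards while their \colr{to\_matched} robots are in transit.
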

\begin{proof}
The proof is similar to the one for \Cref{lemma:to_left_compute_destination,lemma:to_x_compute_destination}.
In particular, a \colr{pre\_matched} robot can always spot the \colr{left} and \colr{right} robots of its uniform sector $\Upsilon_i$.
Using the same strategy as \Cref{lemma:to_left_compute_destination,lemma:to_x_compute_destination}, it recomputes the boundaries of the uniform sectors and the central angle $2\pi/n$ formed by two consecutive uniform positions (occupied by \colr{left}, \colr{right} or \colr{regular} robots).
Since a \colr{pre\_matched} robot has complete visibility of its uniform sector (in fact, no robot is moving within it), it counts the \colr{matched} robots in $\Upsilon_i$ and, so, it recomputes the current index $j$ of the uniform position to be matched.
Thus, it heads to the uniform position $U_j$ of $\Upsilon_i$ and sets its color as \colr{matched}.
\end{proof}

\end{document}